\pgfplotsset{width=9cm}
\newtheorem{lemma}{Lemma}
\newtheorem{theorem}{Theorem}
\newtheorem{proposition}{Proposition}
\newtheorem{question}{Question}
\renewcommand{\int}{\operatorname{int}}
\newcommand{\bra}[1]{\langle #1 |}
\newcommand{\ket}[1]{| #1 \rangle}
\newcommand{\ketbra}[2]{| #1 \rangle\langle #2 |}
\newcommand{\bb}[1]{\mathbb{#1}}
\newcommand\ip[2]{\ensuremath{\langle#1,#2\rangle}}
\newcommand\Tr{\mathop{\rm Tr}\nolimits}
\newcommand{\defeq}{\stackrel{\smash{\textnormal{\tiny def}}}{=}}
\renewcommand{\t}{{\scriptscriptstyle\mathsf{T}}}
\def\Ddots{\mathinner{\mkern1mu\raise\p@
\vbox{\kern7\p@\hbox{.}}\mkern2mu
\raise4\p@\hbox{.}\mkern2mu\raise7\p@\hbox{.}\mkern1mu}}
\def\rddots{{\scriptstyle\cdot}^{{\scriptstyle\cdot}^{{\scriptstyle\cdot}}}}
\begin{document}

\title{Is Absolute Separability Determined by the Partial Transpose?}




\author{
  Srinivasan Arunachalam,\footnote{Centrum Wiskunde \& Informatica (CWI) Amsterdam, The Netherlands} 
  \quad
  Nathaniel Johnston,\footnote{Institute for Quantum Computing and Department of Combinatorics \& Optimization, University of Waterloo}
  \quad and\quad
  Vincent Russo\footnote{Institute for Quantum Computing and David R. Cheriton School of Computer Science, University of Waterloo}
}

\date{January 22, 2015}

\maketitle

\begin{abstract}
	The absolute separability problem asks for a characterization of the quantum states $\rho \in M_m\otimes M_n$ with the property that $U\rho U^\dagger$ is separable for all unitary matrices $U$. We investigate whether or not it is the case that $\rho$ is absolutely separable if and only if $U\rho U^\dagger$ has positive partial transpose for all unitary matrices $U$. In particular, we develop an easy-to-use method for showing that an entanglement witness or positive map is unable to detect entanglement in any such state, and we apply our method to many well-known separability criteria, including the range criterion, the realignment criterion, the Choi map and its generalizations, and the Breuer--Hall map. We also show that these two properties coincide for the family of isotropic states, and several eigenvalue results for entanglement witnesses are proved along the way that are of independent interest.
\end{abstract}

\section{Introduction} \label{sec:introduction}

In quantum information theory, a quantum state $\rho \in M_m \otimes M_n$ (where $M_n$ denotes the space of $n \times n$ complex matrices) is called \emph{separable} \cite{W89} if there exist constants $p_i \geq 0$ and states $\rho^{(1)}_i \in M_m$ and $\rho^{(2)}_i \in M_n$ such that $\sum_i p_i = 1$ and
\begin{align*}
	\rho = \sum_i p_i \rho^{(1)}_i \otimes \rho^{(2)}_i.
\end{align*}

Finding methods for determining whether a given quantum state is separable or \emph{entangled} (i.e., not separable) is one of the most active areas of quantum information theory research \cite{GT09,HHH09}. Although this problem is believed to be difficult in general \cite{G03,G10}, many partial results are known. For example, the \emph{positive-partial-transpose (PPT) criterion} states that if $\rho$ is separable then $(id_m \otimes T)(\rho) \geq 0$, where $\geq 0$ indicates positive semidefiniteness, $id_m : M_m \rightarrow M_m$ is the identity map, and $T : M_n \rightarrow M_n$ is the transpose map \cite{P96}. However, the converse of the PPT criterion only holds when $mn \leq 6$ \cite{S63,W76}, so additional tests for separability are required in higher dimensions.

The most natural generalization of the PPT criterion says that a state $\rho \in M_m \otimes M_n$ is separable if and only if $(id_m \otimes \Phi)(\rho)$ is positive semidefinite for all positive maps $\Phi : M_n \rightarrow M_m$ \cite{HHH96}. Thus each fixed positive $\Phi : M_n \rightarrow M_m$ gives a necessary condition for separability.

The \emph{absolute separability} problem \cite{KZ00} (sometimes called the \emph{separability from spectrum} problem \cite{OpenProb15}) asks for a characterization of the states $\rho \in M_m \otimes M_n$ with the property that $U\rho U^\dagger$ is separable for all unitary matrices $U \in M_m \otimes M_n$, which is equivalent to asking which sets of real numbers $\{\lambda_1, \lambda_2, \ldots, \lambda_{mn}\}$ are such that every state $\rho \in M_m \otimes M_n$ with eigenvalues $\lambda_1 \geq \lambda_2 \geq \cdots \geq \lambda_{mn} \geq 0$ is separable. This question was first answered in the $m = n = 2$ case in \cite{VAD01}, where it was shown that $\rho \in M_2 \otimes M_2$ is absolutely separable if and only if its eigenvalues satisfy $\lambda_1 \leq \lambda_3 + 2\sqrt{\lambda_2 \lambda_4}$, however the problem remains open in general.

One motivation for this problem comes from the fact that it is sometimes easier to determine the eigenvalues of a quantum state than it is to determine the entire structure of that state \cite{EAO02,TOKKNN13}. Thus, the absolute separability problem asks for the strongest separability test that can be devised given this restricted information. In another direction, the exact largest size of a ball of separable states centered at the maximally-mixed state $\tfrac{1}{mn}(I\otimes I) \in M_m \otimes M_n$ is known \cite{GB02}, and it is not difficult to show that every state within this ball is absolutely separable. However, there are also absolutely separable states outside of this ball, and it would be nice to have a characterization of where they are. Alternatively, we can think of states that are \emph{not} absolutely separable as those that can be used to generate entanglement when the operations at our disposal are global unitary channels \cite{GCM14}.

One approach to characterizing the states that are absolutely separable would be to instead fix some necessary test for separability and determine the set of states $\rho \in M_m \otimes M_n$ with the property that $U\rho U^\dagger$ satisfies that separability test for all unitary matrices $U \in M_m \otimes M_n$. This approach was initiated in \cite{Hil07}, where the set of states $\rho \in M_m \otimes M_n$ that are \emph{absolutely PPT} (i.e., states such that $(id_m \otimes T)(U\rho U^\dagger)$ is positive semidefinite for all unitary $U$) were completely characterized. Similarly, the very recent paper \cite{JLNR14} investigated states $\rho$ with the property that $U\rho U^\dagger$ satisfies the reduction criterion for all unitary matrices $U$ (because the reduction criterion is weaker than the partial transpose criterion, we do not explicitly consider it in the present paper). We continue this work by considering the same problem for several other separability criteria.

It was shown in \cite{Joh13} that the set of absolutely PPT states coincides with the set of absolutely separable states when $m = 2$ and $n$ is arbitrary, despite the fact that the set of PPT states is strictly larger than the set of separable states when $m = 2$ and $n \geq 4$. The question was then asked whether or not the set of absolutely PPT states and absolutely separable states coincide when $m,n \geq 3$. In the present paper, we demonstrate that several standard methods of entanglement detection are unable to answer this question.

More specifically, we introduce a general method (Lemma~\ref{lem:eigs_imp_sfs}) that can be used to show that a given entanglement witness or positive map cannot detect entanglement in any absolutely PPT state. Using this method, we prove several results of the form ``if $\rho$ is absolutely PPT, then it is also absolutely <other separability criterion>''. For example, we show that every absolutely PPT $\rho \in M_m \otimes M_n$ is also ``absolutely realignable''---i.e., $U\rho U^\dagger$ always satisfies the \emph{realignment criterion} introduced in \cite{CW03,R03}, even though there are PPT states $\rho$ that violate the realignment criterion. This difference between the usual separability problem and the absolute separability problem is illustrated in Figure~\ref{fig:ppt_from_spec}. We also prove that the absolute separability and absolute PPT properties coincide when restricted to the well-known family of isotropic states.
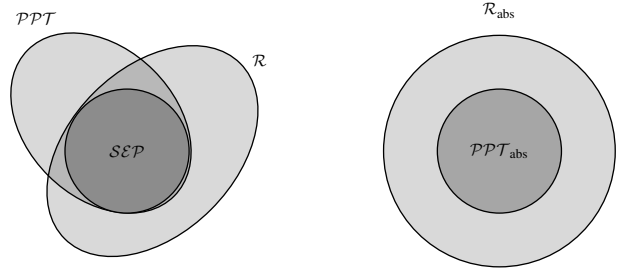
\begin{figure}[htb]
	\centering
	\begin{tikzpicture}[scale=1.1]
	  	\draw [draw=black, line width=0.5pt, fill=black!15, rotate=45] (-0.49,0.49) ellipse (1.5cm and 1cm);
	  	\draw [draw=black, line width=0.5pt, fill=black, fill opacity=0.15, rotate=-45] (-1.17,-0.69) ellipse (1.25cm and 0.9cm);
		\draw [draw=black, line width=0.5pt, fill=black!45] (-1,0) circle [radius=0.75cm];

		\draw [draw=black, line width=0.5pt, fill=black!15!white] (3.5,0) circle [radius=1.4cm];
		\draw [draw=black, line width=0.5pt, fill=black!35!white] (3.5,0) circle [radius=0.75cm];
		
		\node at (3.5,0) {\tiny{$\mathcal{PPT}_{\textup{abs}}$}};
		\node at (3.5,1.7) {\tiny{$\mathcal{R}_{\textup{abs}}$}};
		
		\node at (-1,0) {\tiny{$\mathcal{SEP}$}};
		\node at (0.6,1.1) {\tiny{$\mathcal{R}$}};
		\node at (-2.1,1.6) {\tiny{$\mathcal{PPT}$}};		

	\end{tikzpicture}
	\caption{The figure on the left represents the relationship between the set of separable states $\mathcal{SEP}$, the set of PPT states $\mathcal{PPT}$, and the set of states that satisfy the realignment criterion $\mathcal{R}$. The figure on the right represents the relationship between the set of absolutely PPT states $\mathcal{PPT}_{\textup{abs}}$ and the set $\mathcal{R}_{\textup{abs}}$ of states that are ``absolutely realignable'' (see Theorem~\ref{thm:realign_from_spec}).}\label{fig:ppt_from_spec}
\end{figure}

\section{Preliminaries} \label{sec:preliminaries}

The proofs of our results rely heavily on \emph{semidefinite programming}. Given Hermitian matrices $A \in M_n$ and $B \in M_m$ and a Hermiticity-preserving linear map $\Phi : M_n \rightarrow M_m$ (i.e., a map such that $\Phi(X^\dagger) = \Phi(X)^\dagger$ for all $X \in M_n$), the semidefinite program associated with the triple $(A,B,\Phi)$ is the following pair of optimization problems:
\begin{align*}
		\begin{matrix}
		\begin{tabular}{r l c r l}
		\multicolumn{2}{c}{\text{Primal problem}} & \quad \quad & \multicolumn{2}{c}{\text{Dual problem}} \\
		\text{maximize:} & ${\rm Tr}(AX)$ & \quad \quad & \text{minimize:} & ${\rm Tr}(BY)$ \\
		\text{subject to:} & $\Phi(X) \leq B$ & \quad \quad \quad & \text{subject to:} & $\Phi^\dagger(Y) \geq A$ \\
		\ & $X \geq 0$ & \quad \quad & \ & $Y \geq 0$,	
		\end{tabular}
		\end{matrix}
\end{align*}
where $\Phi^\dagger : M_m \rightarrow M_n$ is the \emph{dual map} of $\Phi$ defined by ${\rm Tr}(\Phi(X)Y) = {\rm Tr}(X\Phi^\dagger(Y))$ for all $X \in M_n$ and $Y \in M_m$. Semidefinite programs can be efficiently solved \cite{GLS93}, and furthermore \emph{weak duality} always holds, which tells us that ${\rm Tr}(AX) \leq {\rm Tr}(BY)$ for all feasible points $X \in M_n$ and $Y \in M_m$. In particular, this means that we can get upper bounds on the optimal value of the primal problem by simply finding a single feasible point for the dual problem (and similarly, feasible points of the primal problem give lower bounds on the optimal value of the dual problem). For a more thorough introduction to semidefinite programming, see \cite{VB94,Wat04Lec7}.

Given a linear map $\Phi : M_n \rightarrow M_m$, we recall that its \emph{Choi matrix} is the operator
\begin{align*}
	J(\Phi) \defeq n(id_n \otimes \Phi)(\ketbra{\psi^+}{\psi^+}) \in M_n \otimes M_m,
\end{align*}
where $\ket{\psi^+} = \frac{1}{\sqrt{n}}\sum_{i=1}^n\ket{i}\otimes\ket{i} \in \mathbb{C}^n \otimes \mathbb{C}^n$ is the standard maximally-entangled pure state. It is well-known that $\Phi$ is \emph{completely positive} (i.e., satisfies $(id_n \otimes \Phi)(X) \geq 0$ whenever $0 \leq X \in M_n \otimes M_n$) if and only if $J(\Phi)$ is positive semidefinite \cite{C75}.

	Our proofs will also be heavily reliant on the notion of \emph{entanglement witnesses}, which are Hermitian operators $W \in M_m \otimes M_n$ with the property that $\Tr(W \sigma) \geq 0$ for all separable $\sigma \in M_m \otimes M_n$, but $\Tr(W \rho) < 0$ for some (necessarily entangled) $\rho \in M_m \otimes M_n$. Here we say that $W$ detects the entanglement in $\rho$, and we note that every entangled $\rho$ is detected by some entanglement witness $W$. Finally, we will also make frequent use of the family of Schatten $p$-norms, defined for $p \in [1,\infty]$ by 
\begin{align*}
\| X \|_p := \left[ \Tr \left( (X^{\dagger} X)^{p/2} \right) \right]^{1/p},
\end{align*}
where we define $\|X\|_{\textup{tr}} := \|X\|_1$, $\|X\|_{F} := \|X\|_2$, and $\|X\| := \|X\|_{\infty}$ (and in these special cases, these norms are often called the \emph{trace norm}, \emph{Frobenius norm}, and \emph{operator norm}, respectively). 

The remainder of this article is organized as follows. In Section~\ref{sec:ppt}, we briefly review the characterization of states that are absolutely PPT that was originally derived in \cite{Hil07}. We then formally present the question in which we are interested in Section~\ref{sec:the-conjecture}, and briefly discuss the implications of an answer to this question. The next sections are dedicated to showing that several well-known separability criteria are unable to detect entanglement in any absolutely PPT state, and are thus unable to answer the our question. In Section~\ref{sec:special_states}, we show that for specific classes of states that absolute separability and absolute PPT coincide. Finally, in Section~\ref{sec:outlook}, we conclude and list a number of open problems and directions for future research.

\section{Absolute Positive Partial Transpose}\label{sec:ppt}

We now briefly recall some of the key points of the characterization of absolutely PPT states given in \cite{Hil07}. Indeed, the main result of that paper shows that, for each $m,n \in \mathbb{N}$, there exists a finite family of linear matrix inequalities (LMIs) with the property that $\rho \in M_m \otimes M_n$ is absolutely PPT if and only if its eigenvalues $\lambda_1 \geq \lambda_2 \geq \cdots \geq \lambda_{mn}$ satisfy each of the LMIs.

In the $m = 2$ case, the LMI that determines absolute PPT is
\begin{align*}
	L_1 := \begin{bmatrix}
		2\lambda_{2n} & \lambda_{2n-1} - \lambda_1 \\
		\lambda_{2n-1} - \lambda_1 & 2\lambda_{2n-2}
	\end{bmatrix} \geq 0,
\end{align*}
which is easily seen to be equivalent to the previously-discussed inequalities $\lambda_1 \leq \lambda_{3} + 2\sqrt{\lambda_{2} \lambda_{4}}$ when $n = 2$ and $\lambda_1 \leq \lambda_{5} + 2\sqrt{\lambda_{4} \lambda_{6}}$ when $n = 3$.

In the $m = 3$ case, there are two LMIs that determine absolute PPT:
\begin{align}\begin{split}\label{eq:ppt_from_spec_m3}
		L_1 := \begin{bmatrix}
			2\lambda_{3n} & \lambda_{3n-1}-\lambda_1 & \lambda_{3n-3}-\lambda_2 \\
			\lambda_{3n-1}-\lambda_1 & 2\lambda_{3n-2} & \lambda_{3n-4}-\lambda_3 \\
			\lambda_{3n-3}-\lambda_2 & \lambda_{3n-4}-\lambda_3 & 2\lambda_{3n-5} \\
		\end{bmatrix} \geq 0, \\
		L_2 := \begin{bmatrix}
			2\lambda_{3n} & \lambda_{3n-1}-\lambda_1 & \lambda_{3n-2}-\lambda_2 \\
			\lambda_{3n-1}-\lambda_1 & 2\lambda_{3n-3} & \lambda_{3n-4}-\lambda_3 \\
			\lambda_{3n-2}-\lambda_2 & \lambda_{3n-4}-\lambda_3 & 2\lambda_{3n-5} \\
		\end{bmatrix} \geq 0.
	\end{split}\end{align}
That is, $\rho \in M_3 \otimes M_n$ is absolutely PPT if and only if its eigenvalues satisfy both of the positive semidefiniteness conditions~\eqref{eq:ppt_from_spec_m3}.

In general, once we have fixed $m,n$ we use $L_1,L_2,L_3,\ldots$ to denote the matrices of eigenvalues whose positive semidefiniteness determine absolute PPT, and these matrices always look quite similar to the matrices~\eqref{eq:ppt_from_spec_m3} from the $m = 3$ case. For example, each $L_i$ is of size $\min\{m,n\}\times\min\{m,n\}$, the diagonal entry of each $L_i$ is $2$ times one of the $\lambda_j$'s, and each off-diagonal entry is the difference of two of the $\lambda_j$'s. Furthermore, the top-left $2\times 2$ sub-matrix of $L_1$ is always of the form
\begin{align}\label{eq:ppt_from_spec_submatrix}
	\begin{bmatrix}
		2\lambda_{mn} & \lambda_{mn-1}-\lambda_1 \\ \lambda_{mn-1}-\lambda_1 & 2\lambda_{mn-2}
	\end{bmatrix},
\end{align}
so positive semidefiniteness of~\eqref{eq:ppt_from_spec_submatrix} is a necessary (but not sufficient when $m,n \geq 3$) condition for $\rho$ to be absolutely PPT.

We note that the number of $L_i$'s that must be checked to be positive semidefinite grows exponentially in $\min\{m,n\}$ (for example, when $\min\{m,n\} = 7$ the number of $L_i$'s is $107,498$ \cite{oeisA237749}), and their exact construction is slightly complicated. However, it is not important for our purposes to be familiar with their exact construction---the properties of these matrices that we presented above are all we need.

We now present, without proof, a lemma that is well-known in matrix analysis (see, for example, \cite[Problem~III.6.14]{Bha97}).
\begin{lemma}\label{lem:unitary_min}
	Let $A,B \in M_n$ be Hermitian matrices with eigenvalues $\lambda_1\geq \cdots \geq \lambda_n$ and $\mu_1 \geq \cdots \geq \mu_n$, respectively. Then
	\begin{align*}
		\min\big\{ \Tr(AUBU^\dagger) : U \in M_n \text{ is unitary} \big\} = \sum_{j=1}^n \lambda_j \mu_{n-j+1}.
	\end{align*}
\end{lemma}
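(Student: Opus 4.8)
The plan is to prove the two-sided rearrangement-type identity by establishing the lower bound via a majorization/von Neumann trace inequality argument and then exhibiting an explicit unitary that attains it. Writing $A = \sum_j \lambda_j \ketbra{a_j}{a_j}$ and $B = \sum_j \mu_j \ketbra{b_j}{b_j}$ in their respective eigenbases (with eigenvalues in decreasing order), the quantity $\Tr(AUBU^\dagger)$ depends on $U$ only through how it relates the two eigenbases. The natural first step is to reduce to a problem about doubly stochastic matrices: setting $S_{jk} = |\bra{a_j}U\ket{b_k}|^2$, a direct expansion gives $\Tr(AUBU^\dagger) = \sum_{j,k} \lambda_j \mu_k S_{jk}$, and one checks that $S = (S_{jk})$ is doubly stochastic because $U$ is unitary (each row and column of $S$ sums to $1$).

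With this reduction in hand, the minimization over unitaries is bounded below by the minimization of the linear functional $\sum_{j,k}\lambda_j\mu_k S_{jk}$ over all doubly stochastic $S$. By Birkhoff's theorem the extreme points of the doubly stochastic polytope are the permutation matrices, so the minimum of this linear functional is attained at some permutation $\pi$, giving $\sum_j \lambda_j \mu_{\pi(j)}$. A standard rearrangement argument (an exchange/swap argument: if $\pi$ is not the order-reversing permutation, find indices where swapping two values of $\pi$ does not increase the sum) then shows that the minimizing permutation pairs the largest $\lambda_j$ with the smallest $\mu_k$, yielding the lower bound $\sum_{j=1}^n \lambda_j \mu_{n-j+1}$.

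For the matching upper bound, I would exhibit a single unitary achieving equality. Taking $U$ to be the unitary mapping each $\ket{b_{n-j+1}}$ to $\ket{a_j}$ (equivalently, choosing $U$ so that $U\ket{b_k} = \ket{a_{n-k+1}}$), the off-diagonal overlaps vanish and $S$ becomes the order-reversing permutation matrix, so $\Tr(AUBU^\dagger)$ equals exactly $\sum_{j=1}^n \lambda_j \mu_{n-j+1}$. Since this feasible $U$ attains the lower bound, the minimum equals that value, which completes the proof.

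The main obstacle is the rearrangement step: justifying cleanly that among all permutations the order-reversing one minimizes $\sum_j \lambda_j \mu_{\pi(j)}$. The exchange argument is elementary but must be stated carefully to handle ties in the eigenvalues, where several permutations may be simultaneously optimal; this does not affect the value of the minimum but should be acknowledged. Everything else---the doubly stochastic reduction and the explicit attaining unitary---is routine linear algebra, so the crux of the argument is really the combinatorial rearrangement inequality combined with Birkhoff's theorem to pass from doubly stochastic matrices back to permutations.
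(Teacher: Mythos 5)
Your proof is correct and complete. Note, however, that the paper itself offers no proof of this lemma at all: it is stated as a well-known fact of matrix analysis with a pointer to \cite[Problem~III.6.14]{Bha97}, so there is no in-paper argument to compare against. What you have written is essentially the standard textbook proof underlying that reference: expand $\Tr(AUBU^\dagger) = \sum_{j,k}\lambda_j\mu_k\,|\bra{a_j}U\ket{b_k}|^2$, observe that the matrix $S_{jk} = |\bra{a_j}U\ket{b_k}|^2$ is doubly stochastic (indeed unistochastic), pass to the doubly stochastic polytope where Birkhoff's theorem reduces the linear minimization to permutations, apply the rearrangement inequality to identify the order-reversing permutation as optimal, and close the gap by exhibiting the unitary with $U\ket{b_k} = \ket{a_{n-k+1}}$, which attains the bound. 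All steps are sound, including the implicit logic that the doubly stochastic relaxation only gives a lower bound a priori (unistochastic matrices are a proper subset of doubly stochastic ones for $n \geq 3$), which your explicit attaining unitary repairs. Your handling of ties in the exchange argument is also fine: the swap $(\lambda_j - \lambda_{j'})(\mu_{\pi(j')} - \mu_{\pi(j)}) \leq 0$ is non-strict, so ties merely produce multiple optimal permutations without changing the minimum value, exactly as you say.
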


We can make use of Lemma~\ref{lem:unitary_min} to see that semidefinite programming can be used to determine whether or not a given entanglement witness is capable of detecting entanglement in an absolutely PPT state. In particular, if we have an entanglement witness $W \in M_m \otimes M_n$ with eigenvalues $\mu_1 \geq \cdots \geq \mu_{mn}$ then $W$ can detect the entanglement in some absolutely PPT state if and only if the optimal value of the following semidefinite program is strictly less than zero:
\begin{align}\begin{split}\label{sdp:witness_detect_general}
	\text{minimize:} \ \ & \sum_{j=1}^{mn} \lambda_j \mu_{mn-j+1} \\
	\text{subject to:} \ \ & L_i \geq 0 \quad \forall \, i \\
		 & \lambda_j \geq \lambda_{j+1} \geq 0 \quad \forall \, 1 \leq j \leq mn-1 \\
		 & \sum_{j=1}^{mn} \lambda_j = 1.
\end{split}\end{align}

Indeed, the constraints in the SDP~\eqref{sdp:witness_detect_general} are simply enforcing the fact that $\lambda_1 \geq \cdots \geq \lambda_{mn} \geq 0$ are the eigenvalues of some absolutely PPT state. It then follows from Lemma~\ref{lem:unitary_min} that the SDP~\eqref{sdp:witness_detect_general} computes
\begin{align*}
	\min\big\{ \Tr(W\rho) : \rho \in M_m \otimes M_n \text{ is absolutely PPT} \big\}.
\end{align*}

\section{The Absolute PPT Question} \label{sec:the-conjecture}

We now present the question that is at the heart of this work. Recall that the answer to this question was already shown to be ``yes'' in the $m = 2$ case in \cite{Joh13}.
\begin{question}\label{ques:main}
	Is it true that a quantum state $\rho \in M_m \otimes M_n$ is absolutely separable if and only if it is absolutely PPT?
\end{question}

The rest of the paper is devoted to investigating Question~\ref{ques:main}. In particular, we show that many of the standard techniques from entanglement theory cannot be used to help answer this question. We first need the following proposition.
\begin{proposition}\label{prop:rank_of_ppt_spec}
	Suppose that there exists a state $\rho \in M_m \otimes M_n$ that is absolutely PPT but not absolutely separable. Then $\rho$ has full rank.
\end{proposition}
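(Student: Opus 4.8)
The plan is to prove the contrapositive: I will show that any absolutely PPT state $\rho$ that fails to have full rank is in fact absolutely separable. The crucial simplifying observation is that both the absolute PPT property and absolute separability depend only on the eigenvalues $\lambda_1 \geq \cdots \geq \lambda_{mn} \geq 0$ of $\rho$, so the whole argument can be carried out at the level of the spectrum.

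First I would exploit the fact (noted immediately after equation~\eqref{eq:ppt_from_spec_submatrix}) that positive semidefiniteness of the $2\times 2$ matrix
\[
	\begin{bmatrix} 2\lambda_{mn} & \lambda_{mn-1}-\lambda_1 \\ \lambda_{mn-1}-\lambda_1 & 2\lambda_{mn-2} \end{bmatrix}
\]
is a necessary condition for $\rho$ to be absolutely PPT. Assuming $\rho$ is not full rank means $\lambda_{mn}=0$, and in that case the determinant of this matrix equals $-(\lambda_{mn-1}-\lambda_1)^2$, which is nonnegative only if $\lambda_{mn-1}=\lambda_1$. Since the eigenvalues are ordered, this forces $\lambda_1=\lambda_2=\cdots=\lambda_{mn-1}$; combined with $\sum_j \lambda_j = 1$ and $\rho \neq 0$, the spectrum must be exactly $\lambda_1 = \cdots = \lambda_{mn-1} = \tfrac{1}{mn-1}$ and $\lambda_{mn}=0$, so $\rho$ in fact has rank precisely $mn-1$.

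Next I would show that every state with this particular spectrum is separable, by placing it on the boundary of the largest separable ball about the maximally-mixed state. Because $\tfrac{1}{mn}(I\otimes I)$ commutes with everything, the eigenvalues of $\rho - \tfrac{1}{mn}(I\otimes I)$ are $\tfrac{1}{mn(mn-1)}$ (with multiplicity $mn-1$) and $-\tfrac{1}{mn}$, and a short computation then gives $\|\rho - \tfrac{1}{mn}(I\otimes I)\|_F = 1/\sqrt{mn(mn-1)}$, which coincides exactly with the Gurvits--Barnum radius \cite{GB02}. Since the set of separable states is closed and contains an open ball of that radius around the maximally-mixed state, the associated closed ball is separable as well; and because the Frobenius distance to $\tfrac{1}{mn}(I\otimes I)$ depends only on the eigenvalues, every state with the above spectrum lies on this sphere and is therefore separable. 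Hence $\rho$ is absolutely separable, contradicting the hypothesis, so any state that is absolutely PPT but not absolutely separable must have full rank.

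The step I expect to require the most care is the boundary issue at the end: the Gurvits--Barnum theorem directly guarantees separability of an \emph{open} ball, so I must explicitly invoke closedness (and convexity) of the separable set to conclude that states lying exactly on the sphere of radius $1/\sqrt{mn(mn-1)}$ are separable too. By contrast, the determinant computation and the resulting collapse of the spectrum are entirely routine.
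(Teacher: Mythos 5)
Your proof is correct and follows essentially the same route as the paper: positive semidefiniteness of the $2\times 2$ submatrix~\eqref{eq:ppt_from_spec_submatrix} with $\lambda_{mn}=0$ forces $\lambda_1=\lambda_{mn-1}$, collapsing the spectrum to that of $\tfrac{1}{mn-1}\big(I-\ketbra{v}{v}\big)$, after which the Gurvits--Barnum theorem yields absolute separability. The only cosmetic difference is that you invoke Gurvits--Barnum in its ball-radius form ($\|\rho - \tfrac{1}{mn}I\|_F \leq 1/\sqrt{mn(mn-1)}$) and then patch the boundary via closedness of the separable set, whereas the paper cites \cite[Theorem~1]{GB02} in the operator form ``$I-X$ with $\|X\|_F\leq 1$ is separable,'' which already covers the boundary case $\|X\|_F=1$ and makes your closure step unnecessary (though harmless).
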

\begin{proof}
	Suppose that $\rho$ is absolutely PPT with eigenvalues $\lambda_1 \geq \cdots \geq \lambda_{mn} = 0$ (notice that we set the smallest eigenvalue equal to $0$, so that $\rho$ does not have full rank). Our goal is to show that $\rho$ is absolutely separable.
	
	We recall from Section~\ref{sec:ppt} that the matrix~\eqref{eq:ppt_from_spec_submatrix} must be positive semidefinite. However, by using the fact that $\lambda_{mn} = 0$, we then see that $\lambda_1 = \lambda_{mn-1}$, which implies that (up to a positive scalar multiple), $\rho = I - \ketbra{v}{v}$ for some pure state $\ket{v} \in \bb{C}^m \otimes \bb{C}^n$. We now use \cite[Theorem~1]{GB02}, which says that every operator of the form $I - X$ with $\|X\|_F \leq 1$ is separable (and even absolutely separable). Since $\|\ketbra{v}{v}\|_F = 1$, it follows that $\rho$ is absolutely separable, as desired.
\end{proof}

We note that the proof of Proposition~\ref{prop:rank_of_ppt_spec} shows that the only rank-deficient absolutely PPT states are (up to normalization) the orthogonal projections of rank $nm-1$, and these states are even in the Gurvits--Barnum ball of separability.

Proposition~\ref{prop:rank_of_ppt_spec} immediately implies that the range criterion \cite{H97} for detecting entanglement, which states that the range of a separable state is spanned by product pure states, cannot possibly detect entanglement in any absolutely PPT state. To see this, simply note that the range of a full-rank state is the entire Hilbert space, which is always spanned by product states (such as the standard basis). Furthermore, Proposition~\ref{prop:rank_of_ppt_spec} also shows that most of the ``usual'' ways of creating PPT entangled states cannot possibly create absolutely PPT entangled states, since many such methods result in states that are \emph{not} of full rank (e.g., chessboard states~\cite{BP00}, states constructed by unextendible product bases~\cite{BDMSST99}, the $1$-parameter family of states constructed by the Horodeckis \cite{HHH98}, and so on). Relatively few families of bound entangled states with full rank are known \cite{BGR05,CD14}, and we have not been able to find any that are absolutely PPT (see Section~\ref{sec:upb_states}, for example).

\section{The Absolute Separablity ``Collapse''}\label{sec:collapse}

In this section, we present the main results of the paper, which show that the set of absolutely PPT states is ``closer'' to the set of absolutely separable states than the set of PPT states is to the set of separable states in the following sense: there are (many) separability criteria that are capable of detecting entanglement in PPT states, but become weaker than the PPT criterion in the ``absolute'' regime (see Figure~\ref{fig:ppt_from_spec}, for example). We already saw this for the range criterion in the previous section. We now prove that the same result holds for the realignment criterion~\cite{CW03,R03}, the Choi map~\cite{C75} and its generalizations~\cite{CKL92}, and the Breuer--Hall map~\cite{Bre06,Hal06}. That is, each of these separability criteria are incapable of detecting any entanglement in absolutely PPT states.

Before dealing with any specific separability criteria, we first need the following very important lemma, which we will make repeated use of. This lemma lets us determine that an entanglement witness cannot detect entanglement in absolutely PPT states, based only on very limited information about the eigenvalues of the witness (specifically, its largest eigenvalue and the sum of its negative eigenvalues).
\begin{lemma}\label{lem:eigs_imp_sfs}
	Let $W \in M_m \otimes M_n$ be a Hermitian operator with ${\rm Tr}(W) = 1$. Let $\mu_{1}$ be the maximum eigenvalue of $W$ and define $\ell$ to be the sum of its negative eigenvalues:
	\begin{align*}
		\ell \defeq (1 - \|W\|_{\textup{tr}})/2.
	\end{align*}
	Furthermore, define a function $f : [-\tfrac{1}{2},0] \rightarrow [\tfrac{1}{2},1]$ by:
	\begin{align*}
		f(x) \defeq \frac{1}{4}\begin{cases} \sqrt{1-4x^2}-2x+1 & \text{ if } -\tfrac{1}{2} \leq x \leq -\frac{1}{2\sqrt{2}} \\
			1+\sqrt{2} & \text{ if } -\frac{1}{2\sqrt{2}} < x < \frac{1-\sqrt{2}}{2} \\
			\sqrt{1+4x-4x^2}-2x+3 & \text{ if } \frac{1-\sqrt{2}}{2} \leq x \leq 0.
		\end{cases}
	\end{align*}
	If $\ell \geq -\tfrac{1}{2}$ and $\mu_1 \leq f(\ell)$ then ${\rm Tr}(W\rho) \geq 0$ for all absolutely PPT states $\rho \in M_m \otimes M_n$.
\end{lemma}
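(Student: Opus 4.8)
The plan is to establish the equivalent SDP statement rather than work with $\rho$ directly. Since absolute PPT is a spectral property (if $\rho$ is absolutely PPT, so is $U\rho U^\dagger$), proving ${\rm Tr}(W\rho) \geq 0$ for all absolutely PPT $\rho$ is the same as showing that the optimal value of the program \eqref{sdp:witness_detect_general} is nonnegative. By Lemma~\ref{lem:unitary_min} this optimal value equals $\min \sum_{j=1}^{mn}\lambda_j\mu_{mn-j+1}$, the minimum taken over all admissible eigenvalue tuples $\lambda_1\geq\cdots\geq\lambda_{mn}\geq 0$ with $\sum_j\lambda_j=1$, where $\mu_1\geq\cdots\geq\mu_{mn}$ are the eigenvalues of $W$. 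The only data the lemma gives us about $W$ are $\mu_1$, the negative-eigenvalue sum $\ell$, and ${\rm Tr}(W)=1$; so I would first fix the spectrum of $\rho$ and bound $\sum_j\lambda_j\mu_{mn-j+1}$ from below by its worst case over all $W$-spectra consistent with this data.

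The key structural step is to pin down that worst-case $W$. In the optimally-aligned pairing of Lemma~\ref{lem:unitary_min}, the negative eigenvalues of $W$ multiply the largest $\lambda_j$'s and the positive eigenvalues multiply the smallest. A rearrangement argument then shows the minimum over admissible $W$ is attained when (i) all negativity is concentrated in a single eigenvalue equal to $\ell$, paired with $\lambda_1$, and (ii) the positive eigenvalues are pushed against the cap $\mu_1$ and paired with the smallest eigenvalues of $\rho$. Writing $p=(1-\ell)/\mu_1$ for the number of copies of $\mu_1$ the positive mass $1-\ell$ supports (with one fractional copy when $p\notin\mathbb{N}$), this yields
\[
	{\rm Tr}(W\rho) \;\geq\; \ell\,\lambda_1 \;+\; \mu_1\sum_{j=mn-p+1}^{mn}\lambda_j ,
\]
so it remains to prove $|\ell|\,\lambda_1 \leq \mu_1\sum_{j=mn-p+1}^{mn}\lambda_j$ for every absolutely PPT spectrum.

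To finish I would discard all of the absolute-PPT inequalities except the single necessary $2\times 2$ condition coming from~\eqref{eq:ppt_from_spec_submatrix}, namely $\lambda_1 \leq \lambda_{mn-1}+2\sqrt{\lambda_{mn}\lambda_{mn-2}}$. Substituting this into the displayed inequality (legitimate because the coefficient of $\lambda_1$ is negative) reduces everything to a scale-invariant optimization in the few small eigenvalues $\lambda_{mn},\lambda_{mn-1},\lambda_{mn-2},\ldots$ that the positive mass reaches. The worst configuration has a two-level shape---a block of large eigenvalues forced upward by the constraint and a block of equal small eigenvalues---and solving for the largest $\mu_1$ keeping the expression nonnegative produces exactly the threshold $f(\ell)$: the $\sqrt{\lambda_{mn}\lambda_{mn-2}}$ term is the source of the square roots in $f$, and the three branches correspond to how far the mass $1-\ell$ spills over the smallest eigenvalues of $\rho$ (one can check, for instance, that $p=3$ at $\ell=-\tfrac12$ and $p=2$ exactly at the breakpoint $\ell=\tfrac{1-\sqrt2}{2}$, where the set of active smallest eigenvalues changes). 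I expect this last optimization---carrying out the case analysis, handling the fractional copy of $\mu_1$, and verifying the bound is uniform in $m$ and $n$---to be the main obstacle; everything preceding it is structural, and the special case $\mu_1\leq f(\ell)$ with $\ell=-\tfrac12$ already follows cleanly from the $2\times2$ condition together with the arithmetic–geometric mean inequality, which is a good consistency check to carry in mind.
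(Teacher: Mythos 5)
Your structural reductions coincide exactly with the paper's: the passage to the semidefinite program~\eqref{sdp:witness_detect_general} via Lemma~\ref{lem:unitary_min}, the rearrangement argument pinning down the worst-case witness spectrum (one negative eigenvalue equal to $\ell$ paired with $\lambda_1$, positive mass $1-\ell$ packed greedily against the cap $\mu_1$ on the smallest $\lambda_j$'s --- precisely the paper's $\mu_2 = \min\{\mu_1, 1-\mu_1-\ell\}$, $\mu_3 = \max\{0, 1-2\mu_1-\ell\}$, with at most three positive eigenvalues active since $\mu_1 \geq \tfrac12$), and the relaxation to the single $2\times 2$ condition~\eqref{eq:ppt_from_spec_submatrix}. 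Where you diverge is the finish: the paper dualizes the reduced SDP~\eqref{sdp:eigs_imp_sfs_simp} and exhibits explicit dual feasible points with $t = 0$ (and the determinant condition $b^2 = ac$) in each branch, whereas you propose to solve the reduced primal optimization directly.

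That final step is the genuine gap: the entire content of the lemma is the quantitative threshold $f$, and your proposal never derives it --- you explicitly defer the case analysis as ``the main obstacle,'' and the claimed extremal ``two-level shape'' is asserted rather than proved. Moreover, your heuristic that the three branches of $f$ track how far the mass $1-\ell$ spills over the smallest eigenvalues mispredicts the middle branch. Throughout the open plateau $(-\tfrac{1}{2\sqrt2}, \tfrac{1-\sqrt2}{2})$ there are still \emph{three} active positive eigenvalues ($\mu_3 = 1 - 2\mu_1 - \ell > 0$ there); the plateau arises instead because the case-(a) formula $\tfrac14(\sqrt{1-4x^2}-2x+1)$ attains its maximum $\tfrac14(1+\sqrt2)$ at $x = -\tfrac{1}{2\sqrt2}$, after which the paper invokes monotonicity --- a witness with negativity sum $\ell' \geq -\tfrac{1}{2\sqrt2}$ remains feasible for the SDP at $\ell = -\tfrac{1}{2\sqrt2}$, so the same certificate applies --- to hold $f$ constant until the case-(c) formula takes over. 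Your spill-over mechanism alone would not generate this branch, and note also that $f$ jumps at $\tfrac{1-\sqrt2}{2}$: your check $p = 2$ there holds with the plateau value $\mu_1 = \tfrac14(1+\sqrt2)$, but with $f(\tfrac{1-\sqrt2}{2}) = \tfrac14(2+\sqrt2)$ one gets $p = \sqrt2$, the relevant event being $\mu_3 \to 0$. On the positive side, your AM--GM consistency check at $\ell = -\tfrac12$ is sound and is essentially the Lagrangian shadow of the paper's certificate: substituting $\lambda_1 \leq \lambda_{mn-1} + 2\sqrt{\lambda_{mn}\lambda_{mn-2}}$ and then $2\sqrt{ab} \leq ta + b/t$ with the right $t$ reproduces the paper's dual point with $b^2 = ac$, so your route could be completed --- but as written it is a plan for cases (a) and (c) plus an incorrect mechanism for case (b), not a proof.
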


Before proving the lemma, we note that we have numerically found that the function $f$ described by Lemma~\ref{lem:eigs_imp_sfs} is optimal at least in the $m = n = 3$ case. That is, given any choice of $\ell$ and $\mu_1$ such that $\mu_1 > f(\ell)$, we can numerically find a Hermitian operator $W \in M_3 \otimes M_3$ and an absolutely PPT state $\rho \in M_3 \otimes M_3$ such that ${\rm Tr}(W) = 1$, $W$ has a single negative eigenvalue equal to $\ell$, the maximum eigenvalue of $W$ is $\mu_1$, and ${\rm Tr}(W\rho) < 0$.

The function $f(x)$ is plotted in Figure~\ref{fig:intervals}, where we have highlighted some important special cases. For example, $f(-\tfrac{1}{2}) = \tfrac{1}{2}$, $f(-\tfrac{2}{5}) = \tfrac{3}{5}$, $f(-\tfrac{1}{5}) = \tfrac{9}{10}$, and $f((1-\sqrt{2})/2) = (2+\sqrt{2})/4$.
\begin{figure}[htb]
	\centering
	\begin{tikzpicture}
		\begin{axis}[
			xlabel=Sum of negative eigenvalues ($\ell$),
			ylabel=Upper bound on $\mu_1$,
			grid=major,
			]
			\draw[domain=-0.5:-0.35356,smooth,variable=\t,black,line width=0.5] plot ({1000*\t+500},{250*sqrt(1-4*\t*\t)-500*\t-250});
			\draw[domain=-0.35356:-0.2071,smooth,variable=\t,black,line width=0.5] plot ({1000*\t+500},{250*(1+sqrt(2))-500});
			\draw[domain=-0.2071:0,smooth,variable=\t,black,line width=0.5] plot ({1000*\t+500},{250*sqrt(1+4*\t-4*\t*\t)-500*\t+250});

			\addplot[blue!25!black,only marks,mark=*,mark size=1pt] coordinates {(-0.5,0.5) (-0.4,0.6) (-0.2072,0.6035) (-0.3535,0.6035) (-0.2071,0.855) (-0.2,0.9) (0,1)};
			\addplot[white,only marks,mark=*,mark size=0.6pt] coordinates {(-0.2072,0.6035)};
			\node at (axis cs:0,1) [anchor=south] {{\scriptsize $(vii)$}};
			\node at (axis cs:-0.2,0.9) [anchor=south east] {{\scriptsize $(vi)$}};
			\node at (axis cs:-0.2071,0.855) [anchor=north east] {{\scriptsize $(v)$}};
			\node at (axis cs:-0.2072,0.6035) [anchor=south west] {{\scriptsize $(iv)$}};
			\node at (axis cs:-0.3535,0.6035) [anchor=south west] {{\scriptsize $(iii)$}};
			\node at (axis cs:-0.4,0.6) [anchor=south east] {{\scriptsize $(ii)$}};
			\node at (axis cs:-0.5,0.5) [anchor=east] {{\scriptsize $(i)$}};
		\end{axis}
	\end{tikzpicture}
	\caption{A plot of the upper bound $f(\ell)$ given by Lemma~\ref{lem:eigs_imp_sfs}. For example, point $(i)$ is $(-\tfrac{1}{2},\tfrac{1}{2})$, which tells us that if the sum of the negative eigenvalues ($\ell$) of $W$ equals $-\tfrac{1}{2}$ then $W$ cannot detect entanglement in absolutely PPT states if the largest eigenvalue ($\mu_1$) of $W$ is $\leq\tfrac{1}{2}$. Point $(ii)$ corresponds to $\ell = -\tfrac{2}{5}$ and $\mu_1 \leq \tfrac{3}{5}$, $(iii)$ corresponds to $\ell = -\tfrac{1}{2\sqrt{2}}$ and $\mu_1 \leq \tfrac{1}{4}(1+\sqrt{2})$, $(v)$ corresponds to $\ell = \tfrac{1}{2}(1-\sqrt{2})$ and $\mu_1 \leq \tfrac{1}{4}(2+\sqrt{2})$, $(vi)$ corresponds to $\ell = -\tfrac{1}{5}$ and $\mu_1 \leq \tfrac{9}{10}$, and $(vii)$ corresponds to $\ell = 0$ and $\mu_1 \leq 1$ (in which case the result is trivial).}\label{fig:intervals}
\end{figure}
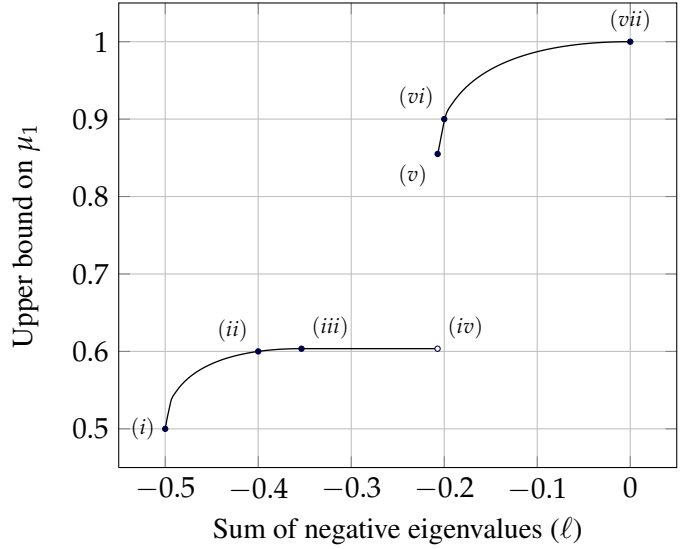
\begin{proof}[Proof of Lemma~\ref{lem:eigs_imp_sfs}]
	We prove the result by showing that the semidefinite program~\eqref{sdp:witness_detect_general} has optimal value $\geq 0$ whenever $\ell \geq -\tfrac{1}{2}$ and $\mu_1 \leq f(\ell)$. First, we replace the complicated set of LMI constraints $L_i \geq 0$ for all $i$ in this SDP with the single constraint that the $2 \times 2$ matrix~\eqref{eq:ppt_from_spec_submatrix} is positive semidefinite. Since this new SDP is a minimization problem subject to a weaker set of constraints, its optimal value is no larger than the optimal value of the SDP~\eqref{sdp:witness_detect_general}.
	
	Second, we will make some simplifying assumptions about the eigenvalues $\mu_1,\mu_2,\ldots,\mu_{mn}$. To this end, for now we fix some $\ell$ and $\lambda_1,\lambda_2,\ldots,\lambda_{mn} \geq 0$ satisfying the constraints of the SDP~\eqref{sdp:witness_detect_general}, and consider the following SDP, where we optimize over $\mu_1,\mu_2,\ldots,\mu_{mn}$:
	\begin{align}\begin{split}\label{sdp:nice_form_mu}
		\text{minimize:} \ \ & \sum_{j=1}^{mn} \lambda_j \mu_{mn-j+1} \\
		\text{subject to:} \ \ & (1 - \sum_{j=1}^{mn} |\mu_j|)/2 \geq \ell \\
				 & \mu_j \geq \mu_{j+1} \quad \forall \, 1 \leq j \leq mn-1 \\
				 & \sum_{j=1}^{mn} \mu_j = 1.
	\end{split}\end{align}
	
	It is straightforward to see that the optimal solution of the SDP~\eqref{sdp:nice_form_mu} occurs when $\mu_{mn} = \ell$ (i.e., rather than having multiple negative $\mu_j$'s, we just have one of them as negative as possible). Similarly, increasing $\mu_1$ (subject to the constraints of the SDP~\eqref{sdp:nice_form_mu}), or increasing $\mu_2$ while fixing $\mu_1$ will also decrease the value of the objective function, and similarly for increasing $\mu_{3}$ while fixing $\mu_1$ and $\mu_2$. For example, when $\ell = -2/5$ and $\mu_1 \leq f(-2/5) = 3/5$, it suffices to consider the case when $\mu_{mn} = -2/5, \mu_1 = 3/5, \mu_2 = 3/5, \mu_3 = 1/5$ ($\mu_1,\mu_2,\mu_3$ are determined by making $\mu_1$ as large as possible subject to $\mu_1 \leq f(\ell)$, then $\mu_2$ as large as possible while subject to $\mu_2 \leq \mu_1$, and so on until $\sum_i \mu_i = 1$). In general, we set $\mu_2 = \min\{\mu_1,1-\mu_1-\ell\},$ and $\mu_3 = {\rm max}\{0,1-2\mu_1-\ell\}$ (and $\mu_i = 0$ for all $4 \leq i \leq mn-1$).
	
	Since the optimal solution $\mu_1,\ldots,\mu_{mn}$ of the SDP~\eqref{sdp:nice_form_mu} satisfies the conditions described in the previous paragraph \emph{regardless} of $\lambda_1,\ldots,\lambda_{mn}$, we can assume without loss of generality in the SDP~\eqref{sdp:witness_detect_general} that $\mu_1,\ldots,\mu_{mn}$ satisfy those same conditions. That is, it suffices to show that the optimal value of the following SDP is $\geq 0$, where we recall that $\mu_1$ and $\ell$ are fixed constants in this SDP, and we optimize over $\lambda_1,\lambda_2,\ldots,\lambda_{mn}$:
	\begin{align}\begin{split}\label{sdp:eigs_imp_sfs_simp}
		\begin{matrix}
		\begin{tabular}{r l}
		\multicolumn{2}{c}{\text{Primal problem}} \\
		\text{minimize:} & $\mu_1\lambda_{mn} + {\rm min}\{\mu_1,1-\mu_1-\ell\}\lambda_{mn-1} + {\rm max}\{0,1-2\mu_1-\ell\}\lambda_{mn-2} + \ell\lambda_1$ \\
		\text{subject to:} & $\begin{bmatrix}
		2\lambda_{mn} & \lambda_{mn-1}-\lambda_1 \\ \lambda_{mn-1}-\lambda_1 & 2\lambda_{mn-2}
		\end{bmatrix} \geq 0$ \\
		\ & $\lambda_i \geq \lambda_{i+1} \geq 0 \quad \forall \, 1 \leq i \leq mn-1$ \\
		\ & $\displaystyle\sum_{i=1}^{mn} \lambda_i = 1$. \\				
		\end{tabular}
		\end{matrix}
	\end{split}\end{align}

The dual problem can be constructed using standard techniques of semidefinite programming as found in~\cite{Wat04Lec7}.

\begin{align}\begin{split}
\begin{matrix}
	\begin{tabular}{r l}
		\multicolumn{2}{c}{\text{Dual problem}} \\
		\text{maximize:} & $t$ \\
		\text{subject to:} & $t - 2b + y_1 = \ell$ \\
		\ & $t + 2b + y_{mn-1} - y_{mn-2} = {\rm min}\{\mu_1,1-\mu_1-\ell\}$ \\
		\ & $t + 2c + y_{mn-2} - y_{mn-3} = {\rm max}\{0,1-2\mu_1-\ell\}$ \\
		\ & $t + 2a - y_{mn-1} \leq \mu_1$ \\
		\ & $t + y_{i+1} - y_{i} = 0 \quad \forall \, 1 \leq i \leq mn-4$ \\
		\ & $y_{i} \geq 0 \quad \forall \, 1 \leq i \leq mn-1$ \\
		\ & $\begin{bmatrix}
		a & b \\ b & c
		\end{bmatrix} \geq 0$.				
		\end{tabular}
		\end{matrix}
	\end{split}\end{align}
	
	It thus suffices to find a feasible point of the above dual problem with $t = 0$. We note that code that implements the above SDP in MATLAB via the CVX package~\cite{cvx} can be downloaded from~\cite{AJR14code}. We now split into three cases, depending on which branch of $f$ we are working with.
	
	{\bf Case a): $-\tfrac{1}{2} \leq \ell \leq -\frac{1}{2\sqrt{2}}$}. In this case, we have $\mu_1 = (\sqrt{1-4\ell^2}-2\ell+1)/4$, ${\rm min}\{\mu_1,1-\mu_1-\ell\} = \mu_1$, and ${\rm max}\{0,1-2\mu_1-\ell\} = 1-2\mu_1-\ell$. It is then straightforward to verify the following defines a feasible point of the dual problem of the semidefinite program~\eqref{sdp:eigs_imp_sfs_simp}:
	\begin{align*}
		t & = 0, \quad \quad a = \frac{\ell+2\mu_1}{2}, \quad \quad b = -\frac{\ell}{2}, \quad \quad c = \frac{1-2\mu_1-\ell}{2} \\
		y_{i} & = 0 \quad \forall \, 1 \leq i \leq mn-2, \quad \quad y_{mn-1} = \mu_1+\ell.
	\end{align*}
	The only condition in the dual problem that is not obviously satisfied is the fact that $\begin{bmatrix}a & b \\ b & c \end{bmatrix} \geq 0$. However, this follows from the fact that $b^2 = ac$ for this particular choice of $a,b,c,$ and $\mu_1$. Since this dual feasible point has $t = 0$, it follows that the semidefinite program~\eqref{sdp:eigs_imp_sfs_simp} has optimal value $\geq 0$, as desired.
	
	{\bf Case b): $-\frac{1}{2\sqrt{2}} < \ell < \frac{1-\sqrt{2}}{2}$}. This case follows immediately from choosing $\ell = -\frac{1}{2\sqrt{2}}$ in case~a) and noting that we can choose the function $f$ described by the lemma to be non-decreasing.
	
	{\bf Case c): $\frac{1-\sqrt{2}}{2} \leq \ell \leq 0$}. In this case, we have $\mu_1 = (\sqrt{1+4\ell-4\ell^2}-2\ell+3)/4$, ${\rm min}\{\mu_1,1-\mu_1-\ell\} = 1-\mu_1-\ell$, and ${\rm max}\{0,1-2\mu_1-\ell\} = 0$. It is then straightforward to verify the following defines a feasible point of the dual problem of the semidefinite program~\eqref{sdp:eigs_imp_sfs_simp}:
	\begin{align*}
		t & = 0, \quad \quad a = \frac{\mu_1}{2}, \quad \quad b = \frac{1-\mu_1-\ell}{2}, \quad \quad c = \frac{1-\mu_1}{2} \\
		y_{i} & = 1-\mu_1 \quad \forall \, 1 \leq i \leq mn-3, \quad \quad y_{mn-2} = y_{mn-1} = 0.
	\end{align*}
	Similar to case~a), we have $b^2 = ac$ for this particular choice of $a,b,c,$ and $\mu_1$, so the above point indeed satisfies all of the constraints of the dual problem. Since $t = 0$, it follows that the semidefinite program~\eqref{sdp:eigs_imp_sfs_simp} has optimal value $\geq 0$, which completes the proof.
\end{proof}

\subsection{The Realignment Criterion}\label{sec:realignment}

The \emph{realignment criterion} \cite{CW03,R03} for entanglement states that all separable states $\rho \in M_m \otimes M_n$ satisfy $\|R(\rho)\|_{\textup{tr}} \leq 1$, where $R : M_m \otimes M_n \rightarrow M_{m,n} \otimes M_{m,n}$ is the linear ``realignment'' map defined on elementary tensors by $R(\ketbra{i}{j} \otimes \ketbra{k}{\ell}) = \ketbra{i}{k} \otimes \ketbra{j}{\ell}$. Thus if $\|R(\rho)\|_{\textup{tr}} > 1$ then we know that $\rho$ is entangled, and we say that the realignment criterion detected the entanglement in $\rho$. This criterion is particularly useful, as it is one of the simplest tests that can detect entanglement in PPT states. The main result of this section shows that the realignment criterion \emph{cannot} detect entanglement in any absolutely PPT states.

To phrase our result in another way, we can consider the sets of absolutely PPT states and ``absolutely realignable'' states:
\begin{align*}
	\mathcal{PPT}_{\textup{abs}} & \defeq \big\{ \rho : (id_m \otimes T)(U\rho U^\dagger) \geq 0 \quad \forall \, \text{unitary } U \big\}, \\
	\mathcal{R}_{\textup{abs}} & \defeq \big\{ \rho : \|R(U\rho U^\dagger)\|_{\textup{tr}} \leq 1 \quad \forall \, \text{unitary } U \big\}.
\end{align*}
Our result states that $\mathcal{PPT}_{\textup{abs}} \subseteq \mathcal{R}_{\textup{abs}}$, so the realignment criterion becomes a weaker entanglement test than the PPT criterion in the ``absolute'' setting:
\begin{theorem}\label{thm:realign_from_spec}
	If $\rho \in M_m \otimes M_n$ is absolutely PPT then $\|R(\rho)\|_{\textup{tr}} \leq 1$. That is, the realignment criterion cannot detect entanglement in any absolutely PPT state.
\end{theorem}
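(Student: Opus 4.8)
The plan is to recast every possible violation of the realignment criterion as the negativity of $\Tr(W\rho)$ for an explicit Hermitian operator $W$, and then to feed $W$ into Lemma~\ref{lem:eigs_imp_sfs}. Because ``absolutely PPT'' depends only on the spectrum of $\rho$, and because $\|R(U\rho U^\dagger)\|_{\textup{tr}}\le 1$ for all unitary $U$ is exactly the assertion $\rho\in\mathcal{R}_{\textup{abs}}$, it is enough to prove the single inequality $\|R(\rho)\|_{\textup{tr}}\le 1$ for every absolutely PPT $\rho$.

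First I would dualize the trace norm using the operator-Schmidt decomposition $\rho=\sum_k s_k\,G_k\otimes H_k$, where $\{G_k\}\subset M_m$ and $\{H_k\}\subset M_n$ are orthonormal (in the Frobenius inner product) families of Hermitian matrices and $s_k\ge 0$; it is standard that $\|R(\rho)\|_{\textup{tr}}=\sum_k s_k$. Setting $\tilde M\defeq\sum_k G_k\otimes H_k$ and $W\defeq I\otimes I-\tilde M$, orthonormality gives $\Tr(W\rho)=1-\sum_k s_k=1-\|R(\rho)\|_{\textup{tr}}$. Since $\Tr(W)=mn-\Tr(\tilde M)>0$, the normalized operator $W/\Tr(W)$ is a legitimate input to Lemma~\ref{lem:eigs_imp_sfs}, and $\Tr(W\rho)\ge 0$ is equivalent to the desired $\|R(\rho)\|_{\textup{tr}}\le 1$. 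Thus the whole theorem reduces to verifying, for \emph{every} choice of orthonormal Hermitian families, that the normalized $W$ satisfies the two hypotheses $\ell\ge-\tfrac12$ and $\mu_1\le f(\ell)$ of the lemma.

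The key is that, although realignment scrambles the eigenvalues of $\tilde M$, it leaves a few coarse invariants within reach. Orthonormality yields $\|\tilde M\|_F^2=r$, where $r$ is the number of Schmidt terms, while Bessel's inequality applied to the trace functional (together with Cauchy--Schwarz) gives $|\Tr(\tilde M)|=\big|\sum_k\Tr(G_k)\Tr(H_k)\big|\le\sqrt{mn}$. These control the two quantities the lemma needs: writing $\mu_1$ and $\ell$ for the normalized witness, one has $\mu_1\le\|W\|_F/\Tr(W)$ and, via $\|W\|_{\textup{tr}}\le\sqrt{mn}\,\|W\|_F$, a lower bound on $\ell=(1-\|W\|_{\textup{tr}}/\Tr(W))/2$, where $\|W\|_F^2=mn-2\Tr(\tilde M)+r$ and $\Tr(W)=mn-\Tr(\tilde M)$ refer to the unnormalized $W$. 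A short estimate (using $r\le mn$) shows $\|W\|_{\textup{tr}}/\Tr(W)\le 2$, i.e.\ $\ell\ge-\tfrac12$, and that $\mu_1$ is of order $1/\sqrt{mn}$, which sits comfortably below $f(\ell)\ge\tfrac12$ once $mn$ is moderately large.

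The hard part will be making the bound $\mu_1\le f(\ell)$ hold uniformly, and in particular in the low-dimensional regime $m=n=3$ where the crude Frobenius estimate for $\mu_1$ is too weak to beat $f(\ell)$ outright. There the joint behaviour of $\mu_1$ and $\ell$ as functions of $r$ and $\Tr(\tilde M)$ must be analyzed together rather than separately: large Schmidt rank $r$ tends to push $W$ toward a swap-like, nearly positive operator (as in $\sum_k G_k\otimes G_k=\mathbb{F}$, for which $I\otimes I-\mathbb{F}\ge 0$ has no negative part at all), so the configurations that make $\ell$ most negative are precisely those with small $r$ and hence small $\mu_1$. I expect the cleanest way to close this gap is to set up the extremal problem over the admissible spectra of $\tilde M$ as a small convex/semidefinite program and solve it by exhibiting a dual-feasible point with objective $0$, exactly as in the proof of Lemma~\ref{lem:eigs_imp_sfs}; the $\sqrt2$-laden breakpoints of $f$ strongly suggest that the worst realignment witness lands on the flat/corner portion of the curve near $\ell=-\tfrac{1}{2\sqrt2}$.
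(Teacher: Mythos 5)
Your setup is exactly the paper's: the witness $W = I - \sum_k G_k \otimes H_k$ built from the operator-Schmidt decomposition, the Cauchy--Schwarz/Bessel bound $|\Tr(\tilde M)| \leq \sqrt{mn}$ (this is precisely the paper's Lemma~\ref{lem:wit_trace_bound}), and the Frobenius-norm computation $\|W\|_F^2 = mn - 2\Tr(\tilde M) + r$ with $r \leq mn$ feeding into bounds on $\mu_1$ and $\ell$ for Lemma~\ref{lem:eigs_imp_sfs} (the paper's Lemma~\ref{lem:realign_eig}). The gap is at the finish: you assert that the crude Frobenius estimate is ``too weak to beat $f(\ell)$ outright'' when $m=n=3$ and that one must therefore analyze $\mu_1$ and $\ell$ jointly via a further extremal semidefinite program. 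This is incorrect, and the extra machinery you sketch is unnecessary. Carrying your own estimates through gives, for the normalized witness,
\begin{align*}
\mu_1 \leq \frac{\|W\|_F}{\Tr(W)} \leq \sqrt{\frac{2}{mn - \sqrt{mn}}}, \qquad \ell \geq \frac{1}{2}\left(1 - \sqrt{\frac{2\sqrt{mn}}{\sqrt{mn}-1}}\right),
\end{align*}
using $mn - 2t + r \leq 2(mn - t)$ and $2/(mn-t)$ increasing in $t \leq \sqrt{mn}$. At $mn = 9$ these read $\mu_1 \leq 1/\sqrt{3} \approx 0.5774$ and $\ell \geq \tfrac{1}{2}(1-\sqrt{3}) \approx -0.3660$, and since $f(\tfrac{1}{2}(1-\sqrt{3})) \approx 0.6033 > 0.5774$ (with both bounds only improving as $mn$ grows and $f$ non-decreasing), Lemma~\ref{lem:eigs_imp_sfs} closes the proof immediately, with a slim but genuine margin. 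Your heuristic ``$\mu_1$ of order $1/\sqrt{mn}$'' undersold the constant; the explicit bound is what makes $m=n=3$ work.

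There is, however, a low-dimensional regime where the estimates genuinely fail, and you never address it: for $\min\{m,n\} \leq 2$ (e.g.\ $mn = 6$, where the bound gives $\mu_1 \lesssim 0.75$ against $f(\ell) \approx 0.60$) the witness argument does not go through. The correct fix is not a finer extremal analysis but the observation the paper makes: absolute separability and absolute PPT are already known to coincide when $\min\{m,n\} \leq 2$, so every absolutely PPT state there is separable and satisfies the realignment criterion trivially, leaving only $m,n \geq 3$ for the eigenvalue argument. Without that case split your proof as stated is incomplete; with it, and with the numerical verification above, your approach becomes the paper's proof verbatim.
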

It will be helpful to note that for any state $\rho \in M_m \otimes M_n$, one can write $\rho$ in terms of its \emph{operator-Schmidt decomposition}, which is defined as
\begin{align*}
	\rho = \sum_i \lambda_i A_i \otimes B_i,
\end{align*} 
where $\lambda_i \geq 0$ for all $i$ and the sets of operators $\{A_i\}$ and $\{B_i\}$ form orthonormal bases of $M_m$ and $M_n$ in the Hilbert--Schmidt inner product. There is a well-known correspondence between the realignment criterion and the operator-Schmidt decomposition of any state $\rho$. Specifically, it is the case that $\|R(\rho)\|_{\textup{tr}} = \sum_i \lambda_i$ (a proof of this fact can be found in \cite{Joh12}), so any state with $\sum_i \lambda_i > 1$ is entangled. In this case, it is straightforward to see that the operator
\begin{align}\label{eq:realign_witness}
	W := I - \sum_i A_i \otimes B_i
\end{align}
is an entanglement witness that detects the entanglement in $\rho$, since orthonormality of $\{A_i\}$ and $\{B_i\}$ implies that ${\rm Tr}(W\rho) = 1 - \sum_i \lambda_i < 0$. Thus, to show that $\|R(\rho)\|_{\textup{tr}} \leq 1$ for all absolutely PPT states, it suffices to show that operators $W$ of the form~\eqref{eq:realign_witness} have $\Tr(W\rho) \geq 0$ whenever $\rho$ is absolutely PPT. In order to prove this result, we first need the following two auxiliary lemmas.
\begin{lemma}\label{lem:wit_trace_bound}
	Let $\{A_i\}_{i=1}^{m^2} \subset M_m$ and $\{B_i\}_{i=1}^{n^2} \subset M_n$ be orthonormal sets in the Hilbert--Schmidt inner product. Then $\big|{\rm Tr}\big(\sum_{i=1}^{k} A_i \otimes B_i\big)\big| \leq \sqrt{mn}$ for all $k \leq \min\{m^2,n^2\}$.
\end{lemma}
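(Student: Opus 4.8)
The plan is to reduce the quantity $\Tr\big(\sum_{i=1}^k A_i \otimes B_i\big)$ to a bilinear pairing between the Hilbert--Schmidt coefficients of the two identity matrices, and then control it with Cauchy--Schwarz followed by Bessel's inequality. First I would use the fact that the trace factorizes over tensor products, so that $\Tr(A_i \otimes B_i) = \Tr(A_i)\Tr(B_i)$ and hence
\begin{align*}
	\Tr\Big(\sum_{i=1}^{k} A_i \otimes B_i\Big) = \sum_{i=1}^{k} \Tr(A_i)\,\Tr(B_i).
\end{align*}
The whole point is to treat this not as a sum of unstructured products, but as an inner product of two coefficient vectors.

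The key observation is that each trace is exactly a Hilbert--Schmidt coefficient of an identity matrix: writing $\langle X,Y\rangle = \Tr(X^\dagger Y)$, we have $\Tr(A_i) = \langle I_m, A_i\rangle$ and $\Tr(B_i) = \langle I_n, B_i\rangle$. Thus the sum above is $\sum_{i=1}^k \langle I_m, A_i\rangle \langle I_n, B_i\rangle$, and applying the Cauchy--Schwarz inequality over the index $i$ gives
\begin{align*}
	\Big|\sum_{i=1}^{k} \langle I_m, A_i\rangle \langle I_n, B_i\rangle\Big| \leq \Big(\sum_{i=1}^{k} |\langle I_m, A_i\rangle|^2\Big)^{1/2} \Big(\sum_{i=1}^{k} |\langle I_n, B_i\rangle|^2\Big)^{1/2}.
\end{align*}

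Finally I would bound each factor using orthonormality. Since $\{A_i\}$ is an orthonormal set, Bessel's inequality gives $\sum_{i=1}^k |\langle I_m, A_i\rangle|^2 \leq \|I_m\|_F^2 = \Tr(I_m) = m$, and likewise $\sum_{i=1}^k |\langle I_n, B_i\rangle|^2 \leq \|I_n\|_F^2 = n$; combining these yields the claimed bound $\sqrt{m}\cdot\sqrt{n} = \sqrt{mn}$. This argument is short and essentially mechanical, so there is no serious obstacle; the only points requiring care are making sure the orthonormality is invoked as \emph{Bessel's} inequality rather than Parseval's (since the sum runs only up to $k \leq \min\{m^2,n^2\}$, and since $I_m$ need not lie in the span of $\{A_i\}_{i=1}^k$), and tracking the complex conjugates correctly when identifying $\Tr(A_i)$ with $\langle I_m, A_i\rangle$.
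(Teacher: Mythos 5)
Your proposal is correct and matches the paper's proof in essence: the paper likewise writes the trace as $\sum_i \Tr(A_i)\Tr(B_i)$, applies Cauchy--Schwarz to the coefficient vectors $u_i = \Tr(A_i)$, $v_i = \Tr(B_i)$, and bounds $\|u\| \leq \sqrt{m}$ by expanding $I$ in the full orthonormal basis $\{A_i\}_{i=1}^{m^2}$ and dropping terms---which is exactly your Bessel-inequality step in slightly different clothing. Your remark about needing Bessel rather than Parseval (since the sum stops at $k$) is precisely the point the paper handles via the inequality $\|I\|_F^2 \geq \sum_{i=1}^k |u_i|^2$.
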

\begin{proof}
	Define vectors $u,v \in \mathbb{C}^k$ by $u_i := \Tr(A_i)$ and $v_i := \Tr(B_i)$. By the Cauchy--Schwarz inequality, it follows that
\begin{align*}
	\left|{\rm Tr}\left(\sum_{i=1}^{k} A_i \otimes B_i\right)\right| = \big|{\ip{\overline{u}}{v}}\big| \leq \|u\| \cdot \|v\|.
\end{align*}
To prove our result, we show that $\|{u}\|\leq \sqrt{m}$, with a similar argument holding for $\|{v}\| \leq \sqrt{n}$. We first rewrite $u_i$ as $u_i=\ip{I}{A_i}$. Note that the identity operator can be rewritten as
\begin{equation}
	\label{eq:identity-norm}
	\begin{aligned}
		I=\sum_{i=1}^{m^2}\ip{A_i}{I}A_i.
	\end{aligned}
\end{equation}

Computing the norms of the quantities in Equation~\eqref{eq:identity-norm}, we obtain
\begin{align*}
	\|{I}\|_F^2 = \sum_{i=1}^{m^2}|{\ip{A_i}{I}}|^2 \geq \sum_{i=1}^k|{u_i}|^2 = \|u\|^2,
\end{align*}
where the first equality follows from the definition of the Frobenius norm, and by noting that the set of operators $\{A_i\}_{i=1}^{m^2}$ are orthonormal. Since $\|{I}\|_F^2 = m$, it directly follows that $\|{u}\|\leq \sqrt{m}$, as desired.
\end{proof} 

\begin{lemma}\label{lem:realign_eig}
	Let $W \in M_m \otimes M_n$ be an entanglement witness of the form~\eqref{eq:realign_witness}, scaled so that ${\rm Tr}(W) = 1$. Then
	\begin{align*}
		\|W\|_{F} & \leq \sqrt{\frac{2}{mn - \sqrt{mn}}}.
	\end{align*}
\end{lemma}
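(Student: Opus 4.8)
The plan is to compute the Frobenius norm and the trace of the \emph{unnormalized} witness $W_0 := I - \sum_{i=1}^{k} A_i \otimes B_i$ directly, where $k \le \min\{m^2,n^2\}$ is the number of terms in the operator-Schmidt decomposition, and then account for the rescaling to $\Tr(W) = 1$. The whole computation turns out to depend on a single scalar, namely $s := \sum_{i=1}^{k} \Tr(A_i)\Tr(B_i)$; since $\rho$ is a state, its operator-Schmidt decomposition can be taken with Hermitian $A_i$ and $B_i$, so $s$ is real.

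First I would record the two elementary computations. Using $\Tr(A_i \otimes B_i) = \Tr(A_i)\Tr(B_i)$ gives $\Tr(W_0) = mn - s$. For the Frobenius norm, I expand $\|W_0\|_F^2 = \ip{I - S}{I - S}$ with $S := \sum_i A_i \otimes B_i$: the cross terms give $\ip{I}{S} = \ip{S}{I} = s$, the identity contributes $\|I\|_F^2 = mn$, and orthonormality of the family $\{A_i \otimes B_i\}$ in the Hilbert--Schmidt inner product gives $\ip{S}{S} = \sum_{i,j} \ip{A_i}{A_j}\ip{B_i}{B_j} = k$. Hence $\|W_0\|_F^2 = mn - 2s + k$. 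Since the scaled witness is $W = W_0/\Tr(W_0)$, this yields
\begin{align*}
	\|W\|_F^2 = \frac{mn - 2s + k}{(mn - s)^2}.
\end{align*}

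The final step is to bound this ratio, and this is where the one real idea of the proof lies. For fixed $s$ the right-hand side is increasing in $k$ (its $k$-derivative is $(mn-s)^{-2} > 0$), so I may replace $k$ by its largest admissible value; using $k \le \min\{m^2,n^2\} \le mn$ and evaluating at $k = mn$ collapses the numerator to $2(mn - s)$, so the expression telescopes to $2/(mn - s)$. Finally, Lemma~\ref{lem:wit_trace_bound} gives $s \le |s| \le \sqrt{mn}$, so $mn - s \ge mn - \sqrt{mn} > 0$ (which also confirms $\Tr(W_0) > 0$, so the normalization is legitimate), and therefore $\|W\|_F^2 \le 2/(mn - \sqrt{mn})$, giving the claim after taking square roots. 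I expect the only subtle point to be the monotonicity-in-$k$ reduction together with tracking the single scalar $s$; once that is set up, everything is a direct calculation. It is worth noting that equality holds when $k = mn$ (which forces $m = n$) and $s = \sqrt{mn}$, so the bound is sharp.
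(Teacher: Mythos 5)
Your proof is correct and follows essentially the same route as the paper: your scalar $s$ is exactly the paper's $t = {\rm Tr}\bigl(\sum_i A_i \otimes B_i\bigr)$, your expansion gives the same expression $\|W\|_F^2 = (mn - 2s + k)/(mn - s)^2$, and your two reductions (replace $k$ by $mn$ via monotonicity, then bound $s \leq \sqrt{mn}$ via Lemma~\ref{lem:wit_trace_bound}) are precisely the paper's two inequalities. Your added observations---that $\Tr(W_0) > 0$ so the normalization is legitimate, and that the bound is attained (e.g.\ by $I - S$ with $S$ the swap operator when $m = n$)---are correct refinements the paper leaves implicit.
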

\begin{proof}
	Consider the Hermitian operator $\widetilde{W} := I - \sum_{i=1}^k A_i \otimes B_i$ and let $W$ be its normalization: $W = \widetilde{W}/{\rm Tr}(\widetilde{W})$. For brevity, define $t := {\rm Tr}\big( \sum_{i=1}^k A_i \otimes B_i \big)$. Then
	\begin{align*}
		\|W\|_{F}^2 & = {\rm Tr}(W^\dagger W) = \frac{mn - 2t + k}{{\rm Tr}(\widetilde{W})^2} \leq \frac{2mn - 2t}{(mn - t)^2} = \frac{2}{mn - t} \leq \frac{2}{mn - \sqrt{mn}},
	\end{align*}
	where the first inequality above comes from simply noting that $k \leq \min\{m^2,n^2\} \leq mn$ and the second inequality comes from Lemma~\ref{lem:wit_trace_bound} and noting that $2 / (mn - t)$ is an increasing function of $t$ (for $t < mn$).
\end{proof}

We are now in a position to prove Theorem~\ref{thm:realign_from_spec}.
\begin{proof}[Proof of Theorem~\ref{thm:realign_from_spec}]
	Our goal is to make use of Lemma~\ref{lem:eigs_imp_sfs}, so we want to find bounds on $\ell := (1 - \|W\|_{\textup{tr}})/2$ and the maximum eigenvalue $\mu_1$ of $W$, where $W$ is an entanglement witness of the form~\eqref{eq:realign_witness}, scaled so that ${\rm Tr}(W) = 1$. Bounds on both of these quantities follow straightforwardly from Lemma~\ref{lem:realign_eig}. More specifically, it is always the case that $\|\cdot\|_{\textup{tr}} \leq \sqrt{d}\|\cdot\|_{F}$ for $d \times d$ matrices, so Lemma~\ref{lem:realign_eig} immediately implies that
	\begin{align*}
		\|W\|_{\textup{tr}} \leq \sqrt{mn}\|W\|_{F} \leq \sqrt{\frac{2\sqrt{mn}}{\sqrt{mn} - 1}}.
	\end{align*}
	Similarly, $\|\cdot\| \leq \|\cdot\|_{F}$ always, so we get the following bounds on $\ell$ and $\mu_1$:
		
	\begin{align*}
		\ell & \geq \frac{1}{2}\Big( 1 - \sqrt{\frac{2\sqrt{mn}}{\sqrt{mn} - 1}} \Big),\\
		\mu_1 & \leq \|W\| \leq \|W\|_F \leq \sqrt{\frac{2}{mn - \sqrt{mn}}}.
	\end{align*}
	Furthermore, when $m,n \geq 3$ (which is the only case we need to consider, since it is already known that absolute separability and absolute PPT coincide when $\min\{m,n\} \leq 2$) we then have the looser bounds $\ell \geq \frac{1}{2}(1 - \sqrt{3})$ and $\mu_1 \leq \frac{1}{\sqrt{3}}$. Since $f( \frac{1}{2}(1 - \sqrt{3})) \approx 0.6033\ldots \geq 0.5773\ldots \approx \frac{1}{\sqrt{3}} \geq \mu_1$, it follows from Lemma~\ref{lem:eigs_imp_sfs} that if $\rho$ is absolutely PPT then ${\rm Tr}(W\rho) \geq 0$, so $\|R(\rho)\|_{\textup{tr}} \leq 1$, as desired.
\end{proof}

\subsection{The Choi Map}\label{sec:choi_proof}

The \emph{Choi map} \cite{Cho80} is a positive map on $M_3$ that is defined as follows:
\begin{align*}
	\Phi_{C}(X) \defeq \frac{1}{2}\begin{bmatrix}
		x_{11} + x_{22} & -x_{12} & -x_{13} \\
		-x_{21} & x_{22} + x_{33} & -x_{23} \\
		-x_{31} & -x_{32} & x_{33} + x_{11}
	\end{bmatrix}.
\end{align*}
This map is one of the most well-known positive maps because it was one of the first maps found with the property that there are states $\rho \in M_3 \otimes M_3$ such that $(id_3 \otimes T)(\rho) \geq 0$, but $(id_3 \otimes \Phi_C)(\rho) \not\geq 0$. In other words, the Choi map was one of the first known examples of a positive map that can detect entanglement in PPT states. Furthermore, it is extremal in the set of positive maps \cite{Ha13}.

Our main result of this section is that the Choi map \emph{cannot} detect entanglement in absolutely PPT states. Equivalently, we show that the set of ``absolutely Choi map'' states:
\begin{align*}
	\mathcal{C}_{\textup{abs}} & \defeq \big\{ \rho : (id_3 \otimes \Phi_C)(U\rho U^\dagger) \geq 0 \quad \forall \, \text{unitary } U \big\}
\end{align*}
is a superset of the absolutely PPT states: $\mathcal{PPT}_{\textup{abs}} \subseteq \mathcal{C}_{\textup{abs}}$. Thus the Choi map is a weaker entanglement test than the PPT criterion in the ``absolute'' setting:
\begin{theorem}\label{thm:choi_from_spec}
	If $\rho \in M_3 \otimes M_3$ is absolutely PPT then $(id_3 \otimes \Phi_C)(\rho) \geq 0$.
\end{theorem}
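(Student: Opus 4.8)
The plan is to recast the positivity of $(id_3\otimes\Phi_C)(\rho)$ as a statement about a family of entanglement witnesses and then feed these witnesses into Lemma~\ref{lem:eigs_imp_sfs}. Concretely, $(id_3\otimes\Phi_C)(\rho)\geq0$ holds if and only if $\bra{\psi}(id_3\otimes\Phi_C)(\rho)\ket{\psi}\geq0$ for every unit vector $\ket{\psi}\in\bb{C}^3\otimes\bb{C}^3$, and moving the map onto the other factor of the trace gives $\bra{\psi}(id_3\otimes\Phi_C)(\rho)\ket{\psi}=\Tr(\rho\,W_\psi)$, where $W_\psi\defeq(id_3\otimes\Phi_C^\dagger)(\ketbra{\psi}{\psi})$ and $\Phi_C^\dagger$ is the dual of the Choi map. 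Thus it suffices to show that $\Tr(W_\psi\rho)\geq0$ for every absolutely PPT $\rho$ and every unit $\ket{\psi}$. Since $\Phi_C$ is unital ($\Phi_C(I)=I$), its dual is trace preserving, so each $W_\psi$ is Hermitian with $\Tr(W_\psi)=1$ and is therefore eligible for Lemma~\ref{lem:eigs_imp_sfs}. The whole problem then reduces to bounding, uniformly over $\ket{\psi}$, the two quantities the lemma requires: the largest eigenvalue $\mu_1$ of $W_\psi$ and $\ell=(1-\|W_\psi\|_{\textup{tr}})/2$.

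The key structural input is the decomposition $\Phi_C=\mathcal{D}+\tfrac12\Pi-\tfrac12\,id$, where $\mathcal{D}$ is the pinching to the diagonal and $\Pi$ sends $X$ to the diagonal matrix $\mathrm{diag}(x_{22},x_{33},x_{11})$; both $\mathcal{D}$ and $\Pi$ are completely positive and trace preserving, and dualizing gives $\Phi_C^\dagger=\mathcal{D}+\tfrac12\Pi^\dagger-\tfrac12\,id$ with $\Pi^\dagger$ of the same type. This immediately controls $\ell$: writing $P=\ketbra{\psi}{\psi}$ and applying the triangle inequality to $W_\psi=(id_3\otimes\mathcal{D})(P)+\tfrac12(id_3\otimes\Pi^\dagger)(P)-\tfrac12 P$, the first two terms are positive semidefinite with trace $1$ (because $\mathcal{D},\Pi^\dagger$ are trace preserving) and $\|P\|_{\textup{tr}}=1$, so $\|W_\psi\|_{\textup{tr}}\leq1+\tfrac12+\tfrac12=2$ and hence $\ell\geq-\tfrac12$ for every $\ket{\psi}$.

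It then remains to prove the operator-norm bound $\mu_1=\lambda_{\max}(W_\psi)\leq\tfrac12$, which is the crux of the argument. The naive route of bounding $\mu_1$ by the triangle inequality fails badly---it only yields $\mu_1\leq 2$---and this failure is genuine rather than an artifact: near a product vector $\ket{\psi}$ the completely-positive part $(id_3\otimes\mathcal{D})(P)+\tfrac12(id_3\otimes\Pi^\dagger)(P)$ already has operator norm close to $1$, and the bound can only be salvaged by the cancellation coming from the subtracted term $-\tfrac12 P$. I would therefore prove $\mu_1\leq\tfrac12$ by first using invariances to cut down the search space---the spectrum of $W_\psi$ is unchanged when $\ket{\psi}$ is acted on by a unitary on the first factor and by the symmetries of $\Phi_C$ (diagonal unitaries and the cyclic permutation) on the second---so that the coefficient matrix of $\ket{\psi}$ may be placed in a canonical form. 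One then estimates the quadratic form $\bra{\xi}W_\psi\ket{\xi}$ over unit $\ket{\xi}$ directly, exploiting that $R\defeq W_\psi+\tfrac12 P$ is block diagonal in the second system (its block indexed by $m$ being $\ketbra{C_m}{C_m}+\tfrac12\ketbra{C_{m'}}{C_{m'}}$, where the $C_m$ are the columns of $\ket{\psi}$'s coefficient matrix and $m'$ is its cyclically shifted index), and that whenever a block of $R$ has large norm its top eigenvector is forced to align with $\ket{\psi}$, so that the $-\tfrac12 P$ term subtracts exactly enough to keep $\bra{\xi}W_\psi\ket{\xi}\leq\tfrac12$.

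With both bounds in hand the proof closes quickly: since $f$ maps $[-\tfrac12,0]$ into $[\tfrac12,1]$ we have $f(\ell)\geq\tfrac12\geq\mu_1$ for every $\ket{\psi}$, so Lemma~\ref{lem:eigs_imp_sfs} gives $\Tr(W_\psi\rho)\geq0$ for all absolutely PPT $\rho$, and the equivalence from the first paragraph yields $(id_3\otimes\Phi_C)(\rho)\geq0$. I expect essentially all of the difficulty to be concentrated in the uniform bound $\mu_1\leq\tfrac12$: the reduction and the trace-norm estimate are clean and formal, but controlling the largest eigenvalue requires tracking the delicate cancellation between the completely-positive part of $W_\psi$ and the rank-one term $-\tfrac12 P$, precisely in the regime where $\ket{\psi}$ is close to a product vector.
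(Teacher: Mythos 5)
Your reduction to witnesses $W_v := (id_3\otimes\Phi_C^\dagger)(\ketbra{v}{v})$ with $\Tr(W_v)=1$, the decomposition $\Phi_C=\mathcal{D}+\tfrac12\Pi-\tfrac12\,id_3$, and the resulting trace-norm bound $\ell\ge-\tfrac12$ are all correct, and the overall strategy (feed eigenvalue bounds into Lemma~\ref{lem:eigs_imp_sfs}) is exactly the paper's. But the step you yourself flag as the crux---the uniform bound $\mu_1\le\tfrac12$---is false. Dualizing your own decomposition gives
\begin{align*}
	\Phi_C^\dagger(Y)=\frac12\begin{bmatrix} y_{11}+y_{33} & -y_{12} & -y_{13}\\ -y_{21} & y_{11}+y_{22} & -y_{23}\\ -y_{31} & -y_{32} & y_{22}+y_{33}\end{bmatrix},
\end{align*}
and for the product vector $\ket{v}=\ket{0}\otimes\ket{w}$ with $\ket{w}=\tfrac{1}{\sqrt{3}}(\sqrt{2}\ket{0}+\ket{1})$ one has $W_v=\ketbra{0}{0}\otimes\Phi_C^\dagger(\ketbra{w}{w})$, whose nontrivial $2\times2$ block is $\tfrac12\begin{bmatrix}2/3 & -\sqrt{2}/3\\ -\sqrt{2}/3 & 1\end{bmatrix}$ with largest eigenvalue exactly $\tfrac23$. (The paper states after Lemma~\ref{lem:choi_eigs} that $2/3$ is tight, attained at precisely this kind of vector.) Your heuristic for why $\tfrac12$ should hold also breaks down at this example: the top eigenvector of $W_v$ is $\ket{0}\otimes\tfrac{1}{\sqrt{3}}(\ket{0}-\sqrt{2}\ket{1})$, which is \emph{orthogonal} to $\ket{v}$, so the rank-one term $-\tfrac12\ketbra{v}{v}$ contributes nothing along it and no cancellation analysis can push the bound below $\tfrac23$. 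So the argument collapses at its acknowledged crux, which in any case was only sketched, not proved.

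The failure is not repaired by merely substituting the correct bound $\mu_1\le\tfrac23$, because your estimate $\ell\ge-\tfrac12$ is then too weak: $f(-\tfrac12)=\tfrac12<\tfrac23$, and since $f$ equals $\tfrac14(1+\sqrt{2})\approx 0.6036<\tfrac23$ on its middle branch, the hypothesis $\mu_1\le f(\ell)$ with $\mu_1=\tfrac23$ forces $\ell\ge\tfrac{1-\sqrt{2}}{2}\approx-0.207$. This is how the paper's Lemma~\ref{lem:choi_eigs} proceeds, using SDP duality certificates rather than triangle-inequality estimates: an explicit feasible point of the diamond-norm SDP~\eqref{sdp:diamond_norm} gives $\|\Phi_C^\dagger\|_\diamond\le\tfrac43$, hence a minimum eigenvalue $\ge-\tfrac16$, which combined with the at-most-one-negative-eigenvalue observation (which your decomposition also yields, since $W_v+\tfrac12\ketbra{v}{v}\ge0$ and a rank-one subtraction creates at most one negative eigenvalue) gives $\ell\ge-\tfrac16$; and an explicit feasible $Y\ge0$ in the PPT-relaxation SDP~\eqref{sdp:eig_ub} certifies $\mu_1\le\tfrac23$. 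Then $f(-\tfrac16)=\tfrac{1}{12}(10+\sqrt{2})\approx0.951\ge\tfrac23$ and Lemma~\ref{lem:eigs_imp_sfs} closes the proof. The moral is that both inputs to the lemma require the sharper duality-based bounds; the CP-decomposition triangle inequality loses too much on both $\mu_1$ and $\ell$.
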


As with the previous section, our first goal here is to rephrase the condition $(id_3 \otimes \Phi_C)(\rho) \not\geq 0$ in terms of entanglement witnesses, so that we can make use of Lemma~\ref{lem:eigs_imp_sfs}. To this end, simply note that if $(id_3 \otimes \Phi_C)(\rho) \not\geq 0$ then there exists a pure state $\ket{v} \in \mathbb{C}^3 \otimes \mathbb{C}^3$ such that $\bra{v}(id_3 \otimes \Phi_C)(\rho)\ket{v} < 0$. Thus $\Tr((id_3 \otimes \Phi_C^\dagger)(\ketbra{v}{v})\rho) < 0$, so
\begin{align}\label{eq:choi_wit}
	W := (id_3 \otimes \Phi_C^\dagger)(\ketbra{v}{v})
\end{align}
is a witness that detects entanglement in $\rho$. It thus suffices to show that witnesses of the form~\eqref{eq:choi_wit} cannot detect entanglement in absolutely PPT states $\rho$. In order to prove this claim, we present the following lemma, which bounds the eigenvalues of $(id_3 \otimes \Phi_{C}^\dagger)(\ketbra{v}{v})$.
\begin{lemma}\label{lem:choi_eigs}
	Let $\ket{v} \in \mathbb{C}^3 \otimes \mathbb{C}^3$ be a unit vector. Then the eigenvalues of $(id_3 \otimes \Phi_{C}^\dagger)(\ketbra{v}{v})$ are contained within the interval $[-1/6,2/3]$.
\end{lemma}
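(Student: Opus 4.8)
The plan is to reduce the eigenvalue bounds to a variational problem over pure states and then to bound the resulting quadratic form. Writing $W := (id_3 \otimes \Phi_C^\dagger)(\ketbra{v}{v})$, I would first record that $\Phi_C^\dagger$ is again (half of) a generalized Choi map---explicitly $\Phi_C^\dagger(Y) = \tfrac12\big[\mathrm{diag}(y_{11}+y_{33},\,y_{11}+y_{22},\,y_{22}+y_{33}) - (Y - \mathrm{diag}(Y))\big]$, read off from $\Tr(\Phi_C(X)Y) = \Tr(X\Phi_C^\dagger(Y))$---and that $\Phi_C$ is unital ($\Phi_C(I)=I$), so its dual is trace preserving and hence $\Tr(W)=1$. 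Since $W$ is Hermitian, all its eigenvalues lie between $\min_{\|w\|=1}\bra{w}W\ket{w}$ and $\max_{\|w\|=1}\bra{w}W\ket{w}$, and the duality relation gives $\bra{w}W\ket{w} = \bra{v}(id_3\otimes\Phi_C)(\ketbra{w}{w})\ket{v}$. It therefore suffices to bound this last expression over all unit vectors $v,w$.

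The key step is an explicit formula for that form. Decomposing $\Phi_C = D + \tfrac12 S - \tfrac12\,id_3$, where $D$ is the diagonal (dephasing) map and $S$ is the cyclic shift of the diagonal, $S(X)_{kk} = x_{k+1,k+1}$, and writing $\ket{v},\ket{w}$ as $3\times3$ coefficient matrices $V,N$, I would show that with $M := V^\dagger N$ one has
\[
\bra{v}(id_3\otimes\Phi_C)(\ketbra{w}{w})\ket{v} = \sum_{k}|M_{kk}|^2 + \tfrac12\sum_k |M_{k,k+1}|^2 - \tfrac12|\Tr(M)|^2,
\]
the three terms arising from $D$, $\tfrac12 S$, and $-\tfrac12 id_3$ respectively (indices mod $3$). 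The only facts I need from realizability are $\|M\|_F = \|V^\dagger N\|_F \le \|V\|_F\|N\|_F = 1$ and, by Cauchy--Schwarz on the column norms of $V$ and $N$, the bound $\sum_k |M_{kk}| \le 1$.

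For the lower bound this is now elementary: discarding the nonnegative superdiagonal term and writing $a_k := M_{kk}$, it remains to minimize $\sum_k|a_k|^2 - \tfrac12\big|\sum_k a_k\big|^2$ subject to $\sum_k|a_k|\le 1$. Aligning the phases to maximize $\big|\sum_k a_k\big|$ and then spreading the magnitudes equally gives minimum $-\tfrac16$ (attained, consistently, by the maximally entangled $v$ with $w=v$), so every eigenvalue of $W$ is $\ge -\tfrac16$.

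The upper bound $\le \tfrac23$ is the main obstacle, because the penalty $-\tfrac12|\Tr M|^2$ must be kept \emph{coupled} to $\sum_k|M_{kk}|^2$: discarding any single term, or using $\|M\|_F\le1$ and $\sum_k|M_{kk}|\le1$ as independent constraints, only yields the non-tight bound $1$. That the value $\tfrac23$ is correct and sharp can be verified on the rank-one (hence product-state) choice with column-norm profiles $(2/3,1/3,0)$ for $V$ and $(1/3,2/3,0)$ for $N$ and the two diagonal phases opposed, which forces $\Tr M = 0$ and gives exactly $\tfrac23$. To prove the inequality in general I would treat $\max_{V,N} g(V^\dagger N)$ as an optimization over the columns of $V,N$ and dualize it as a semidefinite program---exactly the strategy used for Lemma~\ref{lem:eigs_imp_sfs}---exhibiting an explicit dual feasible point certifying the value $\tfrac23$. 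Equivalently, stacking the columns $c_k$ of $V$ into a unit vector $\ket{C}$ one can write the quantity as $\lambda_{\max}\big(\bigoplus_k R_k - \tfrac12\ketbra{\mathcal{D}}{\mathcal{D}}\big)$ with $R_k = \ketbra{d_k}{d_k}+\tfrac12\ketbra{d_{k+1}}{d_{k+1}}$ and $\ket{\mathcal{D}}=\bigoplus_k\ket{d_k}$, and establish the operator inequality $\bigoplus_k R_k - \tfrac12\ketbra{\mathcal{D}}{\mathcal{D}} \preceq \tfrac23 I_9$ by a Schur-complement argument on this rank-one perturbation of a block-diagonal matrix. I expect verifying the dual certificate---equivalently, reducing the joint optimization to the product-state configuration and solving the resulting three-dimensional problem---to be the one genuinely delicate step.
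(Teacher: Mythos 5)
Your reduction is sound and, for the lower bound, complete and genuinely different from the paper's argument. The identity
\begin{align*}
\bra{v}(id_3\otimes\Phi_C)(\ketbra{w}{w})\ket{v} \;=\; \sum_k |M_{kk}|^2 + \tfrac12\sum_k |M_{k,k+1}|^2 - \tfrac12\big|\Tr(M)\big|^2, \qquad M = V^\dagger N,
\end{align*}
checks out (as does the decomposition $\Phi_C = D + \tfrac12 S - \tfrac12\,id_3$ and the duality step $\bra{w}W\ket{w} = \bra{v}(id_3\otimes\Phi_C)(\ketbra{w}{w})\ket{v}$), and your relaxation to the necessary condition $\sum_k|M_{kk}|\le 1$ correctly yields the minimum $-\tfrac16$ by an elementary phase-alignment and equal-spreading computation. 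This replaces the paper's route, which instead uses trace preservation of $\Phi_C^\dagger$ plus the bound $\sum_i|\mu_i|\le \|\Phi_C^\dagger\|_\diamond$ and certifies $\|\Phi_C^\dagger\|_\diamond \le 4/3$ by an explicit feasible point of Watrous's diamond-norm SDP; your version is more self-contained and identifies the extremal configuration transparently.

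However, the upper bound $\mu_1 \le 2/3$ --- which is the harder half --- is not actually proved. You correctly diagnose that the penalty $-\tfrac12|\Tr M|^2$ must stay coupled to the diagonal terms (indeed, treating $\|M\|_F\le 1$ and $\sum_k|M_{kk}|\le 1$ as independent constraints admits, e.g., $M_{11}=-M_{22}=\tfrac12$ with the remaining Frobenius mass on the superdiagonal, giving $3/4 > 2/3$, so your lower-bound-style relaxation provably fails here), and your sharpness example with squared column-norm profiles $(2/3,1/3,0)$ and $(1/3,2/3,0)$ does attain $2/3$. But attainment plus a plan is not a proof: you state two strategies (an SDP dual certificate, or the operator inequality $\bigoplus_k R_k - \tfrac12\ketbra{\mathcal{D}}{\mathcal{D}} \preceq \tfrac23 I_9$ via Schur complements) and execute neither. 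Note that the latter is not a single matrix inequality amenable to one Schur-complement computation: it is quantified over all admissible tuples $\{d_k\}$ with $\sum_k\|d_k\|^2=1$, and $\lambda_{\max}\big(\bigoplus_k R_k\big)$ alone can reach $3/4$ (e.g.\ $d_1 = d_2 = u/\sqrt2$), so the rank-one deflation must be exploited in a configuration-dependent way --- a genuinely nontrivial optimization that you have only deferred. This missing certificate is exactly what the paper supplies: after reducing via \cite[Theorem~4]{Wat05} and \cite{NOP09} to maximizing $J(\Phi_C^\dagger)$ over product states, it relaxes to PPT states and exhibits an explicit positive semidefinite $Y$ with $\lambda_{\textup{max}}\big((id_3\otimes T)(Y) + J(\Phi_C^\dagger)\big) = 2/3$ in the SDP~\eqref{sdp:eig_ub}. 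Until you produce the analogous explicit dual feasible point (or verify the quantified operator inequality), the upper-bound half of the lemma remains a gap.
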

Before proving this result, we note that both of its bounds on the eigenvalues are tight. A minimal eigenvalue of $-1/6$ is obtained when $\ket{v} = \ket{\psi^+}$ (the standard maximally-entangled state), and a maximal eigenvalue of $2/3$ is obtained when $\ket{v} = \tfrac{1}{\sqrt{3}}\ket{0}\otimes(\ket{0}+\sqrt{2}\ket{1})$.
\begin{proof}
	Let $\mu_1 \geq \cdots \geq \mu_9$ be the eigenvalues of $(id_3 \otimes \Phi_{C}^\dagger)(\ketbra{v}{v})$. Our first goal is to show that $\mu_9 \geq -1/6$. To this end, first notice that $\Phi_{C}^\dagger$ is trace-preserving, so we have
	\begin{align}\label{eq:eig_sum_1}
		\sum_{i=1}^9 \mu_i = 1.
	\end{align}
	
	Also notice that
	\begin{align}\label{eq:eig_sum_2}
		\left(\sum_{i=1}^8 \mu_i\right) - \mu_9 \leq \sum_{i=1}^9 |\mu_i| \leq \big\| \Phi_{C}^\dagger \big\|_\diamond,
	\end{align}
	where $\|\cdot\|_\diamond$ is the \emph{diamond norm} \cite{Kit97} defined by
	\begin{align*}
		\big\|\Phi\big\|_\diamond \defeq \sup\Big\{ \big\|(id_3 \otimes \Phi)(X)\big\|_{\textup{tr}} : \big\|X\big\|_{\textup{tr}} \leq 1 \Big\}.
	\end{align*}
	By subtracting Inequality~\eqref{eq:eig_sum_2} from Equation~\eqref{eq:eig_sum_1}, we see that
	\begin{align}\label{eq:eig_sum_3}
		\mu_9 \geq \frac{1}{2}\left(1 - \big\| \Phi_{C}^\dagger \big\|_\diamond \right).
	\end{align}
	
	The diamond norm $\big\|\Phi_{C}^\dagger\big\|_\diamond$ can be computed via semidefinite programming \cite{Wa09}, and in particular is equal to the optimal value of the following problem \cite{Wat12}:
	\begin{align}\begin{split}\label{sdp:diamond_norm}
		\text{minimize:} \ \ & \frac{1}{2}\big\|{\rm Tr}_2(Y_0)\big\| + \frac{1}{2}\big\|{\rm Tr}_2(Y_1)\big\| \\
		\text{subject to:} \ \ & \begin{bmatrix}
		Y_0 & -J(\Phi_{C}^\dagger) \\ -J(\Phi_{C}^\dagger)^\dagger & Y_1
		\end{bmatrix} \geq 0 \\
		& Y_0,Y_1 \geq 0,
	\end{split}\end{align}
	where we optimize over $Y_0,Y_1 \in M_3 \otimes M_3$ and denote $\Tr_2(\cdot)$ as the partial trace with respect to the second subsystem. It is straightforward to verify that the following are feasible values of $Y_0$ and $Y_1$, written with respect to the standard basis of $\mathbb{C}^3 \otimes \mathbb{C}^3$:
	\begin{align*}
		Y_0=Y_1=\frac{1}{6}\begin{bmatrix}
		5 & \cdot & \cdot & \cdot & -1 & \cdot & \cdot & \cdot & -1 \\
		\cdot & 3 & \cdot & \cdot & \cdot & \cdot & \cdot & \cdot & \cdot \\
		\cdot & \cdot & \cdot & \cdot & \cdot & \cdot & \cdot & \cdot & \cdot \\
		\cdot & \cdot & \cdot & \cdot & \cdot & \cdot & \cdot & \cdot & \cdot \\
		-1 & \cdot & \cdot & \cdot & 5 & \cdot & \cdot & \cdot & -1 \\
		\cdot & \cdot & \cdot & \cdot & \cdot & 3 & \cdot & \cdot & \cdot \\
		\cdot & \cdot & \cdot & \cdot & \cdot & \cdot & 3 & \cdot & \cdot \\
		\cdot & \cdot & \cdot & \cdot & \cdot & \cdot & \cdot & \cdot & \cdot \\
		-1 & \cdot & \cdot & \cdot & -1 & \cdot & \cdot & \cdot & 5
		\end{bmatrix}.
	\end{align*}
	Since $\frac{1}{2}\big\|{\rm Tr}_2(Y_0)\big\| + \frac{1}{2}\big\|{\rm Tr}_2(Y_1)\big\| = 4/3$, it follows that $\big\|\Phi_{C}\big\|_\diamond \leq 4/3$ (it is not difficult to show that it actually equals $4/3$, but we only need the upper bound). By plugging this upper bound into Inequality~\eqref{eq:eig_sum_3}, we see that $\mu_9 \geq -1/6$, as desired.
	
	Our next goal is to show that $\mu_1 \leq 2/3$. To show this, we first note that
	\begin{align*}
		\mu_1 \leq \sup_{\ket{v} \in \mathbb{C}^3 \otimes \mathbb{C}^3} \Big\{ \big\| (id_3 \otimes \Phi_{C}^\dagger)(\ketbra{v}{v}) \big\| \Big\}.
	\end{align*}
	By making use of \cite[Theorem~4]{Wat05}, we see that the $id_3$ in this quantity can be omitted, giving
	\begin{align}\label{eq:mu1_ub2}
		\mu_1 \leq \sup_{\ket{v} \in \mathbb{C}^3} \Big\{ \big\| \Phi_{C}^\dagger(\ketbra{v}{v}) \big\| \Big\}.
	\end{align}
	It was shown in \cite{NOP09} that the quantity on the right in Inequality~\eqref{eq:mu1_ub2} equals
	\begin{align}\label{eq:inf_norm_sep}
		\sup_{\ket{v},\ket{w} \in \mathbb{C}^3} \big\{ (\bra{v}\otimes\bra{w})J(\Phi_{C}^\dagger)(\ket{v}\otimes\ket{w}) \big\}.
	\end{align}
	Optimizing over separable states in general is expected to be difficult, but we can compute upper bounds of the quantity~\eqref{eq:inf_norm_sep} via the semidefinite programming methods of \cite{NOP09,JK11}. In particular, the following SDP takes the supremum over the set of PPT states, rather than the set of separable states, and thus computes an upper bound of the quantity~\eqref{eq:inf_norm_sep} (and hence of $\mu_1$):
	\begin{align}\begin{split}\label{sdp:eig_ub}
		\begin{matrix}
		\begin{tabular}{r l c r l}
		\multicolumn{2}{c}{\text{Primal problem}} & & \multicolumn{2}{c}{\text{Dual problem}} \\
		\text{maximize:} & $\Tr(J(\Phi_{C}^\dagger)\rho)$ & &  \text{minimize:} & $\lambda_{\textup{max}}\big((id_3 \otimes T)(Y) + J(\Phi_{C}^\dagger)\big)$ \\
		\text{subject to:} & $(id_3 \otimes T)(\rho) \geq 0$ & \quad \quad & \text{subject to:} & $Y \geq 0$, \\
		\ & ${\rm Tr}(\rho) \leq 1$ & & & \\
		\ & $\rho \geq 0$ & & &\\
		\end{tabular}
		\end{matrix}
	\end{split}
	\end{align}
	where we optimize over density matrices $\rho \in M_3 \otimes M_3$ in the primal problem and over Hermitian $Y \in M_3 \otimes M_3$ in the dual problem. It is straightforward to use semidefinite programming solvers to numerically verify that the optimal value of this semidefinite program is $2/3$, from which it follows that $\mu_1 \leq 2/3$. To obtain a completely analytic proof of this fact, it suffices to find a single positive semidefinite $Y \in M_3 \otimes M_3$ such that $\lambda_{\textup{max}}\big((id \otimes T)(Y) + J(\Phi_{C}^\dagger)\big) = 2/3$. One such matrix is as follows:
	\begin{align*}
		Y = \frac{1}{6}\begin{bmatrix}
		\cdot & \cdot & \cdot & \cdot & \cdot & \cdot & \cdot & \cdot & \cdot \\
		 \cdot & 1 & \cdot & 2 & \cdot & \cdot & \cdot & \cdot & \cdot \\
		 \cdot & \cdot & 4 & \cdot & \cdot & \cdot & 2 & \cdot & \cdot \\
		 \cdot & 2 & \cdot & 4 & \cdot & \cdot & \cdot & \cdot & \cdot \\
		 \cdot & \cdot & \cdot & \cdot & \cdot & \cdot & \cdot & \cdot & \cdot \\
		 \cdot & \cdot & \cdot & \cdot & \cdot & 1 & \cdot & 2 & \cdot \\
		 \cdot & \cdot & 2 & \cdot & \cdot & \cdot & 1 & \cdot & \cdot \\
		 \cdot & \cdot & \cdot & \cdot & \cdot & 2 & \cdot & 4 & \cdot \\
		 \cdot & \cdot & \cdot & \cdot & \cdot & \cdot & \cdot & \cdot & \cdot
		\end{bmatrix}.
	\end{align*}
\end{proof}

With the above lemma in hand, we are now in a position to prove the main result of this section.
\begin{proof}[Proof of Theorem~\ref{thm:choi_from_spec}]
	By Lemma~\ref{lem:choi_eigs}, we know that every entanglement witness $W$ of the form $W = (id_3 \otimes \Phi_{C}^\dagger)(\ketbra{v}{v})$ has eigenvalues in the interval $[-1/6,2/3]$. Since $\Phi_C$ can be written in the form $\Phi_C = (\Phi - id_3)/2$ for some completely positive map $\Phi$, we see that $W = ((id_3 \otimes \Phi^\dagger)(\ketbra{v}{v}) - \ketbra{v}{v})/2$, which has at most $1$ negative eigenvalue, so we know that $\ell := (1 - \|W\|_{\textup{tr}})/2 \geq -1/6$ as well. Since $f(-1/6) = \frac{1}{12}(10 + \sqrt{2}) \approx 0.9512\ldots \geq \frac{2}{3}$, it follows from Lemma~\ref{lem:eigs_imp_sfs} that $W$ cannot detect entanglement in any absolutely PPT state, so neither can the Choi map $\Phi_C$.
\end{proof}

\subsection{Generalized Choi Maps}\label{sec:choi_generalize}

In the proof of Theorem~\ref{thm:choi_from_spec}, there was a rather large gap between the largest eigenvalue of $(id_3 \otimes \Phi_{C}^\dagger)(\ketbra{v}{v})$ (which was $2/3$) and the quantity that we needed to bound this eigenvalue by (which was $\frac{1}{12}(10 + \sqrt{2}) \approx 0.9512\ldots$). This suggests that any positive map that is sufficiently close to the Choi map also cannot detect entanglement in absolutely PPT states, since perturbing the Choi map will only slightly change both of these values. This section is devoted to making this statement more rigorous and precise, by investigating a well-known family of positive maps that generalize the Choi map.

We now introduce an infinite family of positive maps based on two real parameters $b,c\geq 0$ that were first studied in~\cite{CKL92} (see also \cite{CW11,CS13}). If we let $a := 2-b-c$, then these maps are defined as follows:
\begin{align*}
	\Phi_{b,c}(X) \defeq \frac{1}{2}\begin{bmatrix}
		ax_{11} + bx_{22} + cx_{33} & -x_{12} & -x_{13} \\
		-x_{21} & cx_{11} + ax_{22} + bx_{33} & -x_{23} \\
		-x_{31} & -x_{32} & bx_{11} + cx_{22} + ax_{33}
	\end{bmatrix}.
\end{align*}
Notice that the Choi map $\Phi_C$ is recovered in the $(b,c) = (1,0)$ case. Additionally, in the $(b,c) = (1,1)$ case the map has the form $\Phi_{1,1}(X) = \frac{1}{2}\big(\Tr(X)I - X\big)$, which is the well-known \emph{reduction map} \cite{HH99}.

It is known that $\Phi_{b,c}$ is positive but not completely positive (i.e., capable of detecting entanglement in some state) if and only if $(b,c) \neq (0,0)$, and either $b+c \leq 1$ or $bc \geq (b+c-1)^2$ (or both). Furthermore, it is \emph{indecomposable} (i.e., it detects some PPT entanglement) if and only if $b \neq c$, and it is an exposed point in the convex set of positive maps (and hence extreme and optimal) \cite{HK11} if $b \neq c$, $b+c > 1$, and $bc = (b+c-1)^2$ (see Figure~\ref{fig:phi_bc}).
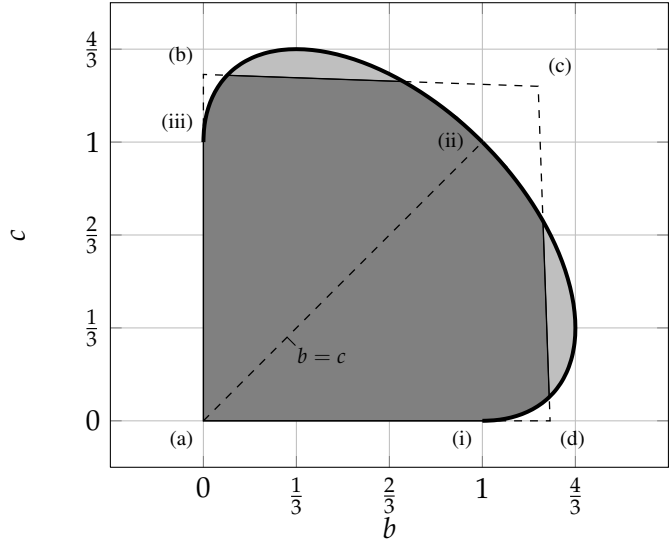
\begin{figure}[htb]
	\centering
\begin{tikzpicture}
\begin{axis}[
xlabel=$b$,
ylabel=$c$,
axis equal,
xmin=0, xmax=1.3333333333333333,
ymin=-0.16666666666, ymax=1.5,
xtick={0,0.3333333333333333,0.6666666666667,1,1.3333333333333333},
ytick={0,0.3333333333333333,0.6666666666667,1,1.3333333333333333},
xticklabels={$0$,$\tfrac{1}{3}$,$\tfrac{2}{3}$,$1$,$\tfrac{4}{3}$},
yticklabels={$0$,$\tfrac{1}{3}$,$\tfrac{2}{3}$,$1$,$\tfrac{4}{3}$},
grid=major,
]

\draw[domain=0:1,smooth,variable=\t,fill=black!25,line width=0.001] plot ({100*\t*\t/(1-\t+\t*\t)},{100*(0.16666666666+1/(1-\t+\t*\t))});
\draw[domain=0:1,smooth,variable=\t,fill=black!25,line width=0.001] plot ({100/(1-\t+\t*\t)},{100*(0.16666666666+\t*\t/(1-\t+\t*\t))});

\addplot[fill=black!50,line width=0.001] coordinates {(1.09009,0.909001) (1,1) (0.909001,1.09009) (0.734495,1.2071) (0.717075,1.21716) (0.0850174,1.23962) (0.0493374187,1.193303268) (0.0343145,1.1656855) (0.0199099,1.1300901) (0.00916698,1.09083302) (0.00238637,1.04761363) (0,1) (0,0) (1,0) (1.04761363,0.00238637) (1.09083302,0.00916698) (1.1300901,0.0199099) (1.1656855,0.0343145) (1.193303268,0.0493374187) (1.23962,0.0850174) (1.21716,0.717075) (1.2071,0.734495)};

\draw[domain=0:1,smooth,variable=\t,black,line width=1.5] plot ({100*\t*\t/(1-\t+\t*\t)},{100*(0.16666666666+1/(1-\t+\t*\t))});
\draw[domain=0:1,smooth,variable=\t,black,line width=1.5] plot ({100/(1-\t+\t*\t)},{100*(0.16666666666+\t*\t/(1-\t+\t*\t))});

\addplot[black,line width=0.5] coordinates {(1,0) (0,0) (0,1)};
\node at (axis cs:0.3,0.3) [anchor=north west] {{\scriptsize $b=c$}};
\addplot[black] coordinates {(0.3,0.3) (0.33,0.27)};
\addplot[black,line width=0.5,dashed] coordinates {(0,0) (1,1)};

\addplot[black,line width=0.5,dashed] coordinates {(0,1) (0,1.242640687) (0.0850174,1.23962)};
\addplot[black,line width=0.5] coordinates {(0.0850174,1.23962) (0.717075,1.21716)};
\addplot[black,line width=0.5,dashed] coordinates {(0.717075,1.21716) (1.2,1.2) (1.21716,0.717075)};
\addplot[black,line width=0.5] coordinates {(1.23962,0.0850174) (1.21716,0.717075)};
\addplot[black,line width=0.5,dashed] coordinates {(1,0) (1.242640687,0) (1.23962,0.0850174)};

\addplot[black,fill=white] (1,0) circle [radius=0.015];
\node at (axis cs:1,0) [anchor=north east] {{\scriptsize (i)}};

\addplot[black,fill=white] (1,1) circle [radius=0.015];
\node at (axis cs:0.97,1) [anchor=east] {{\scriptsize (ii)}};

\addplot[black,fill=white] (0,1) circle [radius=0.015];
\node at (axis cs:0,1) [anchor=south east] {{\scriptsize (iii)}};

\addplot[black,fill=black] (0,0) circle [radius=0.01];
\node at (axis cs:0,0) [anchor=north east] {{\scriptsize (a)}};

\addplot[black,fill=black] (0,1.242640687) circle [radius=0.01];
\node at (axis cs:0,1.242640687) [anchor=south east] {{\scriptsize (b)}};

\addplot[black,fill=black] (1.2,1.2) circle [radius=0.01];
\node at (axis cs:1.2,1.2) [anchor=south west] {{\scriptsize (c)}};

\addplot[black,fill=black] (1.242640687,0) circle [radius=0.01];
\node at (axis cs:1.242640687,0) [anchor=north west] {{\scriptsize (d)}};
\end{axis}
\end{tikzpicture}
	\caption{A plot of the set of positive but not completely positive maps $\Phi_{b,c}$. The gray region indicates the values of $(b,c)$ for which $\Phi_{b,c}$ is positive. For every point $(b,c)$ in the gray region, with the exception of the dashed line $b=c$, the map $\Phi_{b,c}$ is indecomposible, and hence is able to detect entanglement in some PPT state. The curved part of the boundary (i.e., the thick black line) consists of exposed positive maps. The four points $(a)$--$(d)$ are the points described by Theorem~\ref{thm:gen_choi_from_spec}, and hence every map in the dark gray region is incapable of detecting entanglement in absolutely PPT states. The point~(i) is the Choi map $\Phi_C$, (iii) is its dual $\Phi_C^\dagger$, and (ii) is the reduction map. This figure is also reproduced and considered in the following works~\cite{CS13, CW11}}\label{fig:phi_bc}
\end{figure}

As with the previous sections, our goal is to show that the maps $\Phi_{b,c}$ cannot detect entanglement in any absolutely PPT state. While we are not able to prove this for \emph{all} positive $\Phi_{b,c}$, we are able to prove it for most of them, including many of which are exposed in the set of positive maps.
\begin{theorem}\label{thm:gen_choi_from_spec}
	Suppose that the point $(b,c)$ is contained within the convex hull of the following $4$ points:
	\begin{enumerate}[(a)]
		\item $(0, 0)$,
		\item $(0, 3(\sqrt{2}-1))$,
		\item $(6/5, 6/5)$,
		\item $(3(\sqrt{2}-1), 0)$.
	\end{enumerate}
	If $\rho \in M_3 \otimes M_3$ is absolutely PPT then $(id_3 \otimes \Phi_{b,c})(\rho) \geq 0$.
\end{theorem}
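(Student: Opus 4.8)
The plan is to follow the strategy of Theorem~\ref{thm:choi_from_spec} essentially verbatim, but to upgrade Lemma~\ref{lem:choi_eigs} into eigenvalue bounds that are \emph{uniform} over the whole parameter region, and then to use convexity to collapse the verification onto the four vertices $(a)$--$(d)$. First I would set up the witnesses. A direct computation shows that the dual map is $\Phi_{b,c}^\dagger = \Phi_{c,b}$ (the dual merely transposes the circulant that acts on the diagonal), and that every $\Phi_{b,c}$ is trace preserving because $a+b+c = 2$. Hence if $(id_3 \otimes \Phi_{b,c})(\rho) \not\geq 0$, the operator $W := (id_3 \otimes \Phi_{b,c}^\dagger)(\ketbra{v}{v})$ detecting this automatically satisfies $\Tr(W) = 1$, exactly as in~\eqref{eq:choi_wit}. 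So it suffices to check that such $W$ meet the hypotheses of Lemma~\ref{lem:eigs_imp_sfs} for every unit vector $\ket{v}$ and every $(b,c)$ in the hull.

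Next I would pin down the sign pattern of $W$, so that $\ell$ reduces to a single eigenvalue. Writing $a = 2-b-c$, one checks $\Phi_{b,c} = \tfrac{1}{2}(\Psi_{b,c} - id_3)$, where $\Psi_{b,c}$ kills the off-diagonal of $X$ and applies the entrywise-nonnegative circulant $P = \bigl[\begin{smallmatrix} a+1 & b & c \\ c & a+1 & b \\ b & c & a+1 \end{smallmatrix}\bigr]$ to its diagonal. Since $a+1 = 3-b-c \geq 3 - \tfrac{12}{5} > 0$ throughout the hull and $b,c \geq 0$, the map $\Psi_{b,c}$ has Kraus operators $\sqrt{P_{ij}}\,\ketbra{i}{j}$ and is completely positive. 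Dualizing, $W = \tfrac{1}{2}\big((id_3 \otimes \Psi_{c,b})(\ketbra{v}{v}) - \ketbra{v}{v}\big)$ is a positive semidefinite operator minus a rank-one term, hence has at most one negative eigenvalue. As in the proof of Theorem~\ref{thm:choi_from_spec}, this forces $\ell = (1-\|W\|_{\textup{tr}})/2$ to equal the minimal eigenvalue of $W$ whenever that eigenvalue is negative (and to equal $0$ otherwise).

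The conceptual heart is a convexity reduction. Because $a = 2-b-c$ is affine, $(id_3 \otimes \Phi_{b,c}^\dagger)(\ketbra{v}{v})$ is an affine function of $(b,c)$ for each fixed $\ket{v}$. Since $\lambda_{\max}$ is convex and $\lambda_{\min}$ is concave on Hermitian matrices, the functions
\[
	g(b,c) \defeq \sup_{\ket{v}}\lambda_{\max}\big((id_3\otimes\Phi_{b,c}^\dagger)(\ketbra{v}{v})\big), \qquad h(b,c) \defeq \inf_{\ket{v}}\lambda_{\min}\big((id_3\otimes\Phi_{b,c}^\dagger)(\ketbra{v}{v})\big)
\]
are respectively convex (a supremum of convex functions) and concave (an infimum of concave functions) in $(b,c)$. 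A convex function attains its maximum over a polytope at a vertex, and a concave function attains its minimum at a vertex, so over the entire hull $g(b,c) \leq M$ and $h(b,c) \geq m$, where $M$ and $m$ are the largest vertex value of $g$ and the smallest vertex value of $h$. Because $f$ is non-decreasing, it now suffices to establish the single numerical inequality $M \leq f(m)$: then for any $\ket{v}$ and any $(b,c)$ in the hull we obtain $\mu_1 \leq g(b,c) \leq M \leq f(m) \leq f(h(b,c)) \leq f(\ell)$, and Lemma~\ref{lem:eigs_imp_sfs} completes the proof (with $\ell \geq m > -\tfrac{1}{2}$).

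What remains is to bound $g$ and $h$ at the four vertices, and this is where I expect the main obstacle to lie. For each vertex map I would reuse the two semidefinite programs from the proof of Lemma~\ref{lem:choi_eigs}: the lower bound $h(b,c) \geq \tfrac{1}{2}\big(1 - \|\Phi_{b,c}^\dagger\|_\diamond\big)$ from the diamond-norm program~\eqref{sdp:diamond_norm} (exhibiting an explicit feasible $(Y_0,Y_1)$), and an upper bound on $g$ from the PPT relaxation~\eqref{sdp:eig_ub} after removing the ancilla via \cite[Theorem~4]{Wat05} and \cite{NOP09} (exhibiting an explicit feasible dual $Y$). The $b \leftrightarrow c$ symmetry, which corresponds to passing to the dual map and preserves both norms, identifies vertices $(b)$ and $(d)$, so only three genuinely distinct computations are required. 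The delicate point is that the peculiar coordinates $3(\sqrt{2}-1)$ and $6/5$ are presumably chosen so that $M = f(m)$ holds with equality (or minimal slack), meaning the analytic primal/dual certificates at the vertices must be produced with the same care as the explicit matrices $Y_0,Y_1$ and $Y$ found in Lemma~\ref{lem:choi_eigs}.
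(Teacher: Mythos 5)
Your setup agrees with the paper's in every structural respect: the duality $\Phi_{b,c}^\dagger = \Phi_{c,b}$, unitality giving $\Tr(W)=1$, the splitting $\Phi_{b,c} = \tfrac{1}{2}(\Psi_{b,c} - id_3)$ with $\Psi_{b,c}$ completely positive for $b+c\leq 3$ (hence at most one negative eigenvalue, so $\ell$ is the minimal eigenvalue), and the two SDP certificates you defer are exactly Lemma~\ref{lem:choi_eigs_gen} and Lemma~\ref{lem:choi_eigs_gen_ub1} of the paper. But your convexity reduction contains a genuine error: the decoupled inequality $M \leq f(m)$ is \emph{false} for this polytope, even though the theorem is true. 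At vertex $(c)$ the certified bounds are $\ell \geq -\tfrac{2}{5}$ and $\mu_1 \leq \tfrac{3}{5}$, and these are exactly critical since $f(-\tfrac{2}{5}) = \tfrac{3}{5}$; hence $m \leq -\tfrac{2}{5}$ and $f(m) \leq \tfrac{3}{5}$. Meanwhile at vertex $(b)$ the eigenvalue upper bound is $\tfrac{1}{7}(9-3\sqrt{2}) \approx 0.6796 > \tfrac{3}{5}$, so already $M > f(m)$ using only vertices $(b)$ and $(c)$. Vertex $(a)$ makes it worse: $\Phi_{0,0}$ is completely positive and unital, and taking $\ket{v} = \ket{1}\otimes\ket{1}$ gives $(id_3\otimes\Phi_{0,0}^\dagger)(\ketbra{v}{v}) = \ketbra{1}{1}\otimes\ketbra{1}{1}$, so $g(0,0) = 1$ exactly and $M = 1$. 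The problem is that Lemma~\ref{lem:eigs_imp_sfs} pairs $\mu_1$ and $\ell$ of the \emph{same} witness; your reduction demands that the worst $\mu_1$ anywhere on the hull be dominated by $f$ of the worst $\ell$ anywhere else, which fails because the family interpolates between an almost-CP regime (large $\mu_1$, but $\ell \approx 0$, harmless pointwise) and a strongly non-CP regime (small $\mu_1$ bound, very negative $\ell$). Nor can you repair this with the sharper pointwise interpolation $\sum_i\theta_i g(P_i) \leq f\bigl(\sum_i \theta_i h(P_i)\bigr)$: at the midpoint of the $(b)$--$(c)$ edge the left side is $\tfrac{1}{2}\bigl(\tfrac{1}{7}(9-3\sqrt{2}) + \tfrac{3}{5}\bigr) \approx 0.640$, while the right side lands on the flat middle branch of $f$ and equals $\tfrac{1}{4}(1+\sqrt{2}) \approx 0.604$. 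Since $f$ is not concave (it has a constant piece and an upward jump), convexity of the eigenvalue bounds simply cannot certify the interior.

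The paper closes the hull by a different, and essentially trivial, convexity argument applied to the \emph{conclusion} rather than to the eigenvalue bounds: because $a = 2-b-c$, the assignment $(b,c)\mapsto \Phi_{b,c}$ is affine, so for fixed $\rho$ the operator $(id_3\otimes\Phi_{b,c})(\rho)$ is the corresponding convex combination of the four vertex operators $(id_3\otimes\Phi_{P_i})(\rho)$; if each of those is positive semidefinite, so is the combination. Lemma~\ref{lem:eigs_imp_sfs} then only needs to be invoked vertex by vertex, where the pairing is correct: $(a)$ is trivial (CP); $(b)$ and $(d)$ (identified by your $b\leftrightarrow c$ symmetry, as in the paper) use $\ell \geq \tfrac{1}{2}(1-\sqrt{2})$ from Lemma~\ref{lem:choi_eigs_gen} together with $\mu_1 \leq \tfrac{1}{7}(9-3\sqrt{2}) \leq \tfrac{1}{4}(2+\sqrt{2}) = f\bigl(\tfrac{1}{2}(1-\sqrt{2})\bigr)$; and $(c)$ uses $\ell \geq -\tfrac{2}{5}$ and $\mu_1 \leq \tfrac{3}{5} = f(-\tfrac{2}{5})$ with zero slack. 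So the vertex coordinates are tuned to make the \emph{per-vertex} inequalities tight, not a global one. Replace your $M \leq f(m)$ step with this affinity argument and supply the explicit SDP feasible points at the vertices (the content of Lemma~\ref{lem:choi_eigs_gen} and Appendix~A), and your outline becomes the paper's proof.
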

As with Theorem~\ref{thm:choi_from_spec}, our method of proof is to come up with bounds on the eigenvalues of $(id_3 \otimes \Phi_{b,c}^\dagger)(\ketbra{v}{v})$. By convexity arguments, it would suffice for our purposes to prove these bounds only for the four points described by Theorem~\ref{thm:gen_choi_from_spec}, but we prove bounds that hold for all $(b,c)$ since it is not much more work.
\begin{lemma}\label{lem:choi_eigs_gen}
	Let $\ket{v} \in \mathbb{C}^3 \otimes \mathbb{C}^3$ and suppose $b,c \geq 0$ satisfy $b+c \leq 3$. Then the eigenvalues of $(id_3 \otimes \Phi_{b,c}^\dagger)(\ketbra{v}{v})$ are no smaller than $-\frac{1}{6}(b+c)$.
\end{lemma}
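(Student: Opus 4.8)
The strategy I would follow is identical to that of the proof of Lemma~\ref{lem:choi_eigs}: bound the smallest eigenvalue $\mu_9$ of the witness $W := (id_3\otimes\Phi_{b,c}^\dagger)(\ketbra{v}{v})$ in terms of the diamond norm of $\Phi_{b,c}^\dagger$. Two elementary observations make this run cleanly. First, since $a+b+c=2$ the map $\Phi_{b,c}$ is unital, so its dual $\Phi_{b,c}^\dagger$ is trace-preserving and hence ${\rm Tr}(W)=\sum_{i=1}^9\mu_i=1$ as in Equation~\eqref{eq:eig_sum_1}. Second, a direct computation of the dual shows that $\Phi_{b,c}^\dagger=\Phi_{c,b}$: the off-diagonal part of $\Phi_{b,c}$ is self-dual, while its diagonal action is governed by the matrix $\big[\begin{smallmatrix}a&b&c\\c&a&b\\b&c&a\end{smallmatrix}\big]$, whose transpose merely interchanges $b$ and $c$. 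Exactly as in Inequalities~\eqref{eq:eig_sum_2}--\eqref{eq:eig_sum_3}, combining ${\rm Tr}(W)=1$ with $\sum_{i=1}^9|\mu_i|=\|W\|_{\textup{tr}}\le\|\Phi_{b,c}^\dagger\|_\diamond$ gives
\begin{align*}
	\mu_9 \ge \tfrac{1}{2}\big(1-\|\Phi_{b,c}^\dagger\|_\diamond\big).
\end{align*}
It therefore suffices to prove the diamond-norm bound $\|\Phi_{b,c}^\dagger\|_\diamond\le 1+\tfrac{1}{3}(b+c)$, since this immediately yields $\mu_9\ge-\tfrac{1}{6}(b+c)$.

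To establish this bound I would exhibit an explicit feasible point of the diamond-norm semidefinite program~\eqref{sdp:diamond_norm} for $\Phi_{b,c}^\dagger=\Phi_{c,b}$, generalizing the matrix used in Lemma~\ref{lem:choi_eigs}. The Choi matrix $J(\Phi_{c,b})$ splits into a $3\times3$ block on $\mathrm{span}\{\ket{11},\ket{22},\ket{33}\}$ with diagonal entries $a/2$ and off-diagonal entries $-1/2$, together with a diagonal piece on the remaining six basis vectors whose entries are the nonnegative numbers $b/2$ and $c/2$. Taking $Y_0=Y_1$, the positive-semidefiniteness constraint of~\eqref{sdp:diamond_norm} reduces (after conjugating the $2\times2$ block form) to the pair of conditions $Y_0-J(\Phi_{c,b})\ge0$ and $Y_0+J(\Phi_{c,b})\ge0$. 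On the six-dimensional diagonal part I would set $Y_0=J(\Phi_{c,b})$, and on the $\{\ket{ii}\}$ block I would take the circulant matrix $\big[\begin{smallmatrix}d&e&e\\e&d&e\\e&e&d\end{smallmatrix}\big]$ with $d=(4+a)/6$ and $e=(1-2a)/6$, which is the unique choice making $Y_0-J$ a nonnegative rank-one multiple of the all-ones matrix $E$ and making $Y_0+J$ positive semidefinite with a zero eigenvalue.

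The payoff is that ${\rm Tr}_2(Y_0)$ collapses to the scalar matrix $\big(d+\tfrac12(b+c)\big)I_3$, so the objective value of~\eqref{sdp:diamond_norm} equals $d+\tfrac12(b+c)=(4+a)/6+\tfrac12(b+c)=1+\tfrac13(b+c)$ after substituting $a=2-b-c$. The hypothesis $b+c\le3$ enters precisely here, as it is equivalent to $1+a\ge0$: the nontrivial eigenvalues appearing in the block verification are $(2-a)/2$, $(1+a)/2$, $2-a$, and $a+1$ (together with zeros), and $b+c\le3$ is exactly what guarantees the two involving $1+a$ are nonnegative (the ones involving $2-a$ being automatic from $b,c\ge0$). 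The only step requiring genuine care is this eigenvalue bookkeeping on the $3\times3$ block, since that is simultaneously where the positivity constraints are checked and where the value $1+\tfrac13(b+c)$ is produced; everything else is routine linear algebra. I expect no real obstacle beyond verifying these finitely many scalar inequalities, and the resulting bound is tight, reproducing the known diamond-norm values $4/3$ and $5/3$ at the Choi map $(b,c)=(1,0)$ and the reduction map $(b,c)=(1,1)$.
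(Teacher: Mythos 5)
Your proof is correct and takes essentially the same route as the paper: the same reduction $\mu_9 \geq \tfrac{1}{2}\big(1-\|\Phi_{b,c}^\dagger\|_\diamond\big)$ via Inequality~\eqref{eq:eig_sum_3}, the same diamond-norm SDP~\eqref{sdp:diamond_norm}, and in fact your feasible point is entry-for-entry identical to the paper's (your $d=(4+a)/6=(6-b-c)/6$ and $e=(1-2a)/6=(2b+2c-3)/6$ on the $\{\ket{ii}\}$ block, together with $J(\Phi_{c,b})$ on the remaining diagonal positions, reproduce exactly the displayed $Y_0=Y_1$, and your objective value $1+\tfrac{1}{3}(b+c)=\tfrac{1}{3}(3+b+c)$ matches). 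The only difference is presentational: you derive the feasible point and check positivity via the $Y_0\mp J\geq 0$ reduction rather than simply exhibiting the matrix and verifying it directly.
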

\begin{proof}
	We note that the nuts and bolts of the proof are almost identical to the proof of the lower bound given in Lemma~\ref{lem:choi_eigs}, so it suffices to just present a solution to the semidefinite program~\eqref{sdp:diamond_norm} that works for all $\Phi_{b,c}$ rather than just $\Phi_C$. Indeed, it is straightforward to check that the following is a feasible point of the semidefinite program:
	\begin{align*}
		Y_0=Y_1=\frac{1}{6}\begin{bmatrix}
		6-b-c & \cdot & \cdot & \cdot & 2b+2c-3 & \cdot & \cdot & \cdot & 2b+2c-3 \\
		\cdot & 3b & \cdot & \cdot & \cdot & \cdot & \cdot & \cdot & \cdot \\
		\cdot & \cdot & 3c & \cdot & \cdot & \cdot & \cdot & \cdot & \cdot \\
		\cdot & \cdot & \cdot & 3c & \cdot & \cdot & \cdot & \cdot & \cdot \\
		2b+2c-3 & \cdot & \cdot & \cdot & 6-b-c & \cdot & \cdot & \cdot & 2b+2c-3 \\
		\cdot & \cdot & \cdot & \cdot & \cdot & 3b & \cdot & \cdot & \cdot \\
		\cdot & \cdot & \cdot & \cdot & \cdot & \cdot & 3b & \cdot & \cdot \\
		\cdot & \cdot & \cdot & \cdot & \cdot & \cdot & \cdot & 3c & \cdot \\
		2b+2c-3 & \cdot & \cdot & \cdot & 2b+2c-3 & \cdot & \cdot & \cdot & 6-b-c
		\end{bmatrix},
	\end{align*}
	where we have used the constraints on $b$ and $c$ given in the statement of the lemma to ensure that the positive semidefiniteness requirements of the SDP~\eqref{sdp:diamond_norm} are satisfied. The corresponding value of the objective function is
	\begin{align*}
		\big\|{\rm Tr}_2(Y_0)\big\| & = \frac{1}{3} \left\| \begin{bmatrix} 3+b+c & 0 & 0 \\ 0 & 3+b+c & 0 \\ 0 & 0 & 3+b+c\end{bmatrix} \right\| \\
		& = \frac{1}{3}(3+b+c).
	\end{align*}
	It follows that $\|\Phi_{b,c}^\dagger\|_\diamond \leq \frac{1}{3}(3+b+c)$, so from Equation~\eqref{eq:eig_sum_3} it follows that if $\mu_9$ is the minimal eigenvalue of $(id_3 \otimes \Phi_{b,c}^\dagger)(\ketbra{v}{v})$ then $\mu_9 \geq \frac{1}{2}(1 - \|\Phi_{b,c}^\dagger\|_\diamond) \geq -\frac{1}{6}(b+c)$, as desired.
\end{proof}

Upper bounds on the eigenvalues of $(id_3 \otimes \Phi_{b,c}^\dagger)(\ketbra{v}{v})$ seem to be much messier than the lower bound given by Lemma~\ref{lem:choi_eigs_gen}, so we defer their discussion to Appendix~A, where we complete the proof of Theorem~\ref{thm:gen_choi_from_spec}.

Theorem~\ref{thm:gen_choi_from_spec} shows analytically that all of the maps in the dark shaded region of Figure~\ref{fig:phi_bc} are unable to detect entanglement in absolutely PPT states. It is natural to ask whether or not the same is true of the positive maps in the light shaded region. We do not have an analytic proof that this is the case, but numerical evidence suggests that it is. In particular, we randomly generated $10^8$ pure states $\ket{v}$ and values of $(b,c)$ in the light shaded region of Figure~\ref{fig:phi_bc}, and found that every time the eigenvalues of $(id_3 \otimes \Phi_{b,c}^\dagger)(\ketbra{v}{v})$ satisfied the hypotheses of Lemma~\ref{lem:eigs_imp_sfs}.

\subsection{Breuer--Hall Map}\label{sec:breuer_hall}

The \emph{Breuer--Hall map} \cite{Bre06,Hal06} is a positive map on $M_n$, where $n \geq 4$ is even, that is defined as follows:
\begin{align*}
	\Phi_{BH}(X) \defeq \frac{1}{n-2}\left(\Tr(X)I - X - V X^{T} V^{\dagger}\right),
\end{align*}
where $V \in M_{n}$ is a given unitary matrix that is skew symmetric (anti-symmetric), i.e. $V^{T} = -V$. We note that the reason for restricting to even $n$ is because such unitary matrices exist if and only if $n$ is even. One such unitary matrix is as follows:
\begin{align*}
	V = \begin{bmatrix}
		0 & 0 & \cdots & 0 & 1 \\
		0 & 0 & \cdots & 1 & 0 \\
		\vdots & \vdots & \rddots & \vdots & \vdots \\
		0 & -1 & \cdots & 0 & 0 \\
		-1 & 0 & \cdots & 0 & 0		
	\end{bmatrix},
\end{align*}
and it follows straightforwardly from \cite[Theorem~2.3]{MMX10} that for our purposes it suffices to restrict attention to this particular $V$ (i.e., there exists a Breuer--Hall map detecting entanglement in some absolutely PPT state if and only if the Breuer--Hall map that arises from this particular $V$ detects entanglement in some absolutely PPT state).

Much like the Choi map and generalized Choi maps, the Breuer--Hall map is capable of detecting entanglement in PPT states. We now show, in a manner similar to the proof of Theorem~\ref{thm:choi_from_spec}, that these maps \emph{cannot} detect entanglement in absolutely PPT states. Phrased differently, we show that the set of ``absolutely Breuer--Hall'' states:
\begin{align*}
	\mathcal{BH}_{\textup{abs}} & \defeq \big\{ \rho : (id_{n} \otimes \Phi_{BH})(U\rho U^\dagger) \geq 0 \quad \forall \, \text{unitary } U \big\}
\end{align*}
is a superset of the absolutely PPT states: $\mathcal{PPT}_{\textup{abs}} \subseteq \mathcal{BH}_{\textup{abs}}$.
\begin{theorem}\label{thm:breuerhall_from_spec}
	Let $n \geq 4$ be even. If $\rho \in M_{n} \otimes M_{n}$ is absolutely PPT then $(id_{n} \otimes \Phi_{BH})(\rho) \geq 0$.
\end{theorem}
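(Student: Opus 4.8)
The plan is to follow exactly the same strategy that worked for the Choi map (Theorem~\ref{thm:choi_from_spec}), using Lemma~\ref{lem:eigs_imp_sfs} as the main tool. First I would rephrase the condition $(id_n \otimes \Phi_{BH})(\rho) \not\geq 0$ in terms of an entanglement witness: if this fails, there is a unit vector $\ket{v} \in \bb{C}^n \otimes \bb{C}^n$ with $\bra{v}(id_n \otimes \Phi_{BH})(\rho)\ket{v} < 0$, so that $W := (id_n \otimes \Phi_{BH}^\dagger)(\ketbra{v}{v})$ is a witness detecting entanglement in $\rho$. Since $\Phi_{BH}^\dagger$ is trace-preserving (being a positive unital map up to scaling, its dual preserves trace), we have ${\rm Tr}(W) = 1$ after the appropriate normalization, so $W$ is exactly the kind of operator Lemma~\ref{lem:eigs_imp_sfs} applies to. It then suffices to produce bounds on $\ell = (1 - \|W\|_{\textup{tr}})/2$ and on the largest eigenvalue $\mu_1$ of $W$, and to check that $\mu_1 \leq f(\ell)$.

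The key technical step, analogous to Lemma~\ref{lem:choi_eigs}, is an eigenvalue lemma bounding the spectrum of $(id_n \otimes \Phi_{BH}^\dagger)(\ketbra{v}{v})$. For the lower bound I would again use the diamond-norm estimate~\eqref{eq:eig_sum_3}: $\mu_{\min} \geq \tfrac{1}{2}(1 - \|\Phi_{BH}^\dagger\|_\diamond)$. The diamond norm can be bounded by exhibiting a feasible point $(Y_0, Y_1)$ of the SDP~\eqref{sdp:diamond_norm}, where one must compute $J(\Phi_{BH}^\dagger)$ explicitly. Since $\Phi_{BH}(X) = \tfrac{1}{n-2}(\Tr(X)I - X - VX^{T}V^\dagger)$ is a sum of the reduction-type map and a ``twisted transpose'' term $VX^TV^\dagger$, its Choi matrix decomposes into a reduction-map piece plus a term involving $V$; the skew-symmetry $V^T = -V$ should make the extra piece manageable and give a clean symmetric ansatz for $Y_0 = Y_1$, just as in Lemma~\ref{lem:choi_eigs_gen}. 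For the upper bound on $\mu_1$ I would use the same reduction via \cite[Theorem~4]{Wat05} and \cite{NOP09} to pass to $\sup_{\ket{v}} \|\Phi_{BH}^\dagger(\ketbra{v}{v})\|$ and then to an optimization over product states, which I would bound above by the PPT relaxation SDP~\eqref{sdp:eig_ub}; here a single dual-feasible $Y$ certifies the bound analytically.

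A cleaner route for the negative part, however, exploits structure: writing $\Phi_{BH} = (\Phi - id_n - (\text{twist}))/(n-2)$ where $\Phi$ is completely positive, and noting that the map $X \mapsto VX^TV^\dagger$ is a positive (indeed a $*$-antihomomorphism-type) map, one may argue as in the Choi proof that $W$ has at most a controlled number of negative eigenvalues, so that $\ell$ is not too negative. The crucial numerical check at the end is that the computed pair $(\ell, \mu_1)$ lands in the region $\mu_1 \leq f(\ell)$; because $\Phi_{BH}$ becomes ``more completely positive'' as $n$ grows (the $1/(n-2)$ factor shrinks both the largest eigenvalue and $|\ell|$), I expect the hypotheses of Lemma~\ref{lem:eigs_imp_sfs} to be comfortably satisfied for all even $n \geq 4$, with the tightest case being $n = 4$.

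\textbf{Main obstacle.} The hard part will be obtaining sufficiently sharp, $n$-\emph{uniform} bounds on $\mu_1$ and on $\|W\|_{\textup{tr}}$ (hence $\ell$), since unlike the Choi case these must hold for every even $n$ rather than a single dimension. Guessing the right symmetric feasible points $(Y_0,Y_1)$ for the diamond-norm SDP and $Y$ for SDP~\eqref{sdp:eig_ub} as explicit functions of $n$ — while respecting the skew-symmetric $V$ — is where the genuine work lies; verifying $\mu_1 \leq f(\ell)$ afterward should be a short computation once those bounds are in hand.
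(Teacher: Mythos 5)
Your roadmap coincides with the paper's proof step for step: rephrase a violation as a witness $W = (id_n \otimes \Phi_{BH}^\dagger)(\ketbra{v}{v})$ (and your normalization remark is right --- $\Phi_{BH}$ is unital, so $\Phi_{BH}^\dagger$ is trace-preserving and $\Tr(W)=1$ automatically), bound its extreme eigenvalues via the dual of the SDP~\eqref{sdp:eig_ub} and via the diamond-norm SDP~\eqref{sdp:diamond_norm}, then invoke Lemma~\ref{lem:eigs_imp_sfs}. The genuine gap is the one you flag yourself: the proposal never produces the feasible points, and those are essentially the entire content of the paper's Lemma~\ref{lem:breuerhall_eigs}; without concrete values of $\mu_1$ and $\ell$ the final check $\mu_1 \leq f(\ell)$ cannot be performed, so what you have is a correct plan, not a proof. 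For the record, the certificates are simpler than the elaborate $V$-respecting ansatz you anticipate --- both are built from the maximally entangled state. Taking $Y = \tfrac{n}{n-2}(I\otimes V)\ketbra{\psi^+}{\psi^+}(I\otimes V^\dagger)$ in the dual of~\eqref{sdp:eig_ub} cancels the twisted-swap part of $J(\Phi_{BH}^\dagger)$, leaving $\tfrac{1}{n-2}\big(I - n\ketbra{\psi^+}{\psi^+}\big)$ and hence $\mu_1 \leq 1/(n-2)$; taking $Y_0 = Y_1 = J(\Phi_{BH}^\dagger) + 2\ketbra{\psi^+}{\psi^+}$ in~\eqref{sdp:diamond_norm} (feasible since $Y_0 - \ketbra{\psi^+}{\psi^+}$ is, up to scaling, an orthogonal projection) gives $\big\|\Phi_{BH}^\dagger\big\|_\diamond \leq (n+2)/n$ and thus, via~\eqref{eq:eig_sum_3}, $\ell \geq -1/n$. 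Your worry about $n$-uniform bounds then evaporates, because $f$ can be taken non-decreasing: $f(-1/n) \geq f(-1/4) = (1+\sqrt{2})/4 > 1/2 \geq 1/(n-2)$ for all even $n \geq 4$, so $n=4$ is indeed the tight case, exactly as you predicted.

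One substantive correction to your ``cleaner route'' for the negative part: positivity of the twist $X \mapsto VX^{T}V^\dagger$ does \emph{not} control the number of negative eigenvalues of $W$ --- the partially transposed pure state $(id_n \otimes T)(\ketbra{v}{v})$, and hence its conjugate by $I \otimes V$, can have as many as $n(n-1)/2$ negative eigenvalues, so subtracting it as a separate term proves nothing about the count. The grouping that works (and that underlies the paper's remark that $W$ has at most one negative eigenvalue) is $\Phi_{BH} = \tfrac{1}{n-2}(\Psi - id_n)$ where $\Psi(X) := \Tr(X)I - VX^{T}V^\dagger$ is \emph{completely positive}: its Choi matrix is $I - S$ conjugated by $I \otimes V$, with $S$ the swap operator, and $I - S \geq 0$. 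Then $W$ is $\tfrac{1}{n-2}$ times a positive semidefinite operator minus the rank-one $\ketbra{v}{v}$, so it has at most one negative eigenvalue, and combined with $\mu_{\min} \geq -1/n$ this yields $\ell \geq -1/n$. (Alternatively, you can bypass eigenvalue counting entirely: $\|W\|_{\textup{tr}} \leq \big\|\Phi_{BH}^\dagger\big\|_\diamond$ already, so the diamond-norm certificate bounds $\ell$ directly.)
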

Just like in the proof of Theorem~\ref{thm:choi_from_spec}, our goal is to bound the eigenvalues of operators of the form $(id_n \otimes \Phi^{\dagger}_{BH})(\ketbra{v}{v})$. The following lemma provides such a bound.
\begin{lemma}\label{lem:breuerhall_eigs}
	Let $\ket{v} \in \mathbb{C}^n \otimes \mathbb{C}^n$, where $n \geq 4$ is even. Then the eigenvalues of $(id_n \otimes \Phi^{\dagger}_{BH})(\ketbra{v}{v})$ are contained within the interval $[-1/n, 1/(n-2)]$. 
\end{lemma}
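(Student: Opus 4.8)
The plan is to establish the two endpoints of the interval separately, following the blueprint of Lemma~\ref{lem:choi_eigs}. The essential preliminary simplification is that $\Phi_{BH}$ is \emph{self-dual}: each of the three pieces $X\mapsto\Tr(X)I$, $X\mapsto X$, and $X\mapsto VX^TV^\dagger$ is its own dual map (for the last one this uses $V^T=-V$ together with unitarity of $V$), so $\Phi_{BH}^\dagger=\Phi_{BH}$ and I may analyze $(id_n\otimes\Phi_{BH})(\ketbra vv)$ throughout. I would also record that $\Phi_{BH}$ is unital, i.e. $\Phi_{BH}(I)=I$, which forces $\Phi_{BH}$ to be trace-preserving as well; hence the eigenvalues $\mu_1\ge\cdots\ge\mu_{n^2}$ of $(id_n\otimes\Phi_{BH})(\ketbra vv)$ sum to $1$, exactly as in Equation~\eqref{eq:eig_sum_1}.

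For the upper bound $\mu_1\le 1/(n-2)$ I would first pass to a single copy of $\mathbb{C}^n$: since $\mu_1\le\sup_{\ket v}\|(id_n\otimes\Phi_{BH})(\ketbra vv)\|$, \cite[Theorem~4]{Wat05} lets me drop the ancilla and reduce to $\sup_{\ket u\in\mathbb{C}^n}\|\Phi_{BH}(\ketbra uu)\|$. Writing $\ket w:=V\ket{\bar u}$, a direct computation gives $\Phi_{BH}(\ketbra uu)=\tfrac{1}{n-2}(I-\ketbra uu-\ketbra ww)$. The key point is that $\braket uw=\bra uV\ket{\bar u}=\sum_{ij}V_{ij}\bar u_i\bar u_j=0$, because the antisymmetric $V$ is contracted against the symmetric tensor $\bar u_i\bar u_j$. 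Thus $\ketbra uu+\ketbra ww$ is an orthogonal rank-two projection, $I-\ketbra uu-\ketbra ww$ has operator norm $1$, and $\|\Phi_{BH}(\ketbra uu)\|=1/(n-2)$ identically; taking the supremum yields $\mu_1\le 1/(n-2)$.

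For the lower bound $\mu_{n^2}\ge -1/n$ I would reuse the diamond-norm argument of Lemmas~\ref{lem:choi_eigs} and~\ref{lem:choi_eigs_gen}: combining $\sum_i\mu_i=1$ with $\sum_i|\mu_i|\le\|\Phi_{BH}^\dagger\|_\diamond$ gives, exactly as in Inequality~\eqref{eq:eig_sum_3}, the bound $\mu_{n^2}\ge\tfrac12(1-\|\Phi_{BH}^\dagger\|_\diamond)$. It therefore suffices to prove $\|\Phi_{BH}^\dagger\|_\diamond\le (n+2)/n$, since then $\mu_{n^2}\ge\tfrac12(1-(n+2)/n)=-1/n$. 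As in the Choi cases, I would establish this bound by exhibiting an explicit feasible point $Y_0=Y_1$ of the semidefinite program~\eqref{sdp:diamond_norm} whose objective value equals $(n+2)/n$; with $Y_0=Y_1$ and $J(\Phi_{BH})$ Hermitian, the matrix inequality there collapses to the two conditions $Y_0\pm J(\Phi_{BH})\ge 0$, and I would look for $Y_0$ with $\Tr_2(Y_0)=\tfrac{n+2}{n}I$, so that its operator norm is $(n+2)/n$.

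The main obstacle is this last step: writing down the feasible $Y_0=Y_1\in M_n\otimes M_n$ for \emph{arbitrary} even $n$ and the skew-symmetric $V$, and then verifying both positive semidefiniteness and the value of its partial trace. The generalized-Choi matrix used in Lemma~\ref{lem:choi_eigs_gen} is the natural template, but the presence of $V$ (rather than the plain transpose) means the off-diagonal support of $J(\Phi_{BH})$, and hence of $Y_0$, must be tracked carefully as a function of the entries of $V$. By contrast the upper bound is clean and exact, and the fact that $\|\Phi_{BH}(\ketbra uu)\|=1/(n-2)$ for \emph{every} product input is consistent with the interval $[-1/n,1/(n-2)]$ being tight.
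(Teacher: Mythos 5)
Your upper-bound argument is correct and takes a genuinely different (and cleaner) route than the paper's. The self-duality observation $\Phi_{BH}^\dagger=\Phi_{BH}$ is right, and so is the key orthogonality: for a unit vector $\ket{u}$ and $\ket{w}:=V\ket{\bar u}$, antisymmetry of $V$ contracted against the symmetric tensor $\bar u_i\bar u_j$ gives $\braket{u}{w}=0$, so $\Phi_{BH}(\ketbra{u}{u})=\tfrac{1}{n-2}(I-\ketbra{u}{u}-\ketbra{w}{w})$ with $\ketbra{u}{u}+\ketbra{w}{w}$ a rank-two orthogonal projection, whence $\|\Phi_{BH}(\ketbra{u}{u})\|=1/(n-2)$ exactly; combined with the ancilla-removal step via \cite[Theorem~4]{Wat05} (used identically in Lemma~\ref{lem:choi_eigs}) this yields $\mu_1\le 1/(n-2)$. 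The paper instead exhibits the feasible point $Y=\tfrac{n}{n-2}(I\otimes V)\ketbra{\psi^+}{\psi^+}(I\otimes V^\dagger)$ in the dual of the SDP~\eqref{sdp:eig_ub}; your computation is the primal shadow of the same structural fact, and it buys a little more, namely that the value $1/(n-2)$ is attained on every product input.

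The lower bound, however, has a genuine gap, which you acknowledge: the entire content of the paper's proof of $\mu_{n^2}\ge -1/n$ is the explicit feasible point of the diamond-norm SDP~\eqref{sdp:diamond_norm}, and you do not produce it. Your framework is exactly right (Inequality~\eqref{eq:eig_sum_3}, target $\|\Phi_{BH}^\dagger\|_\diamond\le(n+2)/n$, ansatz $Y_0=Y_1$ with $\Tr_2(Y_0)=\tfrac{n+2}{n}I$, and the correct reduction of the block constraint to $Y_0\pm J(\Phi_{BH})\ge 0$), but nothing softer will close it: bounding $\|\Phi_{BH}\|_\diamond$ by subadditivity over its three summands gives $\tfrac{1}{n-2}(n+1+n)=(2n+1)/(n-2)$, far above $(n+2)/n$, so the cancellation encoded in an explicit $Y_0$ is essential. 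The paper's choice is $Y_0=Y_1=J(\Phi_{BH}^\dagger)+2\ketbra{\psi^+}{\psi^+}$, and its verification uses precisely the antisymmetry you already isolated: writing $\tilde S:=(I\otimes V)S(I\otimes V^\dagger)$ with $S$ the swap operator, one has $J(\Phi_{BH}^\dagger)=\tfrac{1}{n-2}\big(I-n\ketbra{\psi^+}{\psi^+}-\tilde S\big)$, and since $V\bar V=-I$ one gets $\tilde S\ket{\psi^+}=-\ket{\psi^+}$, so $P:=Y_0-\ketbra{\psi^+}{\psi^+}=\tfrac{2}{n-2}\big(\tfrac12(I-\tilde S)-\ketbra{\psi^+}{\psi^+}\big)$ is a scaled orthogonal projection; this gives $Y_0\ge 0$, the block constraint holds because the block matrix splits as a sum of two positive semidefinite blocks built from $P$ and $\ketbra{\psi^+}{\psi^+}$, and $\Tr_2(\tilde S)=I$ yields $\Tr_2(Y_0)=\tfrac{n+2}{n}I$, meeting your own ansatz. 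With that matrix supplied, your proof closes; without it, the claimed endpoint $-1/n$ is not established.
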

\begin{proof}
The proof follows the same construction as the proof of Theorem~\ref{thm:choi_from_spec} and uses the same semidefinite programs. Let $\mu_1 \geq \cdots \geq \mu_{n^2}$ be the eigenvalues of $(id_{n} \otimes \Phi_{BH}^{\dagger})(\ket{v}\bra{v})$. We first show that $\mu_1 \leq 1/(n-2)$. To achieve this, we must find a feasible point $Y$ of the SDP~\eqref{sdp:eig_ub} (replacing $\Phi_C$ by $\Phi_{BH}$) such that the objective function has the value $1/(n-2)$ at that point. One choice of $Y$ that works is $Y = \frac{n}{n-2}(I \otimes V)\ketbra{\psi^+}{\psi^+}(I \otimes V^\dagger)$, which is clearly positive semidefinite and thus a feasible point. The corresponding value of the objective function is
\begin{align*}
	\lambda_{\textup{max}}\big( J(\Phi_{BH}^\dagger) + (id_n \otimes T)(Y) \big) = \frac{1}{n-2}\lambda_{\textup{max}}\big( I - n\ketbra{\psi^+}{\psi^+} \big) = \frac{1}{n-2},
\end{align*}
as desired.

We next show that $\mu_{n^2} \geq -1/n$. To proceed, we calculate the diamond norm $\|\Phi^{\dagger}_{BH}\|_{\diamond}$ using the SDP~\eqref{sdp:diamond_norm}, where we optimize over $Y_0, Y_1 \in M_n \otimes M_n$. We now show that the following choice of $Y_0$ and $Y_1$ is a dual feasible point that achieves this bound: 
\begin{align*}
	Y_0 = Y_1 & = J(\Phi_{BH}^\dagger) + 2\ketbra{\psi^+}{\psi^+}.
\end{align*}
	It is straightforward to show that $P := Y_0 - \ketbra{\psi^+}{\psi^+}$ is (up to scaling) an orthogonal projection, so $Y_0$ is positive semidefinite. Next observe that
	\begin{align*}
	\begin{bmatrix}
		Y_0 & -J(\Phi_{BH}^\dagger) \\ -J(\Phi_{BH}^\dagger)^\dagger & Y_1
	\end{bmatrix} = \begin{bmatrix}
		P & -P \\ -P & P
	\end{bmatrix} + \begin{bmatrix}
		\ketbra{\psi^+}{\psi^+} & \ketbra{\psi^+}{\psi^+} \\ \ketbra{\psi^+}{\psi^+} & \ketbra{\psi^+}{\psi^+}
	\end{bmatrix} \geq 0,
	\end{align*}
	from which it follows that the given choice of $Y_0,Y_1$ is a feasible point of the SDP~\eqref{sdp:diamond_norm}.
	
	To compute the corresponding value of the objective function, we note that ${\rm Tr}_2(Y_0) = \frac{n}{n-2}I - \frac{1}{n-2}I - \frac{1}{n-2}I + \frac{2}{n}I = \frac{n+2}{n}I$, so $\frac{1}{2}\big\|{\rm Tr}_2(Y_0)\big\| + \frac{1}{2}\big\|{\rm Tr}_2(Y_1)\big\| = \frac{n+2}{n}$. It follows that $\big\|\Phi_{BH}^\dagger\big\|_\diamond \leq \frac{n+2}{n}$. By plugging this upper bound into Inequality~\eqref{eq:eig_sum_3} we see that $\mu_{n^2} \geq -1/n$, as desired, which concludes the proof.
\end{proof}

Now that we have bounds on the eigenvalues of the witnesses of the form $(id_n \otimes \Phi^{\dagger}_{BH})(\ketbra{v}{v})$, we can prove the main result of this section.
\begin{proof}[Proof of Theorem~\ref{thm:breuerhall_from_spec}]
	By Lemma~\ref{lem:breuerhall_eigs}, we know that every entanglement witness $W$ of the form $W = (id_n \otimes \Phi_{BH}^\dagger)(\ketbra{v}{v})$ has eigenvalues in the interval $[-1/n, 1/(n-2)]$. From the definition of $\Phi_{BH}$ it is straightforward to see that every such $W$ has at most $1$ negative eigenvalue, so we know that $\ell := (1 - \|W\|_{\textup{tr}})/2 \geq -1/n$ as well.
	
	For all $n \geq 4$, we have $f(\ell) = f(-1/n) \geq f(-1/4) = (1+\sqrt{2})/4 \approx 0.6035\ldots \geq 1/2 \geq 1/(n-2) = \mu_1$. It follows from Lemma~\ref{lem:eigs_imp_sfs} that $W$ cannot detect entanglement in any absolutely PPT state, so neither can $\Phi_{BH}$.
\end{proof}

\section{Special Classes of States}\label{sec:special_states}

Up until now we have focused on proving that certain separability criteria are incapable of detecting entanglement in absolutely PPT states. Now we shift focus a bit and prove that the absolute separability and absolute PPT properties coincide on certain important families of quantum states.

\subsection{Werner States}\label{sec:werner}

The first family of states that we investigate are the \emph{Werner states} \cite{W89}, which are defined via the single real parameter $\alpha \in [-1,1]$ by
\begin{align*}
	\rho = (I - \alpha S)/(n^2 - n\alpha) \in M_n \otimes M_n,
\end{align*}
where $S \in M_n \otimes M_n$ is the \emph{swap} operator defined by $S(\ket{a}\otimes\ket{b}) = \ket{b}\otimes\ket{a}$. It is well-known that the Werner state $\rho$ is separable if and only if it is PPT if and only if $\alpha \leq 1/n$. We now investigate when these states are absolutely separable and absolutely PPT.
\begin{theorem}\label{thm:werner}
	Let $\rho = (I - \alpha S)/(n^2 - n\alpha) \in M_n \otimes M_n$ be a Werner state. Then $\rho$ is absolutely separable if $-1/n \leq \alpha \leq 1/n$, and it is not absolutely PPT if $\alpha < -1/(n-1)$ or $\alpha > 1/n$.
\end{theorem}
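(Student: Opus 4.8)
The plan is to record the spectrum of $\rho$ once and for all, then treat the absolute-separability claim and the two "not absolutely PPT" ranges separately. Since $S$ has eigenvalue $+1$ on the symmetric subspace (dimension $n(n+1)/2$) and $-1$ on the antisymmetric subspace (dimension $n(n-1)/2$), and $n^2 - n\alpha = n(n-\alpha) > 0$ for all $|\alpha|\le 1$, the eigenvalues of $\rho$ are
\[
	p_- := \frac{1-\alpha}{n(n-\alpha)} \quad(\text{multiplicity } n(n+1)/2), \qquad p_+ := \frac{1+\alpha}{n(n-\alpha)} \quad(\text{multiplicity } n(n-1)/2),
\]
with $p_+ < p_-$ precisely when $\alpha < 0$. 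For the absolute-separability claim I would write $\rho$ as a positive multiple of $I - X$ with $X = \alpha S$. Since $S$ is Hermitian with $S^2 = I$, we have $\|X\|_F = |\alpha|\,\|S\|_F = |\alpha|\sqrt{\Tr(I)} = |\alpha|\,n$, which is $\le 1$ exactly when $-1/n \le \alpha \le 1/n$. The Gurvits--Barnum result \cite[Theorem~1]{GB02} (already invoked in the proof of Proposition~\ref{prop:rank_of_ppt_spec}) says that every operator of the form $I - X$ with $\|X\|_F \le 1$ is absolutely separable, and since absolute separability is invariant under positive rescaling, $\rho$ is absolutely separable throughout this range.

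For the "not absolutely PPT" claim, write $\Gamma := id_n \otimes T$. The case $\alpha > 1/n$ is immediate: taking $U = I$ and using $\Gamma(S) = n\,\ketbra{\psi^+}{\psi^+}$ shows $\Gamma(\rho)$ has the eigenvalue $(1-n\alpha)/(n(n-\alpha)) < 0$, so $\rho$ is not even PPT, hence not absolutely PPT. The substantive case is $\alpha < -1/(n-1)$, which is vacuous for $n = 2$ (there $-1/(n-1) = -1 \le \alpha$) and where $\rho$ is genuinely PPT, so a nontrivial unitary rotation must be exhibited. I would fix a unit vector $\ket{\phi}$ and rewrite the obstruction $\bra{\phi}\Gamma(U\rho U^\dagger)\ket{\phi} < 0$ as $\Tr\big(U\rho U^\dagger\,\Gamma(\ketbra{\phi}{\phi})\big) < 0$ using self-duality of $\Gamma$ under the trace pairing, and then minimize over $U$ via Lemma~\ref{lem:unitary_min}.

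The key computation is that for $\ket{\phi} = \sum_i s_i\,\ket{i}\ket{i}$ in Schmidt form, the operator $\Gamma(\ketbra{\phi}{\phi})$ has eigenvalues $s_i^2$ together with $\pm s_i s_j$ for $i<j$. Hence by Lemma~\ref{lem:unitary_min} the minimum of $\Tr\big(U\rho U^\dagger\,\Gamma(\ketbra{\phi}{\phi})\big)$ over unitary $U$ equals $p_- + \Sigma(p_+ - p_-)$, where $\Sigma$ is the sum of the $n(n-1)/2$ largest eigenvalues of $\Gamma(\ketbra{\phi}{\phi})$. Since $p_+ - p_- < 0$, I want $\Sigma$ as large as possible; taking $\ket{\phi}$ maximally entangled on an $(n-1)$-dimensional subspace (i.e.\ $s_1 = \cdots = s_{n-1} = 1/\sqrt{n-1}$ and $s_n = 0$) makes $\Gamma(\ketbra{\phi}{\phi})$ have exactly $n(n-1)/2$ eigenvalues equal to $1/(n-1)$, giving $\Sigma = n/2$. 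Substituting $p_+ - p_- = 2\alpha/(n(n-\alpha))$ yields a minimum value of $(1 + (n-1)\alpha)/(n(n-\alpha))$, which is negative precisely when $\alpha < -1/(n-1)$. The minimizing $U$ then certifies that $\Gamma(U\rho U^\dagger)\not\geq 0$, so $\rho$ is not absolutely PPT.

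I expect the main obstacle to be identifying the correct witness $\ket{\phi}$: the obvious choice $\ket{\psi^+}$ gives only $\Sigma = (n-1)/2$ and the weaker threshold $-1/(n-2)$, and one must see that concentrating the entanglement on an $(n-1)$-dimensional subspace is exactly what sharpens the threshold to the claimed $-1/(n-1)$. For the theorem it suffices to exhibit this single good choice, so I would not need to prove that $\Sigma = n/2$ is the maximal achievable value, only to verify the eigenvalue bookkeeping of $\Gamma(\ketbra{\phi}{\phi})$ and the resulting pairing in Lemma~\ref{lem:unitary_min}.
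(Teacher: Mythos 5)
Your proof is correct, and its absolute-separability half coincides with the paper's: both rescale to $I - \alpha S$, compute $\|\alpha S\|_F = |\alpha|n$, and invoke the Gurvits--Barnum ball \cite[Theorem~1]{GB02} together with unitary invariance of the Frobenius norm. For the negative claims, however, you take a genuinely different route. The paper stays inside Hildebrand's machinery: it plugs the two-point spectrum $\{1\pm\alpha\}$ into the LMIs of \cite{Hil07} and reads off minimum eigenvalues $2-2n\alpha$ (failing iff $\alpha>1/n$) and $2+2(n-1)\alpha$ (failing iff $\alpha<-1/(n-1)$). You avoid Hildebrand entirely: for $\alpha>1/n$ you note $\rho$ is not even PPT via $(id_n\otimes T)(S)=n\ketbra{\psi^+}{\psi^+}$ --- a legitimate shortcut, since the paper itself records the PPT threshold just before the theorem --- and for $\alpha<-1/(n-1)$ you construct an explicit violated constraint from self-duality of the partial transpose, the spectrum $\{s_i^2,\pm s_is_j\}$ of a partially transposed pure state, and Lemma~\ref{lem:unitary_min}. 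I checked your bookkeeping: for your $\ket{\phi}$ maximally entangled on an $(n-1)$-dimensional subspace, the witness has exactly $n(n-1)/2$ eigenvalues $+1/(n-1)$, $(n-1)(n-2)/2$ eigenvalues $-1/(n-1)$, and $2n-1$ zeros, so the $n(n-1)/2$ copies of $p_+$ pair precisely with the positive block, $\Sigma=n/2$, and the minimum $p_-+\Sigma(p_+-p_-)=(1+(n-1)\alpha)/(n(n-\alpha))$ is negative exactly when $\alpha<-1/(n-1)$, as you state. In effect you have re-derived by hand the one Hildebrand constraint the paper invokes (this is no accident --- those LMIs arise from minimizing $\bra{\phi}(id_n\otimes T)(U\rho U^\dagger)\ket{\phi}$ over unitaries and Schmidt vectors), and your remark about $\ket{\psi^+}$ giving only the threshold $-1/(n-2)$ correctly identifies why the rank-$(n-1)$ choice is needed. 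What your route buys is self-containedness and an explicit certificate (a concrete witness and pairing unitary) without appealing to the nontrivial characterization of \cite{Hil07}; what the paper's route buys is that checking \emph{all} the LMIs decides absolute PPT exactly, which is how the paper then numerically probes the remaining window $[-1/(n-1),-1/n)$ --- something a single witness family cannot settle.
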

\begin{proof}
	Throughout this proof, we work with the operator $X := I - \alpha S$, which is equal to $\rho$ up to normalization (and thus has the same separability and PPT properties as $\rho$). We first prove that $|\alpha| \leq 1/n$ implies that $X$ is absolutely separable. To this end, notice that $UXU^\dagger = I - \alpha USU^\dagger$. Since $\|\alpha USU^\dagger\|_2 = \|\alpha S\|_2 = |\alpha|\|S\|_2 = |\alpha| n$, it follows from \cite[Theorem~1]{GB02} that $UXU^\dagger$ is separable whenever $|\alpha| n \leq 1$ (i.e., $|\alpha| \leq 1/n$).
	
	To prove the claim about when the given state is not absolutely PPT, we split into two separate cases.
	
	\textbf{Case 1: $\alpha > 1/n$}. Our goal is to show that $X$ is not absolutely PPT. Notice that the eigenvalues of $X$ are $1 - \alpha$ with multiplicity $n(n+1)/2$ and $1+\alpha$ with multiplicity $n(n-1)/2$. It follows from a straightforward calculation that each of the LMIs described by \cite{Hil07} corresponding to the set of absolutely PPT states is of the form
	\begin{align*}
		\begin{bmatrix}2-2\alpha & -2\alpha & -2\alpha & \cdots & -2\alpha \\
		-2\alpha & 2-2\alpha & -2\alpha & \cdots & -2\alpha \\
		-2\alpha & -2\alpha & 2-2\alpha & \cdots & -2\alpha \\
		\vdots & \vdots & \vdots & \ddots & \vdots \\
		-2\alpha & -2\alpha & -2\alpha & \cdots & 2-2\alpha\end{bmatrix} \geq 0.
	\end{align*}
	It is straightforward to see that the minimum eigenvalue of the above matrix is $2 - 2n\alpha$, so the matrix is positive semidefinite if and only if $2-2n\alpha \geq 0$, which does not hold in this case.
	
	\textbf{Case 2: $\alpha < -1/(n-1)$}. Again, our goal is to show that $X$ is not absolutely PPT. As in case~1 above, the eigenvalues of $X$ are $1 - \alpha$ with multiplicity $n(n+1)/2$ and $1+\alpha$ with multiplicity $n(n-1)/2$. However, this time there are many different LMIs corresponding to the absolutely PPT condition, since it is now the \emph{larger} eigenvalue that has higher multiplicity. One of the LMIs is
	\begin{align*}
		L_1 := \begin{bmatrix}2+2\alpha & 2\alpha & 2\alpha & \cdots & 2\alpha & 0 \\
		2\alpha & 2+2\alpha & 2\alpha & \cdots & 2\alpha & 0 \\
		2\alpha & 2\alpha & 2+2\alpha & \cdots & 2\alpha & 0 \\
		\vdots & \vdots & \vdots & \ddots & \vdots & \vdots \\
		2\alpha & 2\alpha & 2\alpha & \cdots & 2+2\alpha & 0 \\
		0 & 0 & 0 & \cdots & 0 & 2-2\alpha\end{bmatrix} \geq 0.
	\end{align*}
	Similar to in case~1, the minimum eigenvalue of the above matrix is $2+2(n-1)\alpha$, so the positive semidefiniteness constraint is satisfied if and only if $\alpha \geq -1/(n-1)$. In particular, this one LMI shows that $X$ is not absolutely PPT when $\alpha < -1/(n-1)$.
\end{proof}

Theorem~\ref{thm:werner} leaves open the question of whether or not Werner states are absolutely separable (or even absolutely PPT) when $-1/(n-1) \leq \alpha < -1/n$. We have used Hildebrand's characterization to numerically show that these states are all indeed absolutely PPT for all $n \leq 140$ (we note that even though there are an astronomical number of constraints in Hildebrand's characterization when $n = 140$, the vast majority of them are identical since so many of the eigenvalues of these states are equal). Furthermore, when $n = 3$ and $\alpha = -1/(n-1)$, we have numerically tested $10^4$ states of the form $U\rho U^\dagger$, and have not been able to find entanglement in any of them, suggesting that these states are absolutely separable if and only if they are absolutely PPT if and only if $\alpha \in [-1/(n-1), 1/n]$.

\subsection{Isotropic States}\label{sec:isotropic}

The family of isotropic states is defined by the single real parameter $\alpha \in \left[ -\frac{1}{n^2-1}, 1 \right]$ via
\begin{align}
\label{eq:isotropic-state}
\rho = \frac{1-\alpha}{n^2}I + \alpha \ket{\psi^+}\bra{\psi^+} \in M_n \otimes M_n,
\end{align}
where we recall that $\ket{\psi^+} = \frac{1}{\sqrt{n}} \sum_{i=1}^n \ket{i}\otimes\ket{i}$ is the standard maximally-entangled state. It is well-known that the isotropic state $\rho$ is separable if and only if it is PPT if and only if $\alpha \leq \frac{1}{n+1}$. We now show that the isotropic states that are absolutely separable and those that are absolutely PPT also coincide. 

\begin{theorem}
\label{thm:isotropic-thm}
Let $\rho = \frac{1-\alpha}{n^2}I + \alpha \ketbra{\psi^+}{\psi^+} \in M_n \otimes M_n$ be an isotropic state. Then $\rho$ is absolutely separable if and only if $\rho$ is absolutely PPT if and only if $\alpha \leq \frac{2}{2+n^2}$. \end{theorem}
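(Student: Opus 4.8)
The plan is to prove the chain of implications (absolutely separable) $\Rightarrow$ (absolutely PPT) $\Rightarrow$ ($\alpha \le \tfrac{2}{2+n^2}$) $\Rightarrow$ (absolutely separable); the first arrow is free, since separability of every $U\rho U^\dagger$ forces each to be PPT. Everything else rests on the spectrum of $\rho$. As $\ketbra{\psi^+}{\psi^+}$ is a rank-one projection, for $\alpha \ge 0$ the state has one large eigenvalue $\lambda_1 = \tfrac{1+(n^2-1)\alpha}{n^2}$ together with $n^2-1$ copies of $\lambda_* = \tfrac{1-\alpha}{n^2}$; in particular $\lambda_{n^2} = \lambda_{n^2-1} = \lambda_{n^2-2} = \lambda_*$ and $\lambda_{n^2-1} - \lambda_1 = -\alpha$. (For $\alpha < 0$ the roles reverse: the isolated eigenvalue becomes the smallest, and such $\alpha$ already satisfy the claimed bound.)

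For (absolutely PPT) $\Rightarrow$ ($\alpha \le \tfrac{2}{2+n^2}$) I would feed these eigenvalues into the necessary $2\times2$ condition~\eqref{eq:ppt_from_spec_submatrix}, which becomes $\left[\begin{smallmatrix} 2\lambda_* & -\alpha \\ -\alpha & 2\lambda_* \end{smallmatrix}\right] \ge 0$. This is positive semidefinite exactly when $2\lambda_* \ge |\alpha|$, and for $\alpha \ge 0$ that inequality rearranges to $(n^2+2)\alpha \le 2$, i.e. $\alpha \le \tfrac{2}{2+n^2}$. Thus absolute PPT already pins down the parameter range, with no need to examine Hildebrand's larger list of LMIs.

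The real work is ($\alpha \le \tfrac{2}{2+n^2}$) $\Rightarrow$ (absolutely separable). Because the spectrum is ``one large, rest equal,'' every unitary conjugate of $\rho$ is of the form $\sigma = \lambda_* I + (\lambda_1 - \lambda_*)\ketbra{w}{w}$, and as $U$ ranges over all unitaries the vector $\ket{w} = U\ket{\psi^+}$ ranges over all unit vectors; hence absolute separability is equivalent to separability of $\sigma$ for every $\ket{w}$. Writing $\ket{w} = \sum_{i=1}^{r} s_i \ket{ii}$ in Schmidt form (allowed, as local unitaries preserve separability), the bump lives in the $\mathbb{C}^r \otimes \mathbb{C}^r$ block spanned by the $\ket{ii}$, and relative to the local splittings $\mathbb{C}^n = \mathbb{C}^r \oplus \mathbb{C}^{n-r}$ the operator $\sigma$ is block diagonal: three of its blocks are multiples of the identity (hence separable) and the remaining block is the ``filtered isotropic'' state $\sigma_r = \lambda_* I_{r^2} + (\lambda_1 - \lambda_*)\ketbra{w}{w}$. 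Since a state that is block diagonal with respect to local orthogonal decompositions is separable as soon as each block is, it suffices to show $\sigma_r$ is separable. For $r \le 2$ this is immediate: PPT and separability coincide on the resulting $2\times2$ block, and the binding PPT inequality $\lambda_* \ge (\lambda_1-\lambda_*)s_1 s_2$ is worst when $s_1 = s_2 = \tfrac{1}{\sqrt2}$, which reproduces precisely $\lambda_1 \le 3\lambda_* \Longleftrightarrow \alpha \le \tfrac{2}{2+n^2}$.

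The main obstacle is the higher Schmidt-rank blocks: for $r \ge 3$ the state $\sigma_r$ is a genuine $r\times r$ state on which PPT no longer implies separability, so I cannot simply quote the partial-transpose computation. My plan here is to argue that rank-two, equal-weight vectors are the \emph{globally} hardest, by comparing $\sigma_r$ to the genuinely isotropic (equal-coefficient) block, which is separable exactly for $\lambda_1 \le (r+1)\lambda_*$ and hence a fortiori whenever $\lambda_1 \le 3\lambda_*$ and $r \ge 2$; one then checks that making the Schmidt coefficients unequal only decreases the entanglement and so cannot destroy separability, for instance by exhibiting an explicit product decomposition of $\sigma_r$ at the threshold. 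Some care is required because the naive pairwise decomposition---splitting the bump into $2\times2$ isotropic pieces and charging each to the diagonal identity weight---over-counts the available weight $\lambda_*$ and yields a strictly suboptimal bound, so the charging must be done globally rather than pair by pair. Finally, the $\alpha < 0$ regime (where $\sigma$ is $\lambda_* I$ minus a rank-one bump) is handled by the same block argument and is strictly easier, with the Gurvits--Barnum ball of~\cite{GB02} already covering a neighborhood of the maximally mixed state.
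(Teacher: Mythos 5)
Your chain of implications and your first two arrows are sound and essentially identical to the paper's: the paper likewise obtains $\alpha \le \frac{2}{2+n^2}$ by feeding the eigenvalues $\lambda_1 = \alpha + \frac{1-\alpha}{n^2}$, $\lambda_i = \frac{1-\alpha}{n^2}$ into Hildebrand's LMIs, which here reduce to the $2\times 2$ condition~\eqref{eq:ppt_from_spec_submatrix} and give $n^2\alpha \le 2-2\alpha$. Your block-diagonal reduction to the ``filtered isotropic'' state $\sigma_r$ and the $r\le 2$ computation are also correct. The genuine gap is exactly where you flag it: the Schmidt-rank $r\ge 3$ case. There you offer only a plan---``making the Schmidt coefficients unequal only decreases the entanglement,'' to be verified ``for instance by exhibiting an explicit product decomposition of $\sigma_r$ at the threshold''---and no such decomposition is exhibited. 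This is not a routine check: it is precisely the content of the theorem of Vidal and Tarrach \cite{VT99}, which is what the paper invokes at this exact point. That theorem states that $tI + \ketbra{w}{w}$ is separable if and only if $t \ge \gamma_1\gamma_2$, the product of the two largest Schmidt coefficients of $\ket{w}$; with it the proof is immediate (and your entire block reduction becomes unnecessary), since $\gamma_1\gamma_2 \le \tfrac12$ always, while $\alpha \le \frac{2}{2+n^2}$ gives $\frac{1-\alpha}{n^2\alpha} \ge \tfrac12$.

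Moreover, the comparison argument you sketch would not go through as stated. The natural implementation---write $\ket{w} = (A\otimes I)\ket{\psi^+_r}$ and push a separable decomposition of the equal-coefficient block through the local filter---fails because the filter distorts the identity part: $(A\otimes I)\bigl(\lambda_* I + \alpha\ketbra{\psi^+_r}{\psi^+_r}\bigr)(A^\dagger\otimes I) = \lambda_*\, AA^\dagger\otimes I + \alpha'\ketbra{w}{w}$, which is not $\sigma_r$. And ``less entangled'' is not the relevant invariant: the separability threshold depends only on $\gamma_1\gamma_2$, not on an entanglement monotone of $\ket{w}$; for instance the rank-$r$ maximally entangled vector is more entangled than a balanced rank-$2$ vector by most measures, yet its block is separable on a strictly larger parameter range ($\gamma_1\gamma_2 = 1/r < 1/2$). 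So your heuristic identifies the correct worst case, but the proposed monotonicity step is a restatement of the needed theorem rather than a proof of it; you must either cite \cite{VT99} or reprove it, and the latter is a substantial piece of work. (Your side remarks are fine: the $\alpha<0$ regime does follow from the Gurvits--Barnum ball, since $|\alpha|\le \frac{1}{n^2-1}$ forces the perturbation to have Frobenius norm at most $1$ relative to the identity.)
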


\begin{proof}
We first show that if $\rho$ is absolutely PPT then $\alpha \leq \frac{2}{2+n^2}$. Notice that the eigenvalues $\lambda_1 \geq \lambda_2 \geq \cdots \geq \lambda_{n^2}$ of $\rho$ are 
\begin{equation*}
\begin{aligned}
\lambda_{1} = \alpha + \frac{1-\alpha}{n^2}, \quad \lambda_{i} = \frac{1-\alpha}{n^2} \quad \forall \, 2 \leq i \leq n^2.
\end{aligned}
\end{equation*}
Since we are assuming that $\rho$ is absolutely PPT, we can plug these eigenvalues into the LMIs described in \cite{Hil07} to obtain
\begin{equation*}
\begin{aligned}
 \frac{1}{n^2} \begin{bmatrix} 2 - 2 \alpha & -n^2\alpha & 0 & \cdots & 0 \\ 
																							-n^2\alpha & 2 - 2 \alpha & 0 & \cdots & 0 \\
																							0 & 0 & 2 - 2 \alpha & \cdots & 0 \\
																							\vdots & \vdots & \vdots & \ddots & \vdots \\
																							0 & 0 & 0 & \cdots & 2 - 2 \alpha
							  							\end{bmatrix} \geq 0.
\end{aligned}
\end{equation*}
The positivity of the above matrix implies that $n^2\alpha \leq 2 - 2\alpha$, so $\alpha \leq \frac{2}{2+n^2}$, as desired.

We now prove that if $\alpha \leq \frac{2}{2+n^2}$ then $\rho$ is absolutely separable. First notice that $U\rho U^\dagger = \frac{1-\alpha}{n^2}I + \alpha \ketbra{v}{v}$, where $\ket{v} := U\ket{\psi^+}$. If we absorb constants in a different way, we obtain
\begin{align}\label{eq:id_robustness}
	U\rho U^\dagger = \alpha\left(\frac{1-\alpha}{n^2\alpha}I + \ketbra{v}{v}\right),
\end{align}
which we want to show is separable for all $\ket{v}$.

To this end, we recall that it was shown in \cite{VT99} that the operator~\eqref{eq:id_robustness} is separable if and only if $\frac{1-\alpha}{n^2\alpha} \geq \gamma_1\gamma_2$, where $\gamma_1 \geq \gamma_2$ are the two largest Schmidt coefficients of $\ket{v}$. We can see that this is true for all $\ket{v}$ simply by noting that $\frac{1-\alpha}{n^2\alpha} \geq 1/2$ (which follows from the fact that $\alpha \leq \frac{2}{2+n^2}$), and $1/2 \geq \gamma_1\gamma_2$ (which can be seen by maximizing $\gamma_1\gamma_2$ subject to the constraint that $\gamma_1^2 + \gamma_2^2 \leq 1$). It follows that $\rho$ is absolutely separable, as desired.
\end{proof}

\subsection{UPB States}\label{sec:upb_states}

One of the more well-known methods of constructing PPT entangled states comes from the notion of an \emph{unextendible product basis (UPB)} \cite{BDMSST99}, which is a set of mutually orthogonal product states with no other product state orthogonal to all of them. Given a UPB $\mathcal{S} \subset \mathbb{C}^m \otimes \mathbb{C}^n$, it is straightforward to check that the following state is PPT and entangled:
\begin{align}\label{eq:upb_state}
	\rho = \frac{1}{mn - |\mathcal{S}|}\left(I - \sum_{\ket{v_i} \in \mathcal{S}} \ketbra{v_i}{v_i}\right).
\end{align}

States of the form~\eqref{eq:upb_state} cannot possibly be absolutely PPT since they are entangled yet have rank $mn - |\mathcal{S}| < mn$, which contradicts Proposition~\ref{prop:rank_of_ppt_spec}. However, we can follow the approach of \cite{BGR05} by letting $0 < p < 1$ be a real number and considering the full-rank state $\rho_p := pI/(mn) + (1-p)\rho$. As $p$ increases from $0$ to $1$, the state $\rho_p$ becomes arbitrarily close to the maximally-mixed state $I/(mn)$ and thus becomes absolutely PPT (and even absolutely separable) when $p$ is large enough. We now investigate the question of whether or not there exists $\rho_p$ that is absolutely PPT but not absolutely separable.
\begin{theorem}\label{thm:upb_sep_from_spec}
	Let $\rho \in M_3 \otimes M_3$ be a state constructed via a UPB, as in Equation~\eqref{eq:upb_state}. Then $\rho_p$ is absolutely PPT if and only if $p \geq 9(10-\sqrt{17})/83 \approx 0.6373\ldots$ Furthermore, $\rho_p$ is absolutely separable if $p \geq 1 - 1/\sqrt{10} \approx 0.6838\ldots$
\end{theorem}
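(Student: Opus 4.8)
The plan is to reduce both assertions to statements about the spectrum of $\rho_p$ alone, and then invoke Hildebrand's characterization of $\mathcal{PPT}_{\textup{abs}}$ for the ``if and only if'' and the Gurvits--Barnum ball of \cite{GB02} for the separability direction. First I would record the eigenvalues. Since the $|\mathcal{S}|=5$ vectors of the UPB are orthonormal product states, $\rho$ is $\tfrac14$ times the orthogonal projection onto the $4$-dimensional orthogonal complement of their span, so $\rho$ has eigenvalue $\tfrac14$ with multiplicity $4$ and eigenvalue $0$ with multiplicity $5$. Consequently $\rho_p=\tfrac{p}{9}I+(1-p)\rho$ has eigenvalues
\[
	a:=\frac{p}{9}+\frac{1-p}{4}\quad(\text{multiplicity }4),\qquad b:=\frac{p}{9}\quad(\text{multiplicity }5),
\]
with $a\ge b$, so that (sorted) $\lambda_1=\cdots=\lambda_4=a$ and $\lambda_5=\cdots=\lambda_9=b$.

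For the absolute PPT equivalence I would substitute this spectrum into the two LMIs $L_1,L_2$ of~\eqref{eq:ppt_from_spec_m3} (taking $3n=9$). Because $\lambda_5=\cdots=\lambda_9=b$, the only entries in which $L_1$ and $L_2$ differ ($\lambda_6$ versus $\lambda_7$) take the same value, so both LMIs collapse to the single condition that
\[
	M:=\begin{bmatrix} 2b & b-a & b-a \\ b-a & 2b & b-a \\ b-a & b-a & 2a \end{bmatrix}\ge 0.
\]
Writing $d:=a-b\ge 0$, the $2\times 2$ principal minors of $M$ stay nonnegative in the relevant range of $p$, and the binding condition is $\det(M)\ge 0$. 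Expanding and substituting $a=b+d$ gives a cubic in $u:=d/b$ that factors as $\det(M)=-2b^3(u+2)(2u^2-u-2)$, so for $u\ge 0$ one has $\det(M)\ge 0$ if and only if $2u^2-u-2\le 0$, i.e.\ $u\le\tfrac{1+\sqrt{17}}{4}$, i.e.\ $d\le\tfrac{1+\sqrt{17}}{4}\,b$. Substituting $d=(1-p)/4$ and $b=p/9$ and rearranging yields $9\le(10+\sqrt{17})p$, that is $p\ge 9/(10+\sqrt{17})=9(10-\sqrt{17})/83$, which is exactly the claimed threshold.

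For absolute separability I would use \cite[Theorem~1]{GB02} in the form already recalled in the paper: $I-X$ is separable whenever $\|X\|_F\le 1$. Since the Frobenius norm is unitarily invariant, $\|I-cU\rho_p U^\dagger\|_F=\|I-c\rho_p\|_F$ for every unitary $U$ and scalar $c$, so producing a single $c$ with $\|I-c\rho_p\|_F\le 1$ makes every $U\rho_p U^\dagger$ separable simultaneously, i.e.\ certifies absolute separability. Minimizing $\|I-c\rho_p\|_F^2=9-2c+c^2\,\Tr(\rho_p^2)$ over $c$ gives optimal $c=1/\Tr(\rho_p^2)$ and minimal value $9-1/\Tr(\rho_p^2)$, so the requirement $\le 1$ becomes $\Tr(\rho_p^2)=4a^2+5b^2\le\tfrac18$. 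Expanding this as a quadratic in $p$ reduces it to $10p^2-20p+9\le 0$, whose relevant root is $p=1-1/\sqrt{10}$; hence $\rho_p$ is absolutely separable for $p\ge 1-1/\sqrt{10}$.

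I expect the determinant computation to be the only genuine obstacle. One must verify that, among all principal minors of $M$, it is $\det(M)$ rather than a $2\times 2$ minor that is binding throughout the relevant interval of $p$, and the clean appearance of $\sqrt{17}$ rests entirely on the factor $2u^2-u-2$ of the cubic. The separability half is comparatively routine once one observes that unitary invariance of $\|\cdot\|_F$ converts the Gurvits--Barnum ball into a condition on the single number $\Tr(\rho_p^2)$, so that the only real content there is the elementary minimization over $c$.
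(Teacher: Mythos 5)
Your proposal is correct and follows essentially the same route as the paper: you substitute the spectrum $a$ (multiplicity $4$) and $b$ (multiplicity $5$) into Hildebrand's LMIs --- your matrix $M$ is exactly $36$ times the matrix appearing in the paper's proof --- and you invoke the Gurvits--Barnum ball for the separability half, with your determinant factorization and the optimization over the scalar $c$ merely making explicit what the paper dismisses as ``straightforward to check'' (the paper fixes $p = 1-1/\sqrt{10}$ and $c=8$, implicitly relying on convexity for larger $p$, while your quadratic $10p^2-20p+9\le 0$ covers the whole interval $p \ge 1-1/\sqrt{10}$ directly, a slight tidiness gain). The only loose end is that $|\mathcal{S}|=5$ for every UPB in $\mathbb{C}^3\otimes\mathbb{C}^3$ is a nontrivial fact that deserves a citation rather than an assertion (the paper cites \cite{DMSST03}).
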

\begin{proof}
	We first show that $\rho_p$ is absolutely PPT if and only if $p \geq 9(10-\sqrt{17})/83$. We first recall that all UPBs in $\mathbb{C}^3 \otimes \mathbb{C}^3$ have five states \cite{DMSST03}, so $\rho_p$ has eigenvalues $p/9$ with multiplicity $5$ and $p/9 + (1-p)/4 = (9-5p)/36$ with multiplicity $4$. It follows that the two LMIs determining whether or not $\rho_p$ is absolutely PPT are both as follows:
	\begin{align}\label{eq:upb_rho_lmi}
		\frac{1}{36}\begin{bmatrix}
			8p & 9p-9 & 9p-9 \\
			9p-9 & 8p & 9p-9 \\
			9p-9 & 9p-9 & 18-10p \\
		\end{bmatrix} \geq 0.
	\end{align}
	It is straightforward to check that the LMI~\eqref{eq:upb_rho_lmi} holds if and only if $p \geq 9(10-\sqrt{17})/83$, as claimed.
	
	To see that $\rho_p$ is absolutely separable when $p \geq 1 - 1/\sqrt{10}$, note that it suffices to consider the $p = 1 - 1/\sqrt{10}$ case. Define the operator $X := 8\rho_p = 8(1 - 1/\sqrt{10})I/9 + 8\rho/\sqrt{10}$. Then
	\begin{align*}
		\|X-I\|_F^2 & = \|8\rho/\sqrt{10} - (1 + 8/\sqrt{10})I/9\|_F^2 \\
		& = 4\big(2/\sqrt{10} - (1 + 8/\sqrt{10})/9\big)^2 + 5\big((1 + 8/\sqrt{10})/9\big)^2 \\
		& = 1.
	\end{align*}
	It follows from \cite[Theorem~1]{GB02} that $X$ is absolutely separable, so $\rho_p$ is too.
\end{proof}

From Theorem~\ref{thm:upb_sep_from_spec} we see that there is an interval approximately equal to $[0.6373,0.6838)$ for which $\rho_p$ is absolutely PPT, but we do not know whether or not it is absolutely separable. In the $p = 0.6373$ case, we have tried to detect entanglement in the state $U\rho_p U^\dagger$ for $10^5$ randomly-generated (according to Haar measure) unitary matrices $U$. We used every entanglement criterion that is known to us, including the tests based on covariance matrices \cite{GGHE08} and the extremely strong tests based on the $3$-copy PPT symmetric extensions \cite{DPS04}, and no entanglement was ever detected in any of these states. These numerical results seem to suggest that the state $\rho_p$ is absolutely separable when $p = 0.6373$ (and thus for all $p \in [0.6373,0.6838)$).

\section{Outlook} \label{sec:outlook}

We have investigated the problem of detecting entanglement within absolutely PPT states. We have provided a general technique, in the form of Lemma~\ref{lem:eigs_imp_sfs}, that can be used to prove that a given positive map or entanglement witness cannot detect entanglement in any absolutely PPT state. We have successfully applied this technique to the Choi map and its generalization, the Breuer--Hall map, and even the realignment criterion, but it could also be applied to other positive maps and entanglement witnesses in the literature (see \cite{CS14}, for example).

Certainly the most notable open problem resulting from this work is to answer Question~\ref{ques:main}, but other interesting questions also arise from our work. For instance, are there other separability criteria that can be shown to be incapable of detecting entanglement in absolutely PPT states? There are separability criteria such as those based on covariance matrices \cite{GGHE08} and symmetric extensions \cite{DPS04} for which we still do not know the answer.

Other open problems in this work include proving that \emph{all} of the generalized Choi maps $\Phi_{b,c}$ in Section~\ref{sec:choi_generalize} are incapable of detecting absolutely PPT entanglement (including those in the light gray region of Figure~\ref{fig:phi_bc}), and proving that the UPB states of Section~\ref{sec:upb_states} are absolutely separable when $p \in [0.6373,0.6838)$. We have provided numerical evidence that both of these claims are true, but we have been unable to find an analytic proof of either fact.

It would also be beneficial to develop other easily-checkable conditions that imply that a given entanglement witness cannot detect entanglement in absolutely PPT states. Almost all of our results follow from Lemma~\ref{lem:eigs_imp_sfs}, which gives a test based on the witness's maximal eigenvalue and the sum of its negative eigenvalues. However, it suffers from the drawback that the function $f$ (see Figure~\ref{fig:intervals}) is not continuous, which makes it difficult to use sometimes. Are there other functions of the eigenvalues of an entanglement witness that can be used in its place?

{\bf Acknowledgements.} The authors thank John Watrous for numerous insightful discussions and suggestions, and the proof of Lemma~\ref{lem:wit_trace_bound}. The authors also thank Ronald de Wolf for helpful comments on the writing of this paper, and two anonymous referees who caught errors in an early version of this paper. S.A. started working on this project during his studies at the Institute for Quantum Computing at the University of Waterloo and is grateful to Michele Mosca for support from NSERC. N.J. thanks Leonel Robert for some helpful conversations about the set of absolutely separable states and acknowledges support from NSERC. V.R. acknowledges support from NSERC and the US ARO.   

\bibliographystyle{ieeetr}
\bibliography{sfs}

\section{Appendix A. Proof of Theorem~\ref{thm:gen_choi_from_spec}}\label{sec:gen_choi_ub}

We now prove some upper bounds on the eigenvalues of $(id_3 \otimes \Phi_{b,c}^\dagger)(\ketbra{v}{v})$, which allow us to prove Theorem~\ref{thm:gen_choi_from_spec}. We note that these upper bounds are much more complicated than the lower bound given by Lemma~\ref{lem:choi_eigs_gen}, so the upper bounds are illustrated in Figure~\ref{fig:gen_choi_ub} for clarity.
\begin{figure}[htb]
	\centering
\begin{tikzpicture}
\begin{axis}[
axis equal,
xlabel=$b$,
ylabel=$c$,
xmin=0, xmax=1.3333333333333333,
ymin=-0.16666666666, ymax=1.5,
xtick={0,0.3333333333333333,0.6666666666667,1,1.3333333333333333},
ytick={0,0.3333333333333333,0.6666666666667,1,1.3333333333333333},
xticklabels={$0$,$\tfrac{1}{3}$,$\tfrac{2}{3}$,$1$,$\tfrac{4}{3}$},
yticklabels={$0$,$\tfrac{1}{3}$,$\tfrac{2}{3}$,$1$,$\tfrac{4}{3}$},
grid=major,
]

\addplot [forget plot] graphics [xmin=0,xmax=1.3333333333,ymin=0,ymax=1.3333333333] {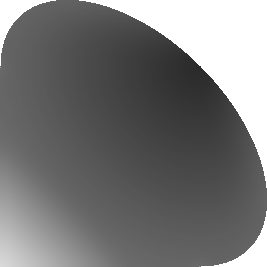};

\draw[domain=0:1,smooth,variable=\t,black,line width=1.5] plot ({100*\t*\t/(1-\t+\t*\t)},{100*(0.16666666666+1/(1-\t+\t*\t))});
\draw[domain=0:1,smooth,variable=\t,black,line width=1.5] plot ({100/(1-\t+\t*\t)},{100*(0.16666666666+\t*\t/(1-\t+\t*\t))});
\addplot[black,line width=0.5] coordinates {(1,0) (0,0) (0,1)};

\addplot[black,line width=0.5,dashed] coordinates {(0.3333,1.3333) (1,1) (1.3333,0.3333)};

\addplot[black,fill=white] (1,0) circle [radius=0.015];

\addplot[black,fill=white] (0,1) circle [radius=0.015];

\addplot[black,fill=white] (1,1) circle [radius=0.015];

\addplot[black,fill=white] (1.3333,0.3333) circle [radius=0.015];

\addplot[black,fill=white] (0.3333,1.3333) circle [radius=0.015];

\addplot[black,line width=0.5] coordinates {(1.25,0.5) (1.2,0.475)};
\node at (axis cs:1.2,0.475) [anchor=north east] {$2b + c = 3$};

\addplot[black,line width=0.5] coordinates {(0.5,1.25) (0.475,1.2)};
\node at (axis cs:0.6,1.2) [anchor=north east] {$b + 2c = 3$};
\end{axis}

\node[align=right] at (0.8,-1.6) {Best eigenvalue \\ upper bound};

\shade[shading=axis,bottom color=black!15!white,top color=black!83!white,shading angle=90] (2.45,-1.8) rectangle (7.4,-1.4);
\draw [black,line width=0.5] (2.45,-1.8) rectangle (7.4,-1.4);

\node[align=center] at (2.45,-2.12) {$\tfrac{1}{2}$};
\node[align=center] at (4.1,-2.12) {$\tfrac{2}{3}$};
\node[align=center] at (5.75,-2.12) {$\tfrac{5}{6}$};
\node[align=center] at (7.4,-2.12) {$1$};
\end{tikzpicture}
\caption{A plot of the best possible upper bound on the eigenvalues of $(id_3 \otimes \Phi_{b,c}^\dagger)(\ketbra{v}{v})$, which was computed numerically using the semidefinite program~\eqref{sdp:eig_ub}. This bound agrees with the bounds provided by Lemma~\ref{lem:choi_eigs_gen_ub1} whenever $b+c \geq \frac{2}{3}$. The area above and to the right of the two dashed lines is covered by the first case of that lemma, whereas the area below and to the left of the dashed lines is covered by the second case of the lemma.}\label{fig:gen_choi_ub}
\end{figure}

\begin{lemma}\label{lem:choi_eigs_gen_ub1}
	Let $\ket{v} \in \mathbb{C}^3 \otimes \mathbb{C}^3$ and suppose $b,c \geq 0$ are such that $b+c \geq \frac{2}{3}$. We split into two cases:
	\begin{itemize}
		\item If $2b + c \geq 3$ or $b + 2c \geq 3$, then the eigenvalues of $(id_3 \otimes \Phi_{b,c}^\dagger)(\ketbra{v}{v})$ are no larger than $\max\{b,c\}/2$.
		\item Otherwise, the eigenvalues of $(id_3 \otimes \Phi_{b,c}^\dagger)(\ketbra{v}{v})$ are no larger than
		\begin{align*}
			\frac{b^2+c^2-6(b+c)+bc+9}{6(2-b-c)}.
		\end{align*}
	\end{itemize}
\end{lemma}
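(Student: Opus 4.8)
The plan is to follow verbatim the strategy used to establish the upper bound $\mu_1 \leq 2/3$ in Lemma~\ref{lem:choi_eigs}, replacing the Choi map $\Phi_C$ by the generalized map $\Phi_{b,c}$. Writing $\mu_1$ for the largest eigenvalue of $(id_3 \otimes \Phi_{b,c}^\dagger)(\ketbra{v}{v})$, I would first bound $\mu_1$ by the operator norm $\sup_{\ket{v}}\big\|(id_3\otimes\Phi_{b,c}^\dagger)(\ketbra{v}{v})\big\|$, then invoke \cite[Theorem~4]{Wat05} to delete the factor $id_3$ exactly as in Inequality~\eqref{eq:mu1_ub2}, and finally rewrite the resulting quantity as the supremum~\eqref{eq:inf_norm_sep} of $(\bra{v}\otimes\bra{w})J(\Phi_{b,c}^\dagger)(\ket{v}\otimes\ket{w})$ over product vectors via \cite{NOP09}. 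Relaxing this product-state optimization to the set of PPT states gives exactly the semidefinite program~\eqref{sdp:eig_ub} with $\Phi_C$ replaced by $\Phi_{b,c}$, whose optimal value is an upper bound for $\mu_1$. It is worth recording first the symmetry $\Phi_{b,c}^\dagger=\Phi_{c,b}$ (a one-line computation of the dual map), which, together with the $b\leftrightarrow c$ symmetry of both claimed bounds and of the case hypotheses, lets me assume without loss of generality that $b\ge c$; in that regime the Case~1 hypothesis ``$2b+c\ge 3$ or $b+2c\ge 3$'' collapses to $2b+c\ge 3$ and the Case~1 bound to $b/2$.

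With this reduction in place, the entire content of the lemma is the construction of an explicit dual-feasible point of~\eqref{sdp:eig_ub}: a positive semidefinite $Y=Y(b,c)\in M_3\otimes M_3$ for which $\lambda_{\textup{max}}\big((id_3\otimes T)(Y)+J(\Phi_{b,c}^\dagger)\big)$ equals the asserted bound. I would treat the two cases with two separate families of certificates. For Case~2 the natural guess is a deformation of the matrix $Y$ already exhibited in the proof of Lemma~\ref{lem:choi_eigs}: the point $(b,c)=(1,0)$ lies in Case~2, and the Case~2 formula evaluates there to $\frac{1-6+9}{6}=\frac23$, in agreement with the Choi-map bound, so the sought $Y(b,c)$ should specialize to the Choi certificate at $(1,0)$. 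I would posit $Y(b,c)$ with the same sparsity pattern and entries of low degree in $b,c$, verify $Y\ge 0$ on the region $b+c\ge\tfrac23$, $2b+c<3$, $b+2c<3$, and then compute the top eigenvalue of $(id_3\otimes T)(Y)+J(\Phi_{b,c}^\dagger)$, confirming it equals the stated rational expression. For Case~1 the piecewise-linear bound $\max\{b,c\}/2$ suggests a simpler, lower-rank certificate; here I would engineer $Y$ so that $(id_3\otimes T)(Y)+J(\Phi_{b,c}^\dagger)$ becomes block structured with its dominant block attaining $\max\{b,c\}/2$, using the hypothesis $2b+c\ge 3$ (resp.\ $b+2c\ge 3$) precisely to guarantee $Y\ge 0$.

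As in the Choi-map proof, both certificates are most easily discovered by solving~\eqref{sdp:eig_ub} numerically at several values of $(b,c)$ (the solver returns the dual optimizer) and reading off a closed form, after which the analytic verification reduces to checking $2\times2$ and $3\times3$ positive semidefiniteness conditions, where the rank-one device $b^2=ac$ used in Lemmas~\ref{lem:eigs_imp_sfs} and~\ref{lem:choi_eigs} is likely to recur. I expect the main obstacle to be the Case~2 certificate: because the target bound is a genuinely rational function of $b$ and $c$, the entries of $Y(b,c)$ cannot all be affine, and one must tune them so that the relevant eigenvector of $(id_3\otimes T)(Y)+J(\Phi_{b,c}^\dagger)$ realizes exactly the stated value while every complementary direction stays below it throughout the parameter region. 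Establishing the \emph{global} positive semidefiniteness of $Y(b,c)$ over a two-dimensional region, rather than at the single point that sufficed for $\Phi_C$, is the step most prone to error, and it is precisely why this upper-bound argument is deferred to the appendix while the cleaner lower bound of Lemma~\ref{lem:choi_eigs_gen} remains in the main text.
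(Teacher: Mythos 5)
Your route is the paper's route: both reduce, via \cite[Theorem~4]{Wat05} and \cite{NOP09}, to the semidefinite program~\eqref{sdp:eig_ub} with $\Phi_C$ replaced by $\Phi_{b,c}$, and then certify each bound by exhibiting a dual-feasible $Y$. (Your preliminary observation $\Phi_{b,c}^\dagger = \Phi_{c,b}$ and the reduction to $b \geq c$ are correct, though the paper never needs them, since its certificates are symmetric under $b \leftrightarrow c$ together with the obvious relabelling.) The genuine gap is that you have written a plan for \emph{finding} the certificates rather than the certificates themselves, and these are the entire content of the lemma: ``solve~\eqref{sdp:eig_ub} numerically at several $(b,c)$ and read off a closed form'' is presumably how the authors discovered $Y$, but the proof consists of the displayed $Y$ plus the verification, and your proposal leaves both blank. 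Your structural prediction for Case~1 is also off: no engineered block certificate is needed, since the paper simply takes $Y = 0$, so feasibility is trivial, and the hypothesis ``$2b+c \geq 3$ or $b+2c \geq 3$'' is used not to make $Y$ positive semidefinite but to evaluate the objective. Indeed, $J(\Phi_{b,c}^\dagger)$ is diagonal with entries $b/2$ and $c/2$ apart from a central $3 \times 3$ block equal to $\tfrac{1}{2}\big((3-b-c)I - 3\ketbra{\psi^+}{\psi^+}\big)$ on $\operatorname{span}\{\ket{00},\ket{11},\ket{22}\}$, so $\lambda_{\textup{max}}\big(J(\Phi_{b,c}^\dagger)\big) = \max\{b,c,3-b-c\}/2$, which the case hypothesis collapses to $\max\{b,c\}/2$.

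For Case~2 the paper sets $x := (3-2b-c)^2/(6(2-b-c))$ and $y := (3-b-2c)^2/(6(2-b-c))$ and takes $Y$ supported on the six basis vectors $\ket{ij}$ with $i \neq j$, assembled from three $2 \times 2$ blocks with diagonal $(x,y)$ and off-diagonal $\sqrt{xy}$. Here two of your guesses are vindicated: the rank-one device from Lemmas~\ref{lem:eigs_imp_sfs} and~\ref{lem:choi_eigs} does recur (each block is rank one), and at $(b,c)=(1,0)$ one gets $x = 1/6$, $y = 2/3$, $\sqrt{xy} = 1/3$, so $Y$ specializes exactly to the matrix displayed in the proof of Lemma~\ref{lem:choi_eigs}, as you predicted. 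But your anticipated main obstacle---verifying $Y \geq 0$ globally over a two-dimensional parameter region---evaporates: rank-one blocks with $x, y \geq 0$ are positive semidefinite pointwise in $(b,c)$, with nothing delicate to check. The choice of $x,y$ is rigged so that $b + 2x = c + 2y = 2-b-c-(2\sqrt{xy}-1)$, whence $(id_3 \otimes T)(Y) + J(\Phi_{b,c}^\dagger) = \tfrac{1}{2}\big((b+2x)I + 3(2\sqrt{xy}-1)\ketbra{\psi^+}{\psi^+}\big)$; since $2\sqrt{xy} \leq 1$ on the Case~2 region, the maximal eigenvalue is $b/2 + x$, which is exactly the stated rational function. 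So the deficiency is not the route but its termination: as written, your argument stops one step short of the constructions that prove the lemma.
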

\begin{proof}
	We can get an upper bound of the eigenvalues of $(id_3 \otimes \Phi_{b,c}^\dagger)(\ketbra{v}{v})$ in a manner similar to that which was used in the proof of Lemma~\ref{lem:choi_eigs}. In particular, it suffices to give a feasible point for the dual program of the SDP~\eqref{sdp:eig_ub}, with $\Phi_C$ replaced by $\Phi_{b,c}$.
	
	To this end, we start by considering the first case (i.e., the case where $2b + c \geq 3$ or $b + 2c \geq 3$). It suffices to take $Y = 0$ in the dual program of the SDP~\eqref{sdp:eig_ub}. Then we can write $J(\Phi_{b,c}^\dagger)$ in the standard basis of $\mathbb{C}^3 \otimes \mathbb{C}^3$ as follows:
	\begin{align*}
		\frac{1}{2}\begin{bmatrix}
			2-b-c &  \cdot &  \cdot &  \cdot &  -1 &  \cdot &  \cdot &  \cdot & -1 \\
			\cdot &  b &  \cdot &  \cdot &  \cdot &  \cdot &  \cdot &  \cdot &  \cdot \\
	 		\cdot &  \cdot & c &  \cdot &  \cdot &  \cdot &  \cdot &  \cdot &  \cdot \\
	 		\cdot &  \cdot &  \cdot & c &  \cdot &  \cdot &  \cdot &  \cdot &  \cdot \\
	 		-1 &  \cdot &  \cdot &  \cdot &  2-b-c &  \cdot &  \cdot &  \cdot &  -1 \\
	 		\cdot &  \cdot &  \cdot &  \cdot &  \cdot &  b &  \cdot &  \cdot &  \cdot \\
	 		\cdot &  \cdot &  \cdot &  \cdot &  \cdot &  \cdot &  b &  \cdot &  \cdot \\
			\cdot &  \cdot &  \cdot &  \cdot &  \cdot &  \cdot &  \cdot & c &  \cdot \\
	 		-1 &  \cdot &  \cdot &  \cdot &  -1 &  \cdot &  \cdot &  \cdot &  2-b-c
		\end{bmatrix},
	\end{align*}
	from which it follows that $\lambda_{\textup{max}}\big(J(\Phi_{C}^\dagger)\big) = \max\{b,c,3-b-c\}/2$. Since $2b + c \geq 3$ or $b + 2c \geq 3$ it follows that $\max\{b,c,3-b-c\}/2 = \max\{b,c\}/2$, which shows that the SDP~\eqref{sdp:eig_ub} has optimal value no larger than $\max\{b,c\}/2$ and competes the proof of this case.
	
	We now consider the other case (i.e., we assume that $2b + c < 3$ and $b + 2c < 3$). Define the two quantities $x := (3 - 2b - c)^2/(6(2-b-c))$ and $y := (3 - b - 2c)^2/(6(2-b-c))$. Then consider the following operator, written with respect to the standard basis of $\mathbb{C}^3 \otimes \mathbb{C}^3$:
	\begin{align*}
		Y & = \begin{bmatrix}
			\cdot &  \cdot &  \cdot &  \cdot &  \cdot &  \cdot &  \cdot &  \cdot &  \cdot \\
			\cdot &  x &  \cdot &  \sqrt{xy} &  \cdot &  \cdot &  \cdot &  \cdot &  \cdot \\
	 		\cdot &  \cdot & y &  \cdot &  \cdot &  \cdot &  \sqrt{xy} &  \cdot &  \cdot \\
	 		\cdot &  \sqrt{xy} &  \cdot & y &  \cdot &  \cdot &  \cdot &  \cdot &  \cdot \\
	 		\cdot &  \cdot &  \cdot &  \cdot &  \cdot &  \cdot &  \cdot &  \cdot &  \cdot \\
	 		\cdot &  \cdot &  \cdot &  \cdot &  \cdot &  x &  \cdot &  \sqrt{xy} &  \cdot \\
	 		\cdot &  \cdot &  \sqrt{xy} &  \cdot &  \cdot &  \cdot &  x &  \cdot &  \cdot \\
			\cdot &  \cdot &  \cdot &  \cdot &  \cdot &  \sqrt{xy} &  \cdot & y &  \cdot \\
	 		\cdot &  \cdot &  \cdot &  \cdot &  \cdot &  \cdot &  \cdot &  \cdot &  \cdot
		\end{bmatrix}.
	\end{align*}
	It is straightforward to see that $Y \geq 0$ and is thus a feasible point of the SDP~\eqref{sdp:eig_ub}. To see what the corresponding value of the objective function is, we compute
	\begin{align*}
		&(id \otimes T)(Y) + J(\Phi_{b,c}^\dagger) = \\ 
		&\frac{1}{2}\begin{bmatrix}
			2-b-c &  \cdot &  \cdot &  \cdot &  2\sqrt{xy}-1 &  \cdot &  \cdot &  \cdot & 2\sqrt{xy}-1 \\
			\cdot &  b+2x &  \cdot &  \cdot &  \cdot &  \cdot &  \cdot &  \cdot &  \cdot \\
	 		\cdot &  \cdot & c+2y &  \cdot &  \cdot &  \cdot &  \cdot &  \cdot &  \cdot \\
	 		\cdot &  \cdot &  \cdot & c+2y &  \cdot &  \cdot &  \cdot &  \cdot &  \cdot \\
	 		2\sqrt{xy}-1 &  \cdot &  \cdot &  \cdot &  2-b-c &  \cdot &  \cdot &  \cdot & 2\sqrt{xy}-1 \\
	 		\cdot &  \cdot &  \cdot &  \cdot &  \cdot &  b+2x &  \cdot &  \cdot &  \cdot \\
	 		\cdot &  \cdot &  \cdot &  \cdot &  \cdot &  \cdot &  b+2x &  \cdot &  \cdot \\
			\cdot &  \cdot &  \cdot &  \cdot &  \cdot &  \cdot &  \cdot & c+2y &  \cdot \\
	 		2\sqrt{xy}-1 &  \cdot &  \cdot &  \cdot &  2\sqrt{xy}-1 &  \cdot &  \cdot &  \cdot & 2-b-c
		\end{bmatrix}.
	\end{align*}
	It is straightforward to verify that $b+2x = c+2y = 2-b-c - (2\sqrt{xy}-1)$ for the given choice of $x$ and $y$, from which it follows that
	\begin{align*}
		& \quad \ (id \otimes T)(Y) + J(\Phi_{b,c}^\dagger) \\
		& = \frac{1}{2}\Big((b+2x)I + 3(2\sqrt{xy}-1)\ketbra{\psi^+}{\psi^+}\Big).
	\end{align*}
	Since $2\sqrt{xy} \leq 1$ whenever $b$ and $c$ satisfy the constraints of this case, we see that the maximum eigenvalue of $(id \otimes T)(Y) + J(\Phi_{b,c}^\dagger)$ equals $b/2+x = \frac{b^2+c^2-6(b+c)+bc+9}{6(2-b-c)}$, so this quantity is an upper bound on the optimal value of the SDP~\eqref{sdp:eig_ub}, as desired.
\end{proof}

Now that we have these upper bounds of Lemma~\ref{lem:choi_eigs_gen_ub1} to work with, we are finally in a position to prove Theorem~\ref{thm:gen_choi_from_spec}.
\begin{proof}[Proof of Theorem~\ref{thm:gen_choi_from_spec}]
	We first note that it suffices to prove Theorem~\ref{thm:gen_choi_from_spec} for the four maps $\Phi_{b,c}$ given by the points $(a)$--$(d)$ that it describes, since the result then immediately follows for any convex combination of those maps. We consider these four maps now, one at a time.
	
	{\bf Case $(a)$:} $(b,c) = (0,0)$. It is straightforward to check that $\Phi_{0,0}$ is completely positive, so the result is trivial.
	
	For each of the remaining cases, we follow the notation of Lemma~\ref{lem:eigs_imp_sfs} and use $\ell$ to denote the sum of the negative eigenvalues of $(id_3 \otimes \Phi_{b,c}^\dagger)(\ketbra{v}{v})$, we use $\mu_1$ to denote its maximal eigenvalue, and we use $f$ to denote the function described by Lemma~\ref{lem:eigs_imp_sfs}. Furthermore, we note that $(id_3 \otimes \Phi_{b,c}^\dagger)(\ketbra{v}{v})$ has at most one negative eigenvalue whenever $b+c \leq 3$, so any lower bound on the eigenvalues of $(id_3 \otimes \Phi_{b,c}^\dagger)(\ketbra{v}{v})$ immediately applies to $\ell$ as well.
	
	{\bf Case $(b)$:} $(b,c) = (0, 3(\sqrt{2}-1))$. We know from Lemma~\ref{lem:choi_eigs_gen} that $\ell \geq -\tfrac{1}{6}(b+c) = \tfrac{1}{2}(1-\sqrt{2})$. Furthermore, plugging this choice of $b$ and $c$ into Lemma~\ref{lem:choi_eigs_gen_ub1} shows that $\mu_1 \leq \frac{1}{7}(9 - 3\sqrt{2})$. Since $\frac{1}{7}(9 - 3\sqrt{2}) \approx 0.6796\ldots \leq f(\tfrac{1}{2}(1-\sqrt{2})) = \tfrac{1}{4}(2+\sqrt{2}) \approx 0.8535\ldots$, the result follows from Lemma~\ref{lem:eigs_imp_sfs}.
	
	{\bf Case $(c)$:} $(b,c) = (6/5, 6/5)$. We know from Lemma~\ref{lem:choi_eigs_gen} that $\ell \geq -\tfrac{1}{6}(b+c) = -\tfrac{2}{5}$. Furthermore, plugging this choice of $b$ and $c$ into Lemma~\ref{lem:choi_eigs_gen_ub1} shows that $\mu_1 \leq \frac{3}{5}$. Since $\frac{3}{5} \leq f(-\tfrac{2}{5}) = \frac{3}{5}$, the result follows from Lemma~\ref{lem:eigs_imp_sfs}.
		
	{\bf Case $(d)$:} $(b,c) = (3(\sqrt{2}-1), 0)$. Almost identical to case~$(b)$.
\end{proof}
\end{document}